\pgfplotsset{compat=1.10}
\newtheorem{theorem}{Theorem}
\newtheorem{lemma}[theorem]{Lemma}
\newtheorem{claim}[theorem]{Claim}
\theoremstyle{definition}
\newtheorem{example}{Example}
\newcommand{\remove}[1]{}
\newcommand{\R}{\mathbb{R}}
\newcommand{\EE}{\mathbb{E}}
\newcommand{\bigO}{\mathcal{O}}
\newcommand{\one}[1]{\mathbbm{1}\{#1\}}
\newcommand{\union}{\cup}
\newcommand{\ttt}{\mathbf{t}}
\newcommand{\sss}{\mathbf{s}}
\newcommand{\zz}{\mathbf{z}}
\DeclareMathOperator{\cost}{cost}
\DeclareMathOperator{\weight}{weight}
\newcommand{\PNE}{\text{PNE}}
\newcommand{\MNE}{\text{MNE}}
\DeclareMathOperator{\SC}{\text{SC}}
\newcommand{\PoA}{\text{PoA}}
\newcommand{\PoS}{\text{PoS}}
\newcommand{\mPoA}{\text{MPoA}}
\newcommand{\mPoS}{\text{MPoS}}
\DeclareMathOperator{\excess}{excess}
\begin{document}

\title{Bounding the inefficiency of compromise  in opinion formation\thanks{A preliminary version of this paper entitled ``Bounding the inefficiency of compromise'' appeared in {\em Proceedings of the 26th International Joint Conference on Artificial Intelligence (IJCAI)}, pages 142-148, 2017. This work was partially supported by COST Action CA 16228 ``European Network for Game Theory'' and by a PhD scholarship from the Onassis Foundation.}}

\author{\name Ioannis Caragiannis \email \normalfont caragian@ceid.upatras.gr \\
       \addr University of Patras
       \AND
       \name Panagiotis Kanellopoulos \email \normalfont kanellop@ceid.upatras.gr \\
       \addr University of Patras \& CTI ``Diophantus''
       \AND
       \name Alexandros A. Voudouris \email \normalfont voudouris@ceid.upatras.gr \\
       \addr University of Patras}


\maketitle

\begin{abstract}
Social networks on the Internet have seen an enormous growth recently and play a crucial role in different aspects of today's life. They have facilitated information dissemination in ways that have been beneficial for their users but they are often used strategically in order to spread information that only serves the objectives of particular users. These properties have inspired a revision of classical opinion formation models from sociology using game-theoretic notions and tools. We follow the same modeling approach, focusing on scenarios where the opinion expressed by each user is a compromise between her internal belief and the opinions of a small number of neighbors among her social acquaintances. We formulate simple games that capture this behavior and quantify the inefficiency of equilibria using the well-known notion of the price of anarchy. Our results indicate that compromise comes at a cost that strongly depends on the neighborhood size.
\end{abstract}

\section{Introduction}\label{sec:intro}
{\em Opinion formation} has been the subject of much research in sociology, economics, physics, and epidemiology. Due to the widespread adoption of the Internet and the subsequent blossoming of social networks, it has recently attracted the interest of researchers in AI (e.g., see \cite{ACF+16,SIB+15,TL14}) and CS at large (e.g., see \cite{BKO15,MT14,OT09}) as well.

An influential model that captures the adoption of opinions in a social context has been proposed by \cite{FJ90}. According to this, each individual has an internal belief on an issue and publicly expresses a (possibly different) opinion; internal beliefs and public opinions are modeled as real numbers. In particular, the opinion that an individual expresses follows by averaging between her internal belief and the opinions expressed by her social acquaintances. Recently, \cite{BKO15} show that this behavior can be explained through a game-theoretic lens: averaging between the internal belief of an individual and the opinions in her social circle is simply a {\em strategy} that minimizes an implicit cost for the individual.

\citet{BKO15} use a quadratic function to define this cost. Specifically, this function is equal to the total squared distance of the opinion that the individual expresses from her belief and the opinions expressed in her social circle. In a sense, this behavior leads to opinions that follow the majority of her social acquaintances. \cite{BKO15} consider a static snapshot of the social network. In contrast, \citet{BGM13} implicitly assume that the opinion of an individual depends on a small number of acquaintances only, her {\em neighbors}. So, in their model, opinion formation {\em co-evolves} with the neighborhood for each individual: her neighborhood consists of those who have opinions similar to her belief. Then, the opinion expressed is assumed to minimize the same cost function used by \citet{BKO15}, taking into account the neighborhood instead of the whole social circle.

We follow the co-evolutionary model of \citet{BGM13}, but we deviate from their cost definition and instead consider individuals that seek to {\em compromise} with their neighbors. Hence, we assume that each individual aims to minimize the {\em maximum} distance of her expressed opinion from her belief and each of her neighbors' opinion. Like \citet{BGM13}, we assume that opinion formation co-evolves with the social network. Each individual's neighborhood consists of the $k$ other individuals with the closest opinions to her belief. Naturally, these modeling decisions lead to the definition of strategic games, which we call $k$-{\em compromising opinion formation} (or, simply, $k$-COF) games. Each individual is a (cost-minimizing) player with the opinion expressed as her strategy.

\subsection{Technical contribution}
We study questions related to the existence, computational complexity, and quality of equilibria in $k$-COF games. We begin by proving several properties about the geometric structure of opinions and beliefs at pure Nash equilibria, i.e., in states of the game where each player minimizes her individual cost assuming that the remaining players will not change their opinions, and, thus, has no incentive to deviate by expressing a different opinion.

Using these structural properties we show that there exist simple $k$-COF games that do not admit pure Nash equilibria. Furthermore, we prove that even in games where equilibria do exist, their quality may be suboptimal in terms of the {\em social cost}, i.e., the total cost experienced by all players. To quantify this inefficiency, we show that the optimistic measure known as the {\em price of stability} (introduced by \citet{adktwr09SIAM}) which, informally, is defined as the ratio of the minimum social cost achieved at any pure Nash equilibrium to the minimum social cost at any possible state of the game, grows linearly with $k$.

For the special case of $1$-COF games, we show that each such game admits a representation as a directed acyclic graph, in which every pure Nash equilibrium corresponds to a path between two designated nodes. Hence, the problems of computing the best or worst (in terms of the social cost) pure Nash equilibrium (or even of computing whether such an equilibrium exists) are equivalent to simple path computations that can be performed in polynomial time.

For general $k$-COF games, we quantify the inefficiency of the worst-case pure Nash equilibria by bounding the pessimistic measure known as the {\em price of anarchy} (introduced by \citet{KP99}) which, informally, is defined as the ratio of the maximum social cost achieved at any pure Nash equilibrium to  the minimum possible social cost at any state of the game. Specifically, we present upper and lower bounds on the price of anarchy of $k$-COF games (with respect to both pure and mixed Nash equilibria) that suggest a linear dependence on $k$. Our upper bound on the price of anarchy exploits, in a non-trivial way, linear programming duality in order to lower-bound the optimal social cost. For the fundamental case of $1$-COF games, we obtain a tight bound of $3$ using a particular charging scheme in the analysis. Our contribution is summarized in Table~\ref{tab:poa-results}.

\begin{table}[h!]
\centering
\small \begin{tabular}{c c c c  p{6cm}}
\noalign{\hrule height 1pt}\hline
$k$ 	& \PoA & \mPoA & \PoS 		& Existence/Complexity \\
\noalign{\hrule height 1pt}\hline
$1$		&	\multicolumn{1}{l}{$3$ (Thms \ref{thm:poa-upper-1}, \ref{thm:pure-lower-1})} & $\geq 6$ & $\geq 17/15$ & PNE may not exist for any $k$ (Thm. \ref{thm:no-pure})\\ 	
& & (Thm. \ref{thm:mixed-lower-1})& (Thm. \ref{thm:pos}) &Best and worst PNE is in {\cal P} (Thm. \ref{thm:complexity})\\
& & & & \\
$2$		&	\multicolumn{1}{l}{$\leq 12$ (Thm. \ref{thm:ppoa-upper})}	&	$\geq 24/5$	& $\geq 8/7$	& Open question: Is computing a PNE\\ 	
& \multicolumn{1}{l}{$\geq 18/5$ (Thm. \ref{thm:pure-lower-k})} & (Thm. \ref{thm:mixed-lower-k})& (Thm. \ref{thm:pos-2}) & in {\cal P} when $k\geq 2$?\\
& & & & \\
$\geq 3$	&	\multicolumn{1}{l}{$\leq 4(k+1)$ (Thm. \ref{thm:ppoa-upper})}	&	$\geq k+2$	& $\geq \frac{k+1}{3}$ &  \\ 				
& \multicolumn{1}{l}{$\geq k+1$ (Thm. \ref{thm:pure-lower-k})}& (Thm. \ref{thm:mixed-lower-k})& (Thm. \ref{thm:pos-k}) &\\
\noalign{\hrule height 1pt}\hline
\end{tabular}
\caption{Summary of our results for $k$-COF games. The table presents our bounds on the price of anarchy over pure Nash equilibria (\PoA) and mixed Nash equilibria (\mPoA), on the price of stability (\PoS) as well as the existence and complexity of pure Nash equilibria (PNE). Clearly, any upper bound on \PoA{} is also an upper bound on \PoS.}
\label{tab:poa-results}
\end{table}

\subsection{Related work}
\citet{D74} proposed a framework that models the opinion formation process, where each individual updates her opinion based on a weighted averaging procedure. Subsequently, \citet{FJ90} refined the model by assuming that each individual has a private belief and expresses a (possibly different) public opinion that depends on her belief and the opinions of people to whom she has social ties. More recently, \citet{BKO15} studied this model and proved that, for the setting where beliefs and opinions are in $[0,1]$, the repeated averaging process leads to an opinion vector that can be thought of as the unique equilibrium in a corresponding opinion formation game.

Deviating from the assumption that opinions depend on the whole social circle, \citet{BGM13} consider co-evolutionary opinion formation games, where as opinions evolve so does the neighborhood of each person. This model is conceptually similar to previous ones that have been studied by \citet{HK02}, and \citet{HN06}. Both \citet{BKO15} and \citet{BGM13} show constant bounds on the price of anarchy of the games that they study. In contrast, the modified cost function we use in order to model compromise yields considerably higher price of anarchy.

A series of recent papers from the EconCS community consider discrete models with binary opinions.  \citet{CKO18} consider discrete preference games, where beliefs and opinions are binary and study questions related to the price of stability. For these games, \citet{ACFGP15,ACFGP17} characterize the social networks where the belief of the minority can emerge as the opinion of the majority, while \cite{ACFG17} examine the robustness of such results to variants of the model. \citet{ACF+16} generalize discrete preference games so that players are not only interested in agreeing with their neighbors and more complex constraints can be used to represent the players' preferences. \citet{BFM16} extend co-evolutionary formation games to the discrete setting. Other models assume that opinion updates depend on the entire social circle of each individual, who consults a small random subset of social acquaintances; see the recent paper by \citet{FPS16} and the survey of \citet{MT14}.

When there are more than one issues to be discussed, \cite{JMFB15} propose and analyze the DeGroot-Friedkin model for the evolution of an influence network between individuals who form opinions on a sequence of issues, while \cite{XLB15} introduce a modification to the DeGroot-Friedkin model so that each individual may recalculate the weight given to her opinion, i.e., her self-confidence, after the discussion of each issue.

Another line of research considers how fast a system converges to a stable state. In this spirit, \cite{EB15} consider the dynamics of the Hegselmann-Krause model \citeyearpar{HK02}, where opinions and neighborhoods co-evolve, and study the termination time in finite dimensions under different settings. Similarly, \cite{FGV16} study the speed of convergence of decentralized dynamics in finite opinion games, where players have only a finite number of opinions available. \cite{FV17} consider the role of social pressure towards consensus in opinion games and provide tight bounds on the speed of convergence for the important special case where the social network is a clique.

\cite{DGM14} perform a set of online user studies and argue that widely studied theoretical models do not completely explain the experimental results obtained. Hence, they introduce an analytical model for opinion formation and present preliminary theoretical and simulation results on the convergence and structure of opinions when users iteratively update their respective opinions according to the new model.

\cite{Cha12} analyzes influence systems, where each individual observes the location of her neighbors and moves accordingly, and presents an algorithmic calculus for studying such systems.  \cite{KKOS16} present a novel model of cultural dynamics and study the interplay between selection and influence. Their results include an almost complete characterization of stable outcomes and guaranteed convergence from all starting states. \cite{GLK12} consider network diffusion and contagion propagation. Their goal is to infer an unknown network over which contagion propagated, tracing paths of diffusion and influence. Finally, \cite{KKT15} study the optimization problem for influence maximization in a social networks, where each individual may decide to adopt an idea or an innovation depending on how many of her neighbors already do. The goal is to select an initial seed set of early adopters so that the number of adopters is maximized.

In spite of the extensive related literature on opinion formation in many different disciplines, our model introduces a novel cost function that, we believe, captures the tendency of individuals to compromise more accurately.

\subsection{Roadmap}
We begin with preliminary definitions and notation in Section~\ref{sec:preliminaries}. Then, in Section~\ref{sec:properties} we present several structural properties of pure Nash equilibria, while Section~\ref{sec:existence} is devoted to the existence and the price of stability of these equilibria. In Section~\ref{sec:algorithm}, we present an algorithm that determines whether pure Nash equilibria exist in a $1$-COF game, and, in addition, computes the best and worst such equilibria, when they exist. In Sections~\ref{sec:upper} and~\ref{sec:upper-1} we prove upper bounds on the price of anarchy of $k$-COF and $1$-COF games, respectively, while Section~\ref{sec:lower} contains our lower bounds. We conclude in Section~\ref{sec:discussion} with a discussion of open problems and possible extensions of our work.

\section{Definitions and notation}\label{sec:preliminaries}
A compromising opinion formation game defined by the $k$ nearest neighbors (henceforth, called $k$-COF game) is played by a set of $n$ players whose beliefs lie on the line of real numbers. Let $\sss=(s_1,s_2,\dots,s_n) \in \R^n$ be the vector containing the players' beliefs such that $s_i \leq s_{i+1}$ for each $i \in [n-1]$. Let $\zz=(z_1,z_2,\dots,z_n) \in \R^n$ be a vector containing the (deterministic or randomized) opinions expressed by the players; these opinions define a state of the game. We denote by $\zz_{-i}$ the opinion vector obtained by removing $z_i$ from $\zz$. In an attempt to simplify notation, we omit $k$ from all relevant definitions.

Given vector $\zz$ (or a realization of it in case $\zz$ contains randomized opinions), we define the neighborhood $N_i(\zz, \sss)$ of player $i$ to be the set of $k$ players whose opinions are the closest to the belief of player $i$ breaking ties arbitrarily (but consistently). For each player $i$, we define $I_i(\zz, \sss)$ as the shortest interval of the real line that includes the following points: the belief $s_i$, the opinion $z_i$, and the opinion $z_j$ for each player $j \in N_i(\zz, \sss)$. Furthermore, let $\ell_i(\zz, \sss)$ and $r_i(\zz, \sss)$ be the players with the leftmost and rightmost point in $I_i(\zz, \sss)$, respectively. For example, $\ell_i(\zz,\sss)$ can be equal to either player $i$ or some player $j\in N_i(\zz,\sss)$, depending on whether the leftmost point of $I_i(\zz,\sss)$ is $s_i$, $z_i$, or $z_j$. To further simplify notation, we will frequently use $\ell(i)$ and $r(i)$ instead of $\ell_i(\zz,\sss)$ and $r_i(\zz, \sss)$ when $\zz$ and $\sss$ are clear from the context. In the following, we present the relevant definitions for the case of possibly randomized opinion vectors; clearly, these can be simplified whenever $\zz$ consists entirely of deterministic opinions.

Given a $k$-COF game with belief vector $\sss$, the cost that player $i$ experiences at the state of the game defined by an opinion vector $\zz$ is
\begin{align}\label{eq:costdefinition}\nonumber
\EE[\cost_i(\zz, \sss)] &= \EE\left[\max_{j \in N_i(\zz,\sss)}\bigg\{ |z_i-s_i|, |z_j-z_i| \bigg\}\right] \\
&= \EE\left[\max\bigg\{ |z_i-s_i|, |z_{r_i(\zz,\sss)}-z_i|, |z_i-z_{\ell_i(\zz,\sss)}| \bigg\}\right].
\end{align}
For the special case of $1$-COF games, we denote by $\sigma_i(\zz,\sss)$ (or $\sigma(i)$ when $\zz$ and $\sss$ are clear from the context) the player (other than $i$) whose opinion is closest to the belief $s_i$ of player $i$; notice that $\sigma(i)$ is the only member of $N_i(\zz,\sss)$. In this case, the cost of player $i$ can be simplified as
\begin{align}\label{eq:costdefinition-1}
\EE[\cost_i(\zz, \sss)] = \EE\left[\max\bigg\{ |z_i-s_i|, |z_{\sigma_i(\zz,\sss)}-z_i| \bigg\}\right].
\end{align}

We say that an opinion vector $\zz$ consisting entirely of deterministic opinions is a {\em pure Nash equilibrium} if no player $i$ has an incentive to unilaterally deviate to a deterministic opinion $z_i'$ in order to decrease her cost, i.e.,
$$\cost_i(\zz, \sss) \leq \cost_i((z_i',\zz_{-i}), \sss),$$
where by $(z_i',\zz_{-i})$ we denote the opinion vector in which player $i$ chooses the opinion $z_i'$ and all other players choose the opinions they have according to vector $\zz$. Similarly, a possibly randomized opinion vector $\zz$ is a {\em mixed Nash equilibrium} if for any player $i$ and any deviating deterministic opinion $z_i'$ we have
$$\EE[\cost_i(\zz, \sss)] \leq \EE_{\zz_{-i}}[\cost_i((z_i',\zz_{-i}), \sss)].$$
Let $\PNE(\sss)$ and $\MNE(\sss)$ denote the sets of pure and mixed Nash equilibria, respectively, of the $k$-COF game with belief vector $\sss$.

The {\em social cost} of an opinion vector $\zz$ is the total cost experienced by all players, i.e.,
$$\EE[\SC(\zz, \sss)] = \sum_{i=1}^n{ \EE[\cost_i(\zz, \sss)] }.$$
Let $\zz^*(\sss)$ be a deterministic opinion vector that minimizes the social cost for the given $k$-COF game with belief vector $\sss$; we will refer to it as an {\em optimal} opinion vector for $\sss$.

The {\em price of anarchy} (PoA) over pure Nash equilibria of a particular $k$-COF game with belief vector $\sss$ is defined as the ratio between the social cost of its {\em worst} (in terms of the social cost) pure Nash equilibrium and the optimal social cost, i.e.,
$$\PoA(\sss) = \sup_{\zz \in \PNE(\sss)} \frac{\SC(\zz, \sss)}{\SC(\zz^*(\sss), \sss)}.$$
The {\em price of stability} (PoS) over pure Nash equilibria of the $k$-COF game with belief vector $\sss$ is defined as the ratio between the social cost of the {\em best} pure Nash equilibrium (in terms of social cost) and the optimal social cost, i.e.,
$$\PoS(\sss) = \inf_{\zz \in \PNE(\sss)} \frac{\SC(\zz, \sss)}{\SC(\zz^*(\sss), \sss)}.$$

Similarly, the price of anarchy and the price of stability over mixed Nash equilibria of a $k$-COF game with belief vector $\sss$ are defined as
$$\mPoA(\sss) = \sup_{\zz \in \MNE(\sss)} \frac{\EE[\SC(\zz, \sss)]}{\SC(\zz^*(\sss), \sss)}$$
and
$$\mPoS(\sss) = \inf_{\zz \in \MNE(\sss)} \frac{\EE[\SC(\zz, \sss)]}{\SC(\zz^*(\sss), \sss)},$$
respectively.

Then, the price of anarchy and the price of stability of $k$-COF games, for a fixed $k$, are defined as the supremum of $\PoA(\sss)$ and $\PoS(\sss)$ over all belief vectors $\sss$, respectively.

We conclude this section with an example.

\begin{example}\label{ex:prelim-example}
Consider the $1$-COF game with three players and belief vector $\sss = (-10,2,5)$ which is depicted in Figure~\ref{fig:prelim-example}(a). For simplicity, we will refer to the players as left ($\ell$), middle ($m$), and right ($r$).

Let us examine the opinion vector $\zz = (-10,-5,4)$ which is depicted in Figure~\ref{fig:prelim-example}(b). We have that $\sigma(\ell) = m$ since the opinion $z_m=-5$ of the middle player is closer  to the belief $s_\ell=-10$ of the left player than the opinion $z_r=4$ of the right player. Therefore, the cost of the left player is $\cost_\ell(\zz,\sss) = \max\{|-10+10|,|-10+5|\} = 5$. Similarly, the neighbors of the middle and right players are $\sigma(m) = r$ and $\sigma(r) = m$, while their costs are $\cost_m(\zz,\sss) = \max\{2+5,4+5\} = 9$ and $\cost_r(\zz,\sss) = \max\{5-4,4+5\} = 9$, respectively. The social cost is $\SC(\zz,\sss) = 23$.

Now, consider the alternative pure Nash equilibrium opinion vector $\zz' = (-3.5,3,4)$ which is depicted in Figure~\ref{fig:prelim-example}(c). Observe that even though $\zz' \neq \zz$, each player has the same neighbor as in $\zz$ and no player has an incentive to deviate in order to decrease her cost. Indeed, let us focus on the middle player for whom it is $\sigma(m) = r$. Her opinion is in the middle of the interval defined by her belief $s_m = 3$ and the opinion $z_r' = 5$ of the right player. Hence, this opinion minimizes her cost by minimizing the maximum between the distance from her belief and the distance from the opinion of the right player. It is easy to verify that the same holds for the left and right players. The player costs are now $6.5$, $1$, and $1$, respectively, yielding a social cost of $8.5$.
\end{example}

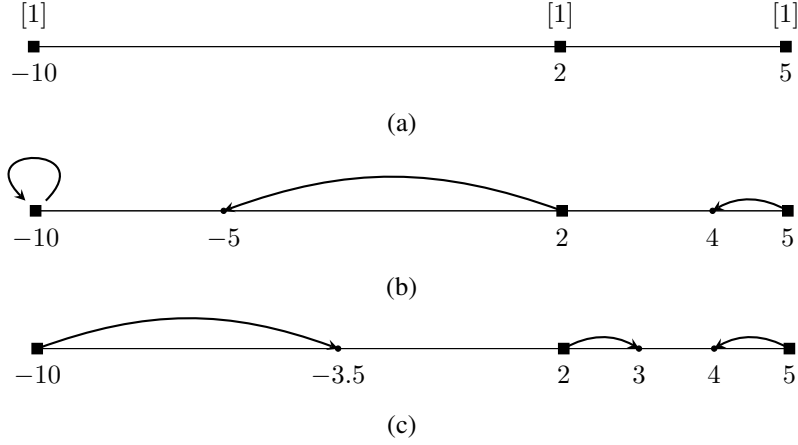
\begin{figure}[h]
\centering
\begin{subfigure}{\textwidth}
\centering
\begin{tikzpicture}[scale = 1]
  \filldraw (0,0) -- (10,0);

  \filldraw (0,-0.1) node[align=center, below] {\small $-10$};
  \filldraw (0,0.1) node[align=center, above] {\small $[1]$};

  \filldraw (7,-0.1) node[align=center, below] {\small $2$};
  \filldraw (7,0.1) node[align=center, above] {\small $[1]$};

  \filldraw (10,-0.1) node[align=center, below] {\small $5$};
  \filldraw (10,0.1) node[align=center, above] {\small $[1]$};

  \filldraw ([xshift=-2pt,yshift=-2pt]0,0) rectangle ++(4pt,4pt);
  \filldraw ([xshift=-2pt,yshift=-2pt]7,0) rectangle ++(4pt,4pt);
  \filldraw ([xshift=-2pt,yshift=-2pt]10,0) rectangle ++(4pt,4pt);
\end{tikzpicture}
\caption{ \ }
\end{subfigure}
\begin{subfigure}{\textwidth}
\centering
\hspace*{-0.5cm}\begin{tikzpicture}[scale = 1]
  \filldraw (0,0) -- (10,0);

  \filldraw (0,-0.1) node[align=center, below] {\small $-10$};

  \filldraw (7,-0.1) node[align=center, below] {\small $2$};

  \filldraw (10,-0.1) node[align=center, below] {\small $5$};

  \filldraw ([xshift=-2pt,yshift=-2pt]0,0) rectangle ++(4pt,4pt);
  \filldraw ([xshift=-2pt,yshift=-2pt]7,0) rectangle ++(4pt,4pt);
  \filldraw ([xshift=-2pt,yshift=-2pt]10,0) rectangle ++(4pt,4pt);

  \filldraw (2.5,0) circle(1pt);
  \filldraw (2.5,-0.1) node[align=center,below] {\small $-5$};

  \filldraw (9,0) circle(1pt);
  \filldraw (9,-0.1) node[align=center,below] {\small $4$};
	
  \node at (0,0) (zero){};

  \path[-stealth,thick] (zero) edge[out=45,in=135,looseness=10,relative=false] (zero);
  \draw[-stealth,thick] (7,0) to[bend right=20] (2.5,0);
  \draw[-stealth,thick] (10,0) to[bend right=30] (9,0);
\end{tikzpicture}
\caption{ \ }
\end{subfigure}
\begin{subfigure}{\textwidth}
\centering
\begin{tikzpicture}[scale = 1]
  \filldraw (0,0) -- (10,0);

  \filldraw (0,-0.1) node[align=center, below] {\small $-10$};

  \filldraw (7,-0.1) node[align=center, below] {\small $2$};

  \filldraw (10,-0.1) node[align=center, below] {\small $5$};

  \filldraw ([xshift=-2pt,yshift=-2pt]0,0) rectangle ++(4pt,4pt);
  \filldraw ([xshift=-2pt,yshift=-2pt]7,0) rectangle ++(4pt,4pt);
  \filldraw ([xshift=-2pt,yshift=-2pt]10,0) rectangle ++(4pt,4pt);

  \filldraw (4,0) circle(1pt);
  \filldraw (4,-0.1) node[align=center,below] {\small $-3.5$};

  \filldraw (8,0) circle(1pt);
  \filldraw (8,-0.1) node[align=center,below] {\small $3$};

  \filldraw (9,0) circle(1pt);
  \filldraw (9,-0.1) node[align=center,below] {\small $4$};
	
  \draw[-stealth,thick] (0,0) to[bend left=20] (4,0);
  \draw[-stealth,thick] (7,0) to[bend left=30] (8,0);
  \draw[-stealth,thick] (10,0) to[bend right=30] (9,0);

\end{tikzpicture}
\caption{ \ }
\end{subfigure}
\caption{The game examined in Example~\ref{ex:prelim-example}. (a) Illustration of the belief vector $\sss = (-10, 2, 5)$. The black squares correspond to player beliefs. The notation $[x]$ is used to denote the number of players that have the same beliefs; here we have only one player per belief. (b) Illustration of the opinion vector $\zz=(-10,-5,4)$. The dots correspond to player opinions and each arrow connects the belief of a player to her opinion. (c) Illustration of the equilibrium opinion vector $\zz'=(-3.5,3,4)$.}
\label{fig:prelim-example}
\end{figure}

\section{Some properties about equilibria}\label{sec:properties}
We devote this section to proving several some interesting properties of pure Nash equilibria; these will be useful in the following. The first one is obvious due to the definition of the cost function.

\begin{lemma}\label{lem:equilibrium}
In any pure Nash equilibrium $\zz$ of a $k$-COF game with belief vector $\sss$, the opinion of any player $i$ lies in the middle of the interval $I_i(\zz, \sss)$.
\end{lemma}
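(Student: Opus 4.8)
The plan is to fix an arbitrary player $i$ and study how her cost varies with her own opinion while every other player's opinion is held fixed. The crucial first observation is that the neighborhood $N_i(\zz,\sss)$ is a function of $\sss$ and $\zz_{-i}$ only: it consists of the $k$ players \emph{other than} $i$ whose opinions are closest to $s_i$, and the relevant distances $|z_j-s_i|$ for $j\neq i$, together with the fixed (consistent) tie-breaking rule, do not involve $z_i$. Consequently, when player $i$ deviates from $z_i$ to any $z_i'$, her neighborhood remains equal to $N_i(\zz,\sss)$, and her cost becomes a univariate function
\[
g(x) \;=\; \max\Bigl\{|x-s_i|,\ \max_{j\in N_i(\zz,\sss)}|x-z_j|\Bigr\}, \qquad x\in\R .
\]

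Next I would put $g$ into a form that makes its minimizer transparent. Let $P_i=\{s_i\}\cup\{z_j: j\in N_i(\zz,\sss)\}$, a finite nonempty set (it contains $s_i$) that does not depend on $z_i$, and set $L=\min P_i$ and $R=\max P_i$. Since the point of a finite set farthest from a given $x$ is always one of the set's two extreme points, $g(x)=\max\{|x-L|,|x-R|\}$ for every $x$. An elementary case analysis then shows that $g$ attains its minimum, equal to $(R-L)/2$, at the single point $x=(L+R)/2$: inside $[L,R]$ we have $g(x)=\max\{x-L,R-x\}\ge (R-L)/2$ with equality only at the midpoint, and outside $[L,R]$ already $g(x)\ge R-L$.

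Finally, I would invoke the equilibrium hypothesis. Since $\zz$ is a pure Nash equilibrium, player $i$ cannot profit by deviating, so $g(z_i)\le g(x)$ for all $x\in\R$; as $g$ has a unique minimizer, this forces $z_i=(L+R)/2$. In particular $z_i\in[L,R]$, so inserting $z_i$ into $P_i$ does not enlarge the interval, and hence the shortest interval $I_i(\zz,\sss)$ containing $P_i\cup\{z_i\}$ is exactly $[L,R]$, with $z_i$ sitting at its midpoint. The argument is uniform over all players, which gives the lemma. There is really only one spot that needs care — the opening remark that a player's own deviation leaves her neighborhood intact, which is exactly what collapses the cost to a one-dimensional function; once that is in hand the result is just the geometry of the maximum of two absolute-value functions, as the authors indicate.
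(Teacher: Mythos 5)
Your proof is correct and is precisely the ``obvious'' argument the paper gestures at when it states this lemma without proof: since $N_i(\zz,\sss)$ is determined by $s_i$ and $\zz_{-i}$ alone (with consistent tie-breaking), a unilateral deviation reduces the cost to the one-dimensional function $\max\{|x-L|,|x-R|\}$, whose unique minimizer is the midpoint. You correctly isolate the one point needing care (invariance of the neighborhood under the player's own deviation), so nothing is missing.
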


The next lemma allows us to argue about the order of player opinions in a pure Nash equilibrium $\zz$.

\begin{lemma}\label{lem:mon}
In any pure Nash equilibrium $\zz$ of a $k$-COF game with belief vector $\sss$, it holds that $z_i\leq z_{i+1}$ for any $i \in [n-1]$ such that $s_i < s_{i+1}$.
\end{lemma}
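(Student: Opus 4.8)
The plan is to argue by contradiction: assume $\zz$ is a pure Nash equilibrium with $s_i<s_{i+1}$ but $z_i>z_{i+1}$. By Lemma~\ref{lem:equilibrium}, every player $m$ has $z_m$ at the midpoint of $I_m(\zz,\sss)$, so writing $c_m=\cost_m(\zz,\sss)$ we have $I_m(\zz,\sss)=[z_m-c_m,\,z_m+c_m]$, and since $s_m\in I_m(\zz,\sss)$ also $|z_m-s_m|\le c_m$. We cannot have both $z_i\le s_i$ and $z_{i+1}\ge s_{i+1}$, since then $z_i\le s_i<s_{i+1}\le z_{i+1}$; hence at least one of $z_i>s_i$, $z_{i+1}<s_{i+1}$ holds, and by the symmetry $x\mapsto -x$ of the model (which maps pure Nash equilibria to pure Nash equilibria and swaps the two conditions $z_i>s_i$ and $z_{i+1}<s_{i+1}$) we may assume without loss of generality that $z_i>s_i$, so in particular $c_i>0$.

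The only property of neighbourhoods used below is that $N_i(\zz,\sss)$ is the set of the $k$ players other than $i$ whose opinions are closest to $s_i$. Because $z_i>s_i$ and $c_i>0$, the right endpoint $z_i+c_i$ of $I_i(\zz,\sss)$ lies strictly to the right of both $s_i$ and $z_i$; since $I_i(\zz,\sss)$ is the shortest interval containing $s_i$, $z_i$, and the opinions of $i$'s neighbours, and $z_i$ is interior to it, this endpoint must equal $z_r$ for some neighbour $r\in N_i(\zz,\sss)$. From $z_r=z_i+c_i>z_i>z_{i+1}$ we get $r\ne i+1$, and $r$ is the neighbour of $i$ farthest from $s_i$, at distance $R_i:=c_i+(z_i-s_i)$; consequently every player $j\notin N_i(\zz,\sss)\cup\{i\}$ satisfies $|z_j-s_i|\ge R_i$.

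Now split according to whether $i+1\in N_i(\zz,\sss)$ and whether $i\in N_{i+1}(\zz,\sss)$. If $i+1\notin N_i(\zz,\sss)$, then the last inequality with $j=i+1$ together with $z_{i+1}<z_i$ forces $z_{i+1}<s_i$ and $z_{i+1}\le 2s_i-z_i-c_i$; thus $z_{i+1}$ lies strictly left of $I_i(\zz,\sss)$, and since $s_{i+1}\in I_{i+1}(\zz,\sss)$ lies to the right of $z_{i+1}$ the cost $c_{i+1}$ is forced to be large. Repeating the ``farthest-neighbour'' reasoning for player $i+1$ (who now has $z_{i+1}<s_{i+1}$), with a further split on whether $i\in N_{i+1}(\zz,\sss)$, and combining the resulting distance inequalities with $z_i>z_{i+1}$ and $s_i<s_{i+1}$, yields a contradiction. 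If instead $i+1\in N_i(\zz,\sss)$, then $z_{i+1}\in I_i(\zz,\sss)$, hence $c_i\ge z_i-z_{i+1}$; looking at player $i+1$ and splitting on $i\in N_{i+1}(\zz,\sss)$: when $i\notin N_{i+1}(\zz,\sss)$ the farthest-neighbour inequality for $i+1$ applied to $j=i$ contradicts $z_i>z_{i+1}$, and when $i\in N_{i+1}(\zz,\sss)$ we have simultaneously $z_{i+1}\in I_i(\zz,\sss)$ and $z_i\in I_{i+1}(\zz,\sss)$, so both midpoint conditions are active and a contradiction follows by playing them against each other.

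I expect the real obstacle to be this last subcase, in which $i$ and $i+1$ lie in each other's neighbourhoods: the ``closest-opinions'' distance inequalities no longer suffice by themselves, and one must use both midpoint identities at once together with information about which player realizes the endpoint of $I_{i+1}(\zz,\sss)$ farthest from $s_{i+1}$, carefully tracking which of $s_i,s_{i+1}$ is an endpoint of which interval. As a sanity check, for $k=1$ the whole proof compresses to a short computation: $z_m$ is then the midpoint of $\{s_m,z_{\sigma_m(\zz,\sss)}\}$, so $|z_{\sigma_m(\zz,\sss)}-s_m|=2|z_m-s_m|$, and since $z_{i+1}$ and $z_i$ are admissible candidates for $\sigma_i(\zz,\sss)$ and $\sigma_{i+1}(\zz,\sss)$ respectively, $|z_{i+1}-s_i|\ge 2|z_i-s_i|$ and $|z_i-s_{i+1}|\ge 2|z_{i+1}-s_{i+1}|$; writing these out using $z_{i+1}<z_i$ gives $s_i\ge\frac{z_{i+1}+2z_i}{3}>\frac{2z_{i+1}+z_i}{3}\ge s_{i+1}$, contradicting $s_i<s_{i+1}$.
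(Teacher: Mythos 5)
Your overall plan---contradiction, the reflection WLOG, and locating the farthest neighbour of $i$ at the far endpoint of $I_i(\zz,\sss)$---is reasonable, and your closing computation for $k=1$ is a complete and correct proof of that special case. For general $k$, however, the case analysis does not close, and the gap is not confined to the mutual-membership subcase you flag as ``the real obstacle.'' The paper's proof derives its contradiction by a \emph{counting} argument: it identifies an extreme neighbour whose opinion realizes an endpoint of one player's interval lying outside the other player's interval, and then exhibits $k$ players (other than that neighbour and the owner of the neighbourhood) whose opinions are strictly closer to the relevant belief, contradicting $|N_{i+1}(\zz,\sss)|=k$. Your proposal never uses the cardinality of the neighbourhoods beyond the single farthest-neighbour inequality, and the pairwise midpoint/distance inequalities between players $i$ and $i+1$ alone are \emph{not} contradictory.

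Concretely, in your subcase $i+1\in N_i(\zz,\sss)$, $i\notin N_{i+1}(\zz,\sss)$, take $s_i=0$, $z_i=10$, $c_i=10$ (so $I_i(\zz,\sss)=[0,20]$), $s_{i+1}=1$, $z_{i+1}=5$, $c_{i+1}=4$ (so $I_{i+1}(\zz,\sss)=[1,9]$): then $z_{i+1}\in I_i(\zz,\sss)$, and the farthest-neighbour inequality for $i+1$ applied to $j=i$ reads $|10-1|=9\ge R_{i+1}=8$, which is satisfied---so it does not contradict $z_i>z_{i+1}$. Your one-liner for this subcase only works when $z_i\le s_{i+1}$. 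The analogous failure occurs in the subcase $i\in N_{i+1}(\zz,\sss)$, $i+1\notin N_i(\zz,\sss)$ and in the mutual-membership subcase: all the local identities are simultaneously satisfiable with $z_i>z_{i+1}$ and $s_i<s_{i+1}$. What actually rules these configurations out is that the far endpoint of one of the two intervals is occupied by a neighbour who is ``too far'': e.g., when $z_{r(i)}=z_i+c_i$ exceeds the right endpoint of $I_{i+1}(\zz,\sss)$, all $k+1$ players of $N_{i+1}(\zz,\sss)\cup\{i+1\}$ have opinions strictly closer to $s_i$ than $z_{r(i)}$, and at least $k$ of them differ from $i$, contradicting $r(i)\in N_i(\zz,\sss)$. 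You would need to add such a counting step in each subcase; as written, the general-$k$ argument is incomplete.
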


\begin{proof}
For the sake of contradiction, let us assume that $z_{i+1}<z_i$ for a pair of players $i$ and $i+1$ with $s_i < s_{i+1}$. Then, it cannot be the case that the leftmost endpoint of the interval $I_i(\zz,\sss)$ of player $i$ is at the left of (or coincides with) the leftmost endpoint of interval $I_{i+1}(\zz,\sss)$ of player $i+1$ and the rightmost endpoint of $I_i(\zz,\sss)$ is at the left of (or coincides with) the rightmost endpoint of $I_{i+1}(\zz,\sss)$. In other words, it cannot be the case that $\min\{s_i,z_{\ell(i)}\}\leq \min\{s_{i+1},z_{\ell(i+1)}\}$ and $\max\{s_i,z_{r(i)}\} \leq \max\{s_{i+1},z_{r(i+1)}\}$ hold simultaneously. Since, by Lemma \ref{lem:equilibrium}, points $z_i$ and $z_{i+1}$ lie in the middle of the corresponding intervals, we would have $z_i\leq z_{i+1}$, contradicting our assumption.

So, at least one of the two inequalities between the interval endpoints above must not hold. In the following, we assume that $\min\{s_i,z_{\ell(i)}\}>\min\{s_{i+1},z_{\ell(i+1)}\}$ (the case where $\max\{s_i,z_{r(i)}\} >\max\{s_{i+1},z_{r(i+1)}\}$ is symmetric).
This assumption implies that $z_{\ell(i+1)}<s_i < s_{i+1}$ (i.e., $\min\{s_{i+1},z_{\ell(i+1)}\} = z_{\ell(i+1)}$), and, subsequently, that $z_{\ell(i+1)}<z_{\ell(i)}$. In words, player $\ell(i+1)$ does not belong to interval $I_i(\zz,\sss)$. Furthermore, since $z_{\ell(i+1)} < s_{i+1}$, and as (by Lemma~\ref{lem:equilibrium}) $z_{i+1}$ lies in the middle of $I_{i+1}(\zz,\sss)$, we also have that the leftmost endpoint of interval $I_{i+1}(\zz,\sss)$ cannot belong to player $i+1$, i.e., $\ell(i+1)\not=i+1$. An example of the relative ordering of points (beliefs and opinions), after assuming that $z_{i+1} < z_i$ and $\min\{s_i,z_{\ell(i)}\}>\min\{s_{i+1},z_{\ell(i+1)}\}$ is depicted in Figure \ref{fig:example-Lem2}.

\begin{figure}[h!]
\centering
\begin{tikzpicture}[xscale=1, yscale=1]
  \filldraw[thick] (0,0) -- (10,0);
  \filldraw(0,-0.1) node[align=center, below] {\small $z_{\ell(i+1)}$};
  \filldraw(2,-0.1) node[align=center, below] {\small $z_{\ell(i)}$};
  \filldraw(4,-0.1) node[align=center, below] {\small $s_i$};
  \filldraw(6,-0.1) node[align=center, below] {\small $z_{i+1}$};
  \filldraw(8,-0.1) node[align=center, below] {\small $z_i$};
  \filldraw(10,-0.1) node[align=center, below] {\small $s_{i+1}$};

  \filldraw (0,0) circle(1pt);
  \filldraw (2,0) circle(1pt);
  \filldraw (6,0) circle(1pt);
  \filldraw (8,0) circle(1pt);

  \filldraw ([xshift=-2pt,yshift=-2pt]4,0) rectangle ++(4pt,4pt);
  \filldraw ([xshift=-2pt,yshift=-2pt]10,0) rectangle ++(4pt,4pt);

  \draw[-stealth,thick] (4,0) to[bend left=25] (8,0);
  \draw[-stealth,thick] (10,0) to[bend right=25] (6,0);

  \draw[-stealth,thick] (0,0.5) to (0,0);
  \draw[-stealth,thick] (2,0.5) to (2,0);

  \node at (0,0) (zleft){};
  \node at (2,0) (zright){};

  \draw[thick] ([xshift=0em,yshift=5em]zleft.center) -- ([xshift=1.5em,yshift=5em]zleft.center);
  \draw[thick] ([xshift=0em,yshift=4.7em]zleft.center) -- ([xshift=0em,yshift=5.3em]zleft.center);
  \node[align=center] at ([xshift=3.4em,yshift=5em]zleft.center) {\scriptsize $I_{i+1}(\zz,\sss)$};
  \draw[thick] ([xshift=5.3em,yshift=5em]zleft.center) -- ([xshift=6.3em,yshift=5em]zleft.center);
  \node[align=center] at ([xshift=7em,yshift=5em]zleft.center) {\scriptsize $\cdots$};

  \draw[thick] ([xshift=0em,yshift=3em]zright.center) -- ([xshift=1.5em,yshift=3em]zright.center);
  \draw[thick] ([xshift=0em,yshift=2.7em]zright.center) -- ([xshift=0em,yshift=3.3em]zright.center);
  \node[align=center] at ([xshift=3em,yshift=3em]zright.center) {\scriptsize $I_i(\zz,\sss)$};
  \draw[thick] ([xshift=4.5em,yshift=3em]zright.center) -- ([xshift=5.5em,yshift=3em]zright.center);
  \node[align=center] at ([xshift=6.2em,yshift=3em]zright.center) {\scriptsize $\cdots$};

  \draw[dashed] ([xshift=0em,yshift=5em]zleft.center) -- (zleft.center);
  \draw[dashed] ([xshift=0em,yshift=3em]zright.center) -- (zright.center);
\end{tikzpicture}
\caption{An example of the argument used in the proof of Lemma \ref{lem:mon}.}
\label{fig:example-Lem2}
\end{figure}
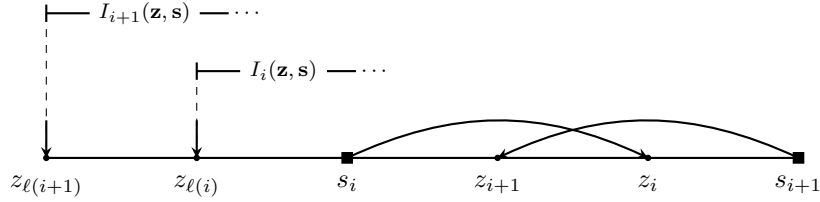

Since $\ell(i+1)$ does not belong to $I_i(\zz,\sss)$, there are at least $k$ players different than $\ell(i+1)$ and $i$ that have opinions at distance at most $s_i-z_{\ell(i+1)}$ from belief $s_i$. Since $s_i < s_{i+1}$ and $z_{\ell(i+1)} < z_{\ell(i)}$, all these players are also at distance strictly less than $s_{i+1}-z_{\ell(i+1)}$ from belief $s_{i+1}$. This contradicts the fact that the opinion of player $\ell(i+1)$ is among the $k$ closest opinions to $s_{i+1}$.
\end{proof}

In the following, in any pure Nash equibrium $\zz$, we assume that $z_i\leq z_{i+1}$ for any $i \in [n-1]$. This follows by Lemma \ref{lem:mon} when $s_i < s_{i+1}$ and by a convention for the identities of players with identical belief.

In addition to the ordering of opinions in a pure Nash equilibrium, we can also specify the range of neighborhoods (in Lemma~\ref{lem:nei}) and opinions (in Lemma~\ref{lem:betw}).

\begin{lemma}\label{lem:nei}
Let $\zz$ be a pure Nash equilibrium of a $k$-COF game with belief vector $\sss$. Then, for each player $i$, there exists $j$ with $i-k\leq j\leq i$ such that $I_i(\zz,\sss)$ is the shortest interval that contains the opinions $z_j, z_{j+1}, ..., z_{j+k}$ and belief $s_i$.
\end{lemma}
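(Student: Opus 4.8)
The plan is to exploit the ordering $z_1 \le z_2 \le \dots \le z_n$ established after Lemma~\ref{lem:mon}, together with the definition of the neighborhood $N_i(\zz,\sss)$ as the $k$ players whose opinions are closest to $s_i$. Since the opinions are sorted, I would first argue that $N_i(\zz,\sss)$ is a \emph{contiguous block} of players in opinion order: if $a < b < c$ are such that $a, c \in N_i(\zz,\sss)$ but $b \notin N_i(\zz,\sss)$, then $z_a \le z_b \le z_c$, so $z_b$ is no farther from $s_i$ than the larger of $|z_a - s_i|$ and $|z_c - s_i|$ (because $s_i$ lies outside the interval $[z_a, z_c]$, or inside it, in either case $z_b$ is at distance at most $\max\{|z_a-s_i|,|z_c-s_i|\}$). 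This contradicts the tie-breaking-consistent choice of the $k$ closest opinions, so $N_i(\zz,\sss) = \{j, j+1, \dots, j+k-1\}$ for some index $j$ — here I need to be slightly careful about whether player $i$ herself is counted, but since $N_i$ has exactly $k$ members and $i \notin N_i$, the block of $k$ neighbors sits among the players other than $i$.

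Next I would locate this block relative to $i$. Because the opinions are sorted and by Lemma~\ref{lem:equilibrium} the opinion $z_i$ lies in the middle of $I_i(\zz,\sss)$, I claim the neighbor block cannot lie entirely to one side of $i$ in a way that pushes $i$ out of its own interval's interior inconsistently; more concretely, I want to show the block $\{j,\dots,j+k-1\}$ together with $i$ spans a set of $k+1$ consecutive indices once we insert $i$. If $i$ is already in the index range $[j, j+k-1]$, that is impossible since $i \notin N_i$; so either $j+k-1 < i$, i.e.\ all neighbors are to the left, or $j > i$, i.e.\ all neighbors are to the right, or $i$ sits strictly between consecutive neighbors — but the last case is excluded because $N_i$ is a contiguous index block. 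So $i$ is adjacent to the block on one side. Writing the union $\{j,\dots,j+k-1\} \cup \{i\}$ as $k+1$ consecutive indices then requires $i = j-1$ (neighbors all to the right, giving the range $[i, i+k]$) or $i = j+k$ (neighbors all to the left, giving the range $[i-k, i]$); in both cases the $k+1$ indices form an interval $[j', j'+k]$ with $i - k \le j' \le i$, and $I_i(\zz,\sss)$ — being the shortest interval containing $s_i$ and the $k$ neighbor opinions, with $z_i$ automatically inside by Lemma~\ref{lem:equilibrium} — is exactly the shortest interval containing $z_{j'}, z_{j'+1}, \dots, z_{j'+k}$ and $s_i$.

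The one subtlety I should pin down carefully is why $i$ cannot be separated from its neighbor block by a ``gap'' of non-neighbor players — equivalently, why the block of neighbors must be the $k$ players \emph{immediately adjacent} to $i$ in opinion order rather than some other contiguous block of $k$. This is where the equilibrium condition (Lemma~\ref{lem:equilibrium}, $z_i$ in the middle of $I_i$) does real work: suppose there were a player $p$ with index strictly between $i$ and the whole neighbor block (say neighbors are all to the right, $p$ has index in $(i, j)$); then $z_i \le z_p \le z_{j}$, and since $s_i$ and all neighbor opinions lie at or to the right of $z_i$ while $z_i$ is the midpoint of $I_i$, the right endpoint of $I_i$ is some $z_{j+m}$ or $s_i$, and $z_p$ is at distance at most $\max\{|z_p - s_i|, |z_p - z_{j+k-1}|\}$ from $s_i$, which is at most the distance from $s_i$ of the farthest current neighbor — so $p$ should have been chosen over that neighbor, contradiction. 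I expect this adjacency argument (ruling out gaps) to be the main obstacle; everything else is bookkeeping with the sorted order. Once adjacency is established, the index range $i-k \le j \le i$ follows immediately by counting, and the characterization of $I_i(\zz,\sss)$ as the shortest interval on $z_j,\dots,z_{j+k}$ and $s_i$ is then just unwinding the definition, using that $z_i \in [z_j, z_{j+k}]$ already.
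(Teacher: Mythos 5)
Your overall strategy---reduce the lemma to showing that the opinions determining $I_i(\zz,\sss)$ come from $k+1$ players with consecutive indices including $i$, using the monotonicity of opinions (Lemma~\ref{lem:mon}) and the midpoint property (Lemma~\ref{lem:equilibrium})---is the same as the paper's. But one step fails as written. Your first argument (if $a<b<c$ with $a,c\in N_i(\zz,\sss)$ and $b\notin N_i(\zz,\sss)$, then $z_b$ is at least as close to $s_i$ as the farther of $z_a,z_c$, contradiction) does not apply when $b=i$, because player $i$ is excluded from $N_i(\zz,\sss)$ \emph{by definition}, not by distance. Consequently $N_i(\zz,\sss)$ need not be a contiguous index block: it can equal $\{j,\dots,j+k\}\setminus\{i\}$ with $j<i<j+k$ (already for $k=2$ and three players, $N_2(\zz,\sss)=\{1,3\}$). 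You then explicitly ``exclude'' the case in which $i$ sits strictly between consecutive neighbors on the strength of this false contiguity claim, so your conclusion that $i$ must be adjacent to a one-sided block of neighbors is unjustified. The straddling case is in fact the easy one: there $N_i(\zz,\sss)\cup\{i\}$ is immediately a set of $k+1$ consecutive indices $\{j,\dots,j+k\}$ with $i-k\le j\le i$, and the claim about $I_i(\zz,\sss)$ follows at once since $z_i$ is one of the listed opinions. The repair is to state that the only possible gap in the index range spanned by $N_i(\zz,\sss)$ is at $i$ itself, and then split into ``$i$ strictly inside the range'' (immediate) versus ``all neighbors on one side of $i$'' (your third paragraph's adjacency argument, which is sound in the non-degenerate case).

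The second, smaller issue is ties, which is where the paper's proof actually spends its effort. Both your contiguity argument and your adjacency argument conclude only that the intermediate player $p$ is \emph{at most} as far from $s_i$ as the farthest current neighbor; at equality there is no contradiction with $p\notin N_i(\zz,\sss)$, since ties may be broken against $p$. In that case one must observe that $z_p$ coincides with a boundary opinion of $I_i(\zz,\sss)$, so the interval---which is all that Lemma~\ref{lem:nei} asserts anything about---is unchanged when $p$ is swapped in for the tied neighbor. Without this remark the lemma, which deliberately speaks of $I_i(\zz,\sss)$ rather than of $N_i(\zz,\sss)$ being index-consecutive, is not actually established; the paper's proof handles exactly this by selecting appropriate players from the sets of opinions sitting at the two endpoints of $I_i(\zz,\sss)$.
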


\begin{proof}
If $I_i(\zz,\sss)$ consists of a single point, the lemma follows trivially by the definition of the neighborhood and Lemma \ref{lem:mon} since at least $k+1$ consecutive players including $i$ should have opinions in $I_i(\zz,\sss)$. Otherwise, by Lemma \ref{lem:mon}, the lemma is true if there is at most one opinion in each of the left and the right boundary of $I_i(\zz,\sss)$; in this case, there are exactly $k+1$ consecutive players including player $i$ with opinions in $I_i(\zz,\sss)$.

In the following, we handle the subtleties that may arise due to tie-breaking at the boundaries of $I_i(\zz,\sss)$. Let $Y_\ell$ and $Y_r$ be the set of players with opinions at the leftmost and the rightmost point of $I_i(\zz,\sss)$, respectively. From Lemma \ref{lem:equilibrium}, player $i$ belongs neither to $Y_\ell$ nor to $Y_r$. Now consider the following set of players: the $|Y_\ell\cap N_i(\zz,\sss)|$ players with highest indices from $Y_\ell$, the $|Y_r\cap N_i(\zz,\sss)|$ players with lowest indices from $Y_r$ and all players with opinions that lie strictly in $I_i(\zz,\sss)$. Due to the definition of $N_i(\zz,\sss)$ and by Lemma \ref{lem:mon}, there are $k+1$ players in this set, including player $i$, with consecutive indices.
\end{proof}
\noindent In the following, irrespectively of how ties are actually resolved, we assume that $N_i(\zz,\sss)\cup \{i\}$ consists of $k+1$ players with consecutive indices. This does not affect the cost of player $i$ at equilibrium in the proofs of our upper bounds (since, by Lemma \ref{lem:nei}, the interval defined is exactly the same), while our lower bound constructions are defined carefully so that the results hold no matter how ties are actually resolved.

\begin{lemma}\label{lem:betw}
Let $\zz$ be a pure Nash equilibrium of a $k$-COF game with belief vector $\sss$. Then, for each player $i$, it holds that $s_{\ell(i)} \leq z_i\leq s_{r(i)}$.
\end{lemma}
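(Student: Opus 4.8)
The plan is to fix a player $i$ and prove the two inequalities $z_i \le s_{r(i)}$ and $s_{\ell(i)} \le z_i$ separately, by symmetry focusing on the first one; the second follows by a mirror-image argument (reflecting the line). Recall that by Lemma~\ref{lem:equilibrium} the opinion $z_i$ lies exactly in the middle of $I_i(\zz,\sss)$, whose rightmost point is (the belief, opinion, or opinion of) player $r(i)$, and by Lemma~\ref{lem:nei} we may assume $N_i(\zz,\sss)\cup\{i\}$ is a block of $k+1$ consecutive players, say with indices $j,\dots,j+k$ where $j-k \le \dots$ hmm, more precisely $i-k \le j \le i$. First I would split into cases according to who $r(i)$ is. If $r(i)=i$, then the rightmost point of $I_i$ is $z_i$ itself (so all neighbors have opinions $\le z_i$ and $s_i \le z_i$), and $s_{r(i)} = s_i$; but then, since $z_i$ lies in the middle of $I_i$ and $s_i$ is the rightmost of the belief/opinion it sees on the right... actually if $r(i)=i$ the right endpoint is $\max\{s_i,z_i,z_{j},\dots,z_{j+k}\}=z_i$, which forces $s_i \le z_i = s_{r(i)}$ trivially, but I must be careful: $r(i)=i$ could mean the endpoint is $z_i$ OR $s_i$; the definition says $r(i)$ is the player owning the rightmost point, and that point could be $s_i$. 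If the rightmost point is $s_i$, then $z_i \le s_i = s_{r(i)}$, done. If the rightmost point is $z_i$, then again $s_i \le z_i$; but we want $z_i \le s_{r(i)} = s_i$, which would force $z_i = s_i$ — so I need to rule out $z_i > s_i$ being the strict rightmost point, which holds because if $z_i$ strictly exceeds $s_i$ and all neighbor opinions, player $i$ could lower her opinion toward $s_i$ and strictly reduce her cost, contradicting equilibrium (this is essentially Lemma~\ref{lem:equilibrium} again). So the case $r(i)=i$ is fine.

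The substantive case is $r(i) = m$ for some neighbor $m \in N_i(\zz,\sss)$ with $m > i$ (by the consecutive-block structure and Lemma~\ref{lem:mon}, all neighbors to the right of $i$ have index $> i$, so $m \ge i+1$, hence $s_m \ge s_i$). Here the rightmost point of $I_i$ is either $z_m$ or $s_m$; since $m$ is the owner, and $z_i$ is the midpoint of $I_i$ with left endpoint at some point $\le s_i \le s_m$, we get $z_i \le \tfrac12(\text{left endpoint} + \text{right endpoint}) \le \text{right endpoint} \le s_m$ whenever the right endpoint is $s_m$ — done. The remaining subcase is that the rightmost point of $I_i$ is $z_m$ with $z_m > s_m$. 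I want to contradict equilibrium of player $m$. Since $z_m > s_m$ and by Lemma~\ref{lem:mon} $z_m \ge z_i$ (as $s_i \le s_m$... or the consecutive-index convention), consider player $m$'s own interval $I_m(\zz,\sss)$: I would argue its midpoint is $z_m$ (Lemma~\ref{lem:equilibrium}), so $z_m = \tfrac12(\ell\text{-endpoint of }I_m + r\text{-endpoint of }I_m)$. The key step is to show that \emph{both} endpoints of $I_m$ are $\le z_m$, which would force $z_m$ to equal them and collapse the interval — but then the neighbors of $m$ plus $m$ all sit at $z_m$, which together with $s_m < z_m$ contradicts $z_m$ being the midpoint of $I_m$ (the belief $s_m$ pulls the midpoint strictly left of $z_m$). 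To see both endpoints of $I_m$ are $\le z_m$: the belief $s_m < z_m$; and $m$'s neighbors — here I use that the $k+1$-block containing $m$ must overlap the block containing $i$ substantially, so $m$'s left neighbors include players with small index whose opinions are $\le z_m$, while $m$'s right neighbors, being the $k$ opinions closest to $s_m < z_m$, cannot all lie to the right of $z_m$ unless... this is where I expect the argument to need care with the neighborhood structure.

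The main obstacle, as in the proof of Lemma~\ref{lem:mon}, is pinning down exactly which players are in $N_m(\zz,\sss)$ relative to $z_m$ and $s_m$, since the neighborhood is defined by distance to the \emph{belief} $s_m$, not by position relative to $z_m$. The clean way to handle this, which I would pursue, is: since $z_i \in I_i$ and $z_i$'s block of neighbors are consecutive (Lemma~\ref{lem:nei}), and $i < m$ with $m$ a neighbor of $i$, player $i$ and at least one player with index $< m$ (possibly $i$ itself, or a left neighbor of $i$) have opinions $\le z_i \le z_m < z_m$; these are at distance $\ge z_m - s_m$ wait — I need them at distance $< $ something from $s_m$. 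Concretely: the opinion $z_i \le z_m$ satisfies $s_m - z_i < z_m - s_m$ is not automatic. Let me instead directly use player $i$: if $z_m > s_m$, then $|z_m - s_m| = z_m - s_m$, and for player $m$ to prefer $z_m$ over deviating to $s_m$, the cost at $s_m$ (namely the max distance from $s_m$ to $m$'s neighbor opinions) must be $\ge$ cost at $z_m$; but at $z_m$ the cost is $\ge |z_m - s_m|$, so some neighbor opinion of $m$ is at distance $\ge z_m - s_m$ from $s_m$ \emph{on some side}. If that neighbor is to the left of $s_m$, it is $\le 2s_m - z_m < s_m \le s_i$, which (as in Lemma~\ref{lem:mon}, using $s_i < s_m$ or the index convention and $z_i \le z_m$) gives $k$ players closer to $s_i$ than that neighbor is — pushing toward a contradiction with $i$'s neighborhood or $m$'s. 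I would chase exactly this dichotomy; the bookkeeping mirrors Lemma~\ref{lem:mon}'s final paragraph, so I expect it to close, but writing it so that it is valid regardless of tie-breaking (using the consecutive-block convention established after Lemma~\ref{lem:nei}) is the delicate part.
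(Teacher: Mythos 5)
Your reduction to a single hard case is correct and matches the paper's: after Lemma~\ref{lem:nei} gives $s_{\ell(i)}\leq s_i\leq s_{r(i)}$, the only real work is ruling out $z_i>s_{r(i)}$ when the rightmost point of $I_i(\zz,\sss)$ is the opinion $z_{r(i)}>z_i$ of a genuine neighbor $r(i)\neq i$. But that is exactly where your argument stops being a proof. Your first plan --- show that both endpoints of $I_{r(i)}(\zz,\sss)$ are at most $z_{r(i)}$ --- cannot be executed as stated: under the contradiction hypothesis we have $s_{r(i)}<z_i<z_{r(i)}$, and Lemma~\ref{lem:equilibrium} applied to player $r(i)$ already forces the right endpoint of $I_{r(i)}(\zz,\sss)$ to be an opinion $z_{r(r(i))}$ \emph{strictly to the right} of $z_{r(i)}$; so the claim you propose to verify is itself the desired contradiction, and you have only restated the goal. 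Your second plan (the ``dichotomy'' in your last paragraph) points in the right direction but is left as an expectation that the bookkeeping ``mirrors Lemma~\ref{lem:mon}''; it does not, and the two steps that actually close the case are missing.

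Concretely, the paper works with the specific player $\ell(i)$ (owner of the leftmost point of $I_i(\zz,\sss)$), not a generic ``neighbor to the left of $s_m$''. Step one is a counting argument: if $\ell(i)\in N_{r(i)}(\zz,\sss)$, then since $r(r(i))\in N_{r(i)}(\zz,\sss)$, Lemma~\ref{lem:mon} forces every player strictly between $\ell(i)$ and $r(r(i))$ other than $r(i)$ into $N_{r(i)}(\zz,\sss)$, which then has at least $k+1$ members; hence $\ell(i)\notin N_{r(i)}(\zz,\sss)$. Step two is the quantitative chain
\begin{equation*}
z_{r(r(i))}-s_{r(i)} \;>\; z_{r(i)}-z_i \;=\; z_i-\min\{s_i,z_{\ell(i)}\} \;>\; s_{r(i)}-z_{\ell(i)},
\end{equation*}
where the equality is the midpoint property of $I_i(\zz,\sss)$ and both strict inequalities use $z_i>s_{r(i)}$. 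If $z_{\ell(i)}>s_{r(i)}$ this places $z_{\ell(i)}$ inside $(s_{r(i)},z_{r(r(i))}]$, and otherwise it shows $z_{\ell(i)}$ is strictly closer to $s_{r(i)}$ than $z_{r(r(i))}$ is; either way $\ell(i)$ would have to be in $N_{r(i)}(\zz,\sss)$ while $r(r(i))$ is, contradicting step one. Neither the counting step nor this chain appears in your sketch (and note your inequality ``$s_m\leq s_i$'' is reversed --- the block structure gives $s_i\leq s_{r(i)}$). Without these the central case remains open.
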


\begin{proof}
Since $N_i(\zz,\sss) \cup \{i\}$ consists of $k+1$ players with consecutive indices, we have that $s_{\ell(i)} \leq s_i \leq s_{r(i)}$.
For the sake of contradiction, let us assume that $s_{\ell(i)}\leq s_{r(i)}<z_i$ for some player $i$ (the case where $z_i$ lies at the left of $s_{\ell(i)}$ is symmetric). Since $s_{r(i)} < s_i$ and as $z_i$ is at the middle of $I_i(\zz, \sss)$, it holds that $z_{r(i)}>z_i$ (i.e., $r(i) \neq i$). Also, since $z_{r(i)}>z_i>s_{r(i)}$, and because $z_{r(i)}$ is in the middle of $I_{r(i)}(\zz, \sss)$, it holds that $z_{r(r(i))} > z_{r(i)}$ and, by Lemma \ref{lem:mon}, $r(r(i))>r(i)$; see Figure \ref{fig:example-Lem3} for an example of the relative ordering of points (beliefs and opinions) when assuming that $s_{r(i)} < z_i$.

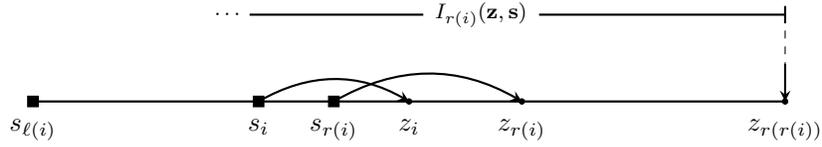
\begin{figure}[h!]
\centering
\begin{tikzpicture}[xscale=1, yscale=1]
  \filldraw[thick] (0,0) -- (10,0);
  \filldraw (0,-0.1) node[align=center, below] {\small $s_{\ell(i)}$};
  \filldraw (3,-0.1) node[align=center, below] {\small $s_i$};
  \filldraw (4,-0.1) node[align=center, below] {\small $s_{r(i)}$};
  \filldraw (5,-0.1) node[align=center, below] {\small $z_i$};
  \filldraw (6.5,-0.1) node[align=center, below] {\small $z_{r(i)}$};
  \filldraw (10,-0.1) node[align=center, below] {\small $z_{r(r(i))}$};

  \filldraw (5,0) circle(1pt);
  \filldraw (6.5,0) circle(1pt);
  \filldraw (10,0) circle(1pt);

  \filldraw ([xshift=-2pt,yshift=-2pt]0,0) rectangle ++(4pt,4pt);
  \filldraw ([xshift=-2pt,yshift=-2pt]3,0) rectangle ++(4pt,4pt);
  \filldraw ([xshift=-2pt,yshift=-2pt]4,0) rectangle ++(4pt,4pt);

  \draw[-stealth,thick] (3,0) to[bend left=30] (5,0);
  \draw[-stealth,thick] (4,0) to[bend left=30] (6.5,0);

  \draw[-stealth,thick] (10,0.5) to (10,0);

  \node at (10,0) (z){};

  \draw[thick] ([xshift=0em,yshift=3em]z.center) -- ([xshift=-8.5em,yshift=3em]z.center);
  \draw[thick] ([xshift=0em,yshift=2.7em]z.center) -- ([xshift=0em,yshift=3.3em]z.center);
  \node[align=center] at ([xshift=-10.5em,yshift=3em]z.center) {\scriptsize $I_{r(i)}(\zz,\sss)$};
  \draw[thick] ([xshift=-12.5em,yshift=3em]z.center) -- ([xshift=-18.5em,yshift=3em]z.center);
  \node[align=center] at ([xshift=-19.2em,yshift=3em]z.center) {\scriptsize $\cdots$};

  \draw[dashed] ([xshift=0em,yshift=3em]z.center) -- (z.center);
\end{tikzpicture}
\caption{An example of the argument used in the proof of Lemma \ref{lem:betw}.}
\label{fig:example-Lem3}
\end{figure}

We now claim that $\ell(i) \notin N_{r(i)}(\zz, \sss)$. Assume otherwise that $\ell(i)\in N_{r(i)}(\zz, \sss)$. By definition, $r(r(i))\in N_{r(i)}(\zz, \sss)$. Then, Lemma \ref{lem:mon} implies that any player $j$, different than $r(i)$, with $\ell(i)<j<r(r(i))$ is also in $N_{r(i)}(\zz, \sss)$. Hence,
$N_{r(i)}(\zz,\sss)$ contains at least the $k-1$ players in $N_i(\zz,\sss)\setminus\{r(i)\}$, as well as players $i$ and $r(r(i))$. This, however, contradicts the fact that $|N_{r(i)}(\zz, \sss)| = k$. Therefore, player $\ell(i)$ is not among the $k$ nearest neighbors of $r(i)$.

So, we obtain that
\begin{align*}
z_{r(r(i))} - s_{r(i)} &> z_{r(i)} - s_{r(i)} > z_{r(i)} - z_i = z_i - \min\{s_i, z_{\ell(i)}\} \\
&> s_{r(i)} - \min\{s_i, z_{\ell(i)}\} \geq s_{r(i)} - z_{\ell(i)}.
\end{align*}
If $z_{\ell(i)} > s_{r(i)}$ (i.e, $z_{\ell(i)}$ is at the right of $s_{r(i)}$), then since, by Lemma \ref{lem:mon}, $z_{\ell(i)}\leq z_{r(r(i))}$ and $r(r(i)) \in N_{r(i)}(\zz, \sss)$, we obtain that $\ell(i) \in N_{r(i)}(\zz, \sss)$ as well; a contradiction. Otherwise, the above inequality yields  that $z_{r(r(i))} - s_{r(i)} > s_{r(i)} - z_{\ell(i)}\geq 0$ (i.e., the distance of $s_{r(i)}$ from $z_{r(r(i))}$ is strictly higher than the distance of $s_{r(i)}$ from $z_{\ell(i)}$), and, again, we obtain a contradiction to the fact that $\ell(i)\notin N_{r(i)}(\zz, \sss)$ and $r(r(i)) \in N_{r(i)}(\zz, \sss)$.
\end{proof}


\section{Existence and quality of equilibria}\label{sec:existence}
Our first technical contribution is a negative statement: pure Nash equilibria may not exist for any $k$ (Theorem \ref{thm:no-pure}). Then, we show that even in $k$-COF games that admit pure Nash equilibria, the best equilibrium may be inefficient; in other words, the price of stability is strictly greater than $1$ for any value of $k$, and, actually, depends linearly on $k$. These results appear in Theorems \ref{thm:pos-k}, \ref{thm:pos}, and \ref{thm:pos-2}.

\subsection{Existence of equilibria}
We begin with a technical lemma. The lemma essentially presents necessary conditions so that a particular set of neighborhoods, and corresponding intervals, may coexist in a pure Nash equilibrium.

\begin{lemma}\label{lem:small-chain}
Consider a $k$-COF game and any three players $a, b, c$ with beliefs $s_a\leq s_b\leq s_c$, respectively. For any pure Nash equilibrium $\zz$ where $I_a(\zz, \sss) = [s_a, z_b]$, $I_b(\zz, \sss) = [s_b, z_c]$ and $I_c(\zz, \sss) = [z_b, s_c]$, it must hold that $s_b\geq\frac{3s_a+5s_c}{8}$, while for any pure Nash equilibrium $\zz$ where $I_a(\zz, \sss) = [s_a, z_b]$, $I_b(\zz, \sss) = [z_a, s_b]$ and $I_c(\zz, \sss) = [z_b, s_c]$, it must hold that $s_b\leq\frac{5s_a+3s_c}{8}$.
\end{lemma}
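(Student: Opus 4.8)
The plan is to first reduce the second inequality to the first. The two configurations are mirror images of one another: reflecting all beliefs and opinions through the origin sends a $k$-COF game to another $k$-COF game and a pure Nash equilibrium to a pure Nash equilibrium (all costs are absolute differences, hence invariant under isometries), while interchanging the roles of the leftmost and rightmost of the three players and negating every belief. Under this map the second configuration $I_a=[s_a,z_b]$, $I_b=[z_a,s_b]$, $I_c=[z_b,s_c]$ turns into the first one $I_a=[s_a,z_b]$, $I_b=[s_b,z_c]$, $I_c=[z_b,s_c]$ (with negated beliefs), and the target bound $s_b\le\frac{5s_a+3s_c}{8}$ corresponds to $s_b\ge\frac{3s_a+5s_c}{8}$. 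So I would only prove the first claim and get the second for free.

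For the first claim I would start from Lemma~\ref{lem:equilibrium}, which forces $z_a,z_b,z_c$ to be the midpoints of $I_a,I_b,I_c$ and hence yields the linear system $z_a=\frac{s_a+z_b}{2}$, $z_b=\frac{s_b+z_c}{2}$, $z_c=\frac{z_b+s_c}{2}$; its (routine) solution is $z_b=\frac{2s_b+s_c}{3}$, $z_c=\frac{s_b+2s_c}{3}$, $z_a=\frac{3s_a+2s_b+s_c}{6}$. Next I would read off the neighborhoods from the interval endpoints using Lemma~\ref{lem:nei} and the tie-breaking convention adopted after it: one may take $N_a\cup\{a\}=\{b-k,\dots,b\}$ and $N_b\cup\{b\}=\{c-k,\dots,c\}$, so in particular every neighbor of $b$ has opinion in $[s_b,z_c]$ and the neighbor of $b$ farthest from $s_b$ is the one at $z_c$, at distance $z_c-s_b$.

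The core of the argument is a case split on the sign of $z_a-s_b$. If $z_a<s_b$, then $a\notin N_b$ (all neighbors of $b$ have opinions $\ge s_b$), so $a$ is no closer to $s_b$ than any neighbor of $b$; this gives $s_b-z_a\ge z_c-s_b$, i.e.\ $z_a+z_c\le 2s_b$, which after substituting the solved values rearranges precisely into $s_b\ge\frac{3s_a+5s_c}{8}$. If $z_a\ge s_b$, then $a$ has itself become a neighbor of $b$ and no longer serves as a witness, so I would instead use $d:=b-k$, the leftmost member of $N_a\cup\{a\}$: its opinion satisfies $z_d\ge s_a$ (it lies in $I_a=[s_a,z_b]$) and $z_d\le z_a$ (Lemma~\ref{lem:mon}), while $d\notin N_b$ because $b-k<c-k$. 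Setting aside the trivial sub-case $s_b=s_c$, this forces $z_d<s_b$ and hence $z_d\le 2s_b-z_c$, so $s_a\le 2s_b-z_c$, i.e.\ $3s_a+2s_c\le 5s_b$; combining with the assumption $z_a\ge s_b$, which reads $3s_a+s_c\ge 4s_b$, the two bounds $\frac{3s_a+2s_c}{5}\le s_b\le\frac{3s_a+s_c}{4}$ are compatible only when $s_c\le s_a$, and together with $s_a\le s_c$ this gives $s_a=s_b=s_c$, where the claimed inequality holds with equality.

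I expect the second case to be the main obstacle: once $z_a\ge s_b$ the natural witness (player $a$) has migrated into $N_b$, and one must descend into the consecutive-index description of neighborhoods provided by Lemma~\ref{lem:nei} to produce a replacement ($d=b-k$) that is pinned from the left by the interval $I_a$ and kept out of $N_b$ by an index comparison. Everything else — solving the linear system, the final bits of arithmetic, and verifying that the degenerate situations (coincident beliefs, or opinions coinciding at an endpoint of some $I_i$) make the inequality hold trivially — is routine.
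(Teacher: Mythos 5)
Your proposal is correct and follows essentially the same route as the paper: solve the midpoint equations of Lemma~\ref{lem:equilibrium} to express $z_a,z_b,z_c$ in terms of the beliefs, then use the fact that a player outside $N_b(\zz,\sss)$ must be at least as far from $s_b$ as the neighbor at $z_c$, giving $z_c-s_b\leq s_b-z_a$, which rearranges to $s_b\geq\frac{3s_a+5s_c}{8}$ (the second configuration following by symmetry). Your additional case $z_a\geq s_b$, with the substitute witness $d=b-k$, only covers a degenerate configuration that collapses to $s_a=s_b=s_c$; the paper dispatches it implicitly by asserting $a\notin N_b(\zz,\sss)$ outright, so your treatment is a harmless and in fact slightly more careful refinement of the same argument.
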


\begin{proof}
It suffices to prove the first case; the second case is symmetric. Since $I_b(\zz, \sss) = [s_b, z_c]$ and $I_c(\zz, \sss) = [z_b, s_c]$, by Lemma \ref{lem:equilibrium} it holds that $z_b = (s_b+z_c)/2$ and $z_c = (z_b+s_c)/2$ which yield that $z_b = s_b+\frac{s_c-s_b}{3}$ and $z_c = s_b+\frac{2(s_c-s_b)}{3}$. Hence, we obtain that
\begin{align}\label{eq:small-chain-1}
z_c - s_b = \frac{2(s_c-s_b)}{3}.
\end{align}
Similarly, since $I_a(\zz, \sss) = [s_a, z_b]$, it holds that $z_a = \frac{s_a+z_b}{2}= \frac{3s_a+2s_b+s_c}{6}$ and, therefore, we obtain that
\begin{align}\label{eq:small-chain-2}
s_b-z_a = \frac{-3s_a+4s_b-s_c}{6}.
\end{align}
Since $I_b(\zz, \sss) = [s_b, z_c]$, we have that $a\notin N_b(\zz, \sss)$ and, subsequently, that $z_c-s_b\leq s_b-z_a$ which, together with (\ref{eq:small-chain-1}) and (\ref{eq:small-chain-2}), yields that $s_b\geq\frac{3s_a+5s_c}{8}$ as desired.
\end{proof}

The proof of the next theorem is inspired by a construction of \citet{BGM13} and exploits Lemma~\ref{lem:small-chain}.

\begin{theorem}\label{thm:no-pure}
For any $k$, there exists a $k$-COF game with no pure Nash equilibria.
\end{theorem}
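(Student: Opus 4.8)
The plan is to produce, for every $k\ge 1$, one explicit $k$-COF game on $2k+1$ players with no pure Nash equilibrium, and to reduce the analysis of all candidate equilibria to Lemma~\ref{lem:small-chain}. I would take a \emph{left cluster} of $k$ players all with belief $0$, a single \emph{middle} player with belief $t$, and a \emph{right cluster} of $k$ players all with belief $1$, where $t$ is chosen strictly inside $(\tfrac38,\tfrac58)$ and different from the midpoint $\tfrac12$ --- say $t=\tfrac25$. The triple $(0,t,1)$ will play the roles $(s_a,s_b,s_c)$ of Lemma~\ref{lem:small-chain}: its first scenario is available only when $t\ge\tfrac58$ and its second only when $t\le\tfrac38$, so $t=\tfrac25$ rules out both, and I will argue that every pure Nash equilibrium must realise one of them --- or else a third, ``mixed'' configuration which a separate computation excludes because it would force $t=\tfrac12$.

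Starting from an arbitrary pure Nash equilibrium $\zz$, I would first use Lemmas~\ref{lem:mon}, \ref{lem:nei} and~\ref{lem:betw} to fix the coarse geometry. Opinions are ordered $z_1\le\cdots\le z_{2k+1}$; the leftmost player's neighbourhood must be $N_1\cup\{1\}=\{1,\dots,k+1\}$ (the only block of $k+1$ consecutive indices containing $1$), whence $I_1=[0,z_{k+1}]$ and, by Lemma~\ref{lem:equilibrium}, $z_1=z_{k+1}/2$; symmetrically $I_{2k+1}=[z_{k+1},1]$ and $z_{2k+1}=(z_{k+1}+1)/2$. The key intermediate claim is that all opinions within each cluster coincide: for any interior left-cluster player $j$, $z_1$ is the globally smallest opinion and lies weakly to the right of $s_j=0$, so player $j$ cannot drop player $1$ from her $k$ nearest opinions; propagating this forces $N_j\cup\{j\}=\{1,\dots,k+1\}$ and hence $z_j=z_{k+1}/2=z_1$, and symmetrically on the right.

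The equilibrium is thus described by three numbers: $p:=z_{k+1}/2$ (the common left-cluster opinion), $m:=z_{k+1}$ (the middle opinion) and $q:=(z_{k+1}+1)/2$ (the common right-cluster opinion). It remains to examine the middle player's neighbourhood. By Lemma~\ref{lem:nei}, $I_{k+1}$ is the shortest interval containing $s_{k+1}=t$ and some block $z_j,\dots,z_{j+k}$ through index $k+1$, leaving three cases. If the block is the whole left cluster then $(a,b,c)=(1,k+1,2k+1)$ satisfies $I_a=[s_a,z_b]$, $I_b=[z_a,s_b]$, $I_c=[z_b,s_c]$ (or else the midpoint condition on $I_{k+1}$ fails immediately), so Lemma~\ref{lem:small-chain} gives $t\le\tfrac38$, a contradiction; the whole-right-cluster case is the mirror image and yields $t\ge\tfrac58$. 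In the remaining mixed case $I_{k+1}=[p,q]$, whose midpoint must be $m$; combined with $p=m/2$ and $q=(m+1)/2$ this forces $m=\tfrac12$, hence $p=\tfrac14$ and $q=\tfrac34$, which are not equidistant from $t=\tfrac25$, so no tie-breaking rule can make a mixed set of opinions the $k$ nearest to $t$ --- contradiction. The degenerate variants where $t$ falls outside $[p,q]$ collapse the same way through the midpoint equation.

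The step I expect to be the main obstacle is making this last analysis airtight: ruling out \emph{every} sliding window $\{j,\dots,j+k\}$ for the middle player and, crucially, doing so robustly against arbitrary (but consistent) tie-breaking among equidistant opinions --- precisely the subtlety flagged after Lemma~\ref{lem:nei}, and the reason $t$ must sit off the symmetric value $\tfrac12$. Once the three cases are dispatched, no opinion vector can be a pure Nash equilibrium of this game, proving the theorem. Everything else is bookkeeping with Lemmas~\ref{lem:equilibrium}--\ref{lem:betw} and two invocations of Lemma~\ref{lem:small-chain}.
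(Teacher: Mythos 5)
Your proposal is correct and follows essentially the same route as the paper: the same $2k+1$-player instance (two extreme clusters of $k$ players plus one middle player whose belief sits strictly between the two thresholds of Lemma~\ref{lem:small-chain} and off the midpoint), the same three-way case split on the middle player's neighborhood, with Lemma~\ref{lem:small-chain} killing the two one-sided cases and the midpoint computation plus the asymmetry of the middle belief killing the mixed case. The only difference is cosmetic (your interval is $[0,1]$ with $t=2/5$ versus the paper's $[0,2]$ with $s_m=1-\epsilon$), and your extra care in pinning down the cluster opinions is a correct elaboration of what the paper asserts as ``clear.''
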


\begin{proof}
Consider a $k$-COF game with $2k+1$ players partitioned into three sets called $L$, $M$, and $R$, where $L$ and $R$ each contain $k$ players, while $M = \{m\}$ is a singleton.  We set $s_i = 0$ for each $i \in L$, $s_i = 2$ for each $i \in R$, while $s_m = 1-\epsilon$, where $\epsilon < 1/4$ is an arbitrarily small positive constant.

Let us assume that there exists a pure Nash equilibrium $\zz$. Then, clearly, for any $i \in L$ it must hold that $N_i(\zz, \sss) = L\setminus \{i\} \cup \{m\}$, and, therefore, $I_i(\zz, \sss) = [0, z_m]$. Similarly, for any $i \in R$ we have $N_i(\zz, \sss) = R \setminus \{i\} \cup \{m\}$, and $I_i(\zz, \sss) = [z_m, 2]$. Now, concerning player $m$, if all her neighbors are in $L$, then, it holds that $I_m(\zz, \sss) = [z_i, s_m]$ for some $i\in L$. But then, observe that even though the intervals defined above exhibit the structure described in Lemma \ref{lem:small-chain}, the belief vector $\sss$ does not satisfy the corresponding necessary conditions of that lemma as $1-\epsilon > 3/4$; hence, $\zz$ is not a pure Nash equilibrium. The same reasoning applies in case all of $m$'s neighbors are in $R$.

It remains to consider the case where $m$ has at least one neighbor in each of $L$ and $R$. By the definition of $I_i(\zz, \sss)$ for $i \in L \cup R$, as stated above, Lemma \ref{lem:equilibrium} implies that $z_i = z_m/2$ for any $i\in L$, while $z_i = 1+z_m/2$ for any $i\in R$. Then, Lemma \ref{lem:betw} implies that $z_m/2\leq s_m = 1-\epsilon$ and $1+z_m/2 \geq s_m$, and, consequently, $I_m(\zz, \sss) = [z_m/2, 1+z_m/2]$. Again, by Lemma \ref{lem:equilibrium} we have that $z_m = \frac{z_m/2+1+z_m/2}{2}$, i.e., $z_m = 1$. But then, we obtain $z_i= 1/2$ for any $i\in L$ and $z_i = 3/2$ for any $i\in R$, which implies that all $k$ players in $L$ are strictly closer to $s_m$ than any player in $R$; this contradicts the assumption that $m$ has neighbors in both $L$ and $R$.
\end{proof}

An example of the construction used in the proof of Theorem 6 is presented in Figure~\ref{fig:no-pure}.

\begin{figure}[h]
\begin{subfigure}{\textwidth}
\centering
\begin{tikzpicture}[xscale=1, yscale=1]
  \filldraw (0,0) -- (4.5,0) -- (10,0);

  \filldraw (0,-0.1) node[align=center, below] {\small $0$};
  \filldraw (0,0.1) node[align=center, above] {\small $[k]$};

  \filldraw (4.5,-0.1) node[align=center, below] {\small $1-\epsilon$};
  \filldraw (4.5,0.1) node[align=center, above] {\small $[1]$};

  \filldraw (10,-0.1) node[align=center, below] {\small $2$};
  \filldraw (10,0.1) node[align=center, above] {\small $[k]$};

  \filldraw ([xshift=-2pt,yshift=-2pt]0,0) rectangle ++(4pt,4pt);
  \filldraw ([xshift=-2pt,yshift=-2pt]4.5,0) rectangle ++(4pt,4pt);
  \filldraw ([xshift=-2pt,yshift=-2pt]10,0) rectangle ++(4pt,4pt);
\end{tikzpicture}
\caption{ \ }
\end{subfigure}
\begin{subfigure}{\textwidth}
\centering
\begin{tikzpicture}[xscale=1, yscale=1]
  \filldraw (0,0) -- (10,0);

  \filldraw (0,-0.1) node[align=center, below] {\small $0$};

  \filldraw (4.5,-0.1) node[align=center, below] {\small $1-\epsilon$};

  \filldraw (10,-0.1) node[align=center, below] {\small $2$};

  \filldraw ([xshift=-2pt,yshift=-2pt]0,0) rectangle ++(4pt,4pt);
  \filldraw ([xshift=-2pt,yshift=-2pt]4.5,0) rectangle ++(4pt,4pt);
  \filldraw ([xshift=-2pt,yshift=-2pt]10,0) rectangle ++(4pt,4pt);

  \filldraw (2.6125,0) circle(1pt);
  \filldraw (2.6125,-0.1) node[align=center,below] {\small $x/2$};

  \filldraw (5.25,0) circle(1pt);
  \filldraw (5.25,-0.15) node[align=center,below] {\small $x$};

  \filldraw (7.625,0) circle(1pt);
  \filldraw (7.625,-0.05) node[align=center,below] {\small $1+x/2$};

  \draw[-stealth,thick] (0,0) to[bend left=20] (2.6125,0);
  \draw[-stealth,thick] (4.5,0) to[bend left=30] (5.25,0);
  \draw[-stealth,thick] (10,0) to[bend right=20] (7.625,0);

\end{tikzpicture}
\caption{ \ }
\end{subfigure}
\caption{(a) The $k$-COF game considered in the proof of Theorem~\ref{thm:no-pure} where the $k$ players of set $L$ have belief $0$, player $m$ has $s_m = 1-\epsilon$ and the $k$ players of set $R$ have belief $2$. (b) Lemma \ref{lem:small-chain} implies that there is no pure Nash equilibrium where $m$ has neighbors in strictly one of $L$, $R$. In the remaining case, it must hold that $x = 1$, but then all players in $L$ are strictly closer to $s_m$ than any player in $R$.}
\label{fig:no-pure}
\end{figure}
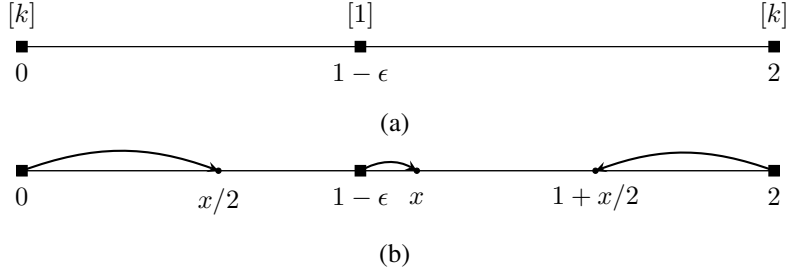

\subsection{Price of stability}
We will now prove that the price of stability of $k$-COF games is strictly higher than $1$, i.e., there exist games without any efficient pure Nash equilibria (even when they exist). In particular, for any value of $k$ we show that there exist rather simple games with price of stability in $\Omega(k)$.

\begin{theorem}\label{thm:pos-k}
The price of stability of $k$-COF games, for $k\geq 3$, is at least $(k+1)/3$.
\end{theorem}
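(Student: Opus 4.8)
The plan is to exhibit one specific $k$-COF game, on $n=k+1$ players, whose unique pure Nash equilibrium has social cost exactly $(k+1)/3$ times the optimum. I would place $k$ players at belief $0$ and one further player $p$ at belief $1$. The point of using exactly $k+1$ players is that then $N_i(\zz,\sss)\cup\{i\}$ is forced to be the whole player set for every $i$ and every opinion vector $\zz$, so Lemma~\ref{lem:nei} holds trivially and the only freedom is in the opinions.

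First I would upper bound the optimal social cost. In the all-zero opinion vector, $p$'s interval is $[0,1]$, so her cost is $1$, while every belief-$0$ player has cost $0$; hence $\SC(\zz^*(\sss),\sss)\le 1$. (This is in fact optimal: for any $\zz$, player $p$ costs at least $|z_p-1|$, and since $p$ is a neighbor of every belief-$0$ player $j$, player $j$ costs at least $\frac{1}{2}(|z_j|+|z_j-z_p|)\ge |z_p|/2$; minimizing $|z_p-1|+k|z_p|/2$ over $z_p\in\R$ gives $1$ whenever $k\ge 2$. Only the bound $\le 1$ is needed for the theorem.)

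Next I would characterize the pure Nash equilibria. By Lemma~\ref{lem:mon} together with the ordering convention we may assume $z_1\le\cdots\le z_{k+1}$, with $p=k+1$ rightmost. Applying Lemma~\ref{lem:betw} to player $1$ forces $z_1\ge 0$, and applying it to player $p$ forces $z_{k+1}\le 1$, so all opinions lie in $[0,1]$. Consequently each belief-$0$ player has interval $[0,z_{k+1}]$, so by Lemma~\ref{lem:equilibrium} all of them take opinion $z_{k+1}/2$, while $p$ has interval $[z_1,1]$, giving $z_{k+1}=(z_1+1)/2$. Solving yields the unique candidate $z_j=1/3$ for $j\le k$ and $z_{k+1}=2/3$; since no deviation changes any player's neighborhood, each player is simply choosing the midpoint of an interval she cannot shorten, so this candidate is indeed an equilibrium. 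All $k+1$ of its defining intervals have length $2/3$, so its social cost is $(k+1)/3$, and therefore $\PoS(\sss)=\frac{(k+1)/3}{1}=\frac{k+1}{3}$, which exceeds $1$ exactly when $k\ge 3$.

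I expect the only slightly delicate step to be the equilibrium characterization: invoking Lemma~\ref{lem:betw} to confine every opinion to $[0,1]$ (so that the fixed-point conditions of Lemma~\ref{lem:equilibrium} reduce to the simple linear system above), and checking that the ties among the $k$ coincident belief-$0$ players are harmless, so that the conclusion holds no matter how ties are resolved. The remaining computations are routine arithmetic.
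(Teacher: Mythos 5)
Your proposal is correct and follows essentially the same route as the paper: the same instance ($k$ players at belief $0$, one at belief $1$), the same upper bound of $1$ on the optimum via the all-zero opinion vector, and the same use of Lemmas~\ref{lem:equilibrium} and~\ref{lem:betw} to pin down the unique equilibrium $(1/3,\dots,1/3,2/3)$ with social cost $(k+1)/3$. The extra verifications you include (exact optimality of the all-zero vector, tie-breaking harmlessness) are sound but not needed.
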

\begin{proof}
Consider a $k$-COF game with $k+1$ players, where $k$ of them have belief $0$, while the remaining one has belief $1$. Let $\tilde{\zz}$ be the opinion vector where each player has opinion $0$. Clearly, $\SC(\tilde{\zz}, \sss) = 1$, and, hence the optimal social cost is at most $1$.

Now, consider any pure Nash equilibrium $\zz$. Since, there are $k+1$ players, the neighborhood of each player includes all remaining ones. Let $x$ be the opinion that the player with belief $1$ expresses at $\zz$. By Lemma \ref{lem:betw}, we have that $x \in [0,1]$, and by Lemma \ref{lem:equilibrium}, we have that all remaining players must have opinion $x/2$. Therefore, again by Lemma \ref{lem:equilibrium}, $x$ must satisfy the equation $x = (1+x/2)/2$, i.e., $x=2/3$. Therefore, there exists a single pure Nash equilibrium $\zz$ where all players with belief $0$ have opinion $1/3$ and the single player with belief $1$ has opinion $2/3$, and we obtain $\SC(\zz) = (k+1)/3$ which implies the theorem.
\end{proof}

Clearly, the above result states the inefficiency of the best pure Nash equilibrium only when $k\geq 3$. For the remaining cases where $k\in \{1,2\}$ we present slightly more complicated instances, where the proofs rely on Lemma \ref{lem:small-chain}. Recall that, for $1$-COF games, $\sigma(i)$ denotes the single neighbor of player $i$.

\begin{theorem}\label{thm:pos}
The price of stability of $1$-COF games is at least $17/15$.
\end{theorem}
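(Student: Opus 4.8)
The plan is to exhibit one explicit $1$-COF game, show that its optimum is cheap, and prove that every pure Nash equilibrium of it is at least a $17/15$ factor more expensive; the theorem then follows by taking the supremum of $\PoS(\sss)$ over belief vectors. The instance should be an \emph{asymmetric} one on only a handful of players --- two or three beliefs forming a tight block plus one or two players lying apart from it, in the spirit of Example~\ref{ex:prelim-example} --- with the belief values tuned so that the efficient opinion profile realizes exactly the chain configuration of Lemma~\ref{lem:small-chain} while \emph{narrowly} violating that lemma's necessary inequality. Asymmetry is essential here: for symmetric instances the socially optimal profile turns out to be an equilibrium, so $\PoS=1$.

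I would first write down the efficient profile $\tilde{\zz}$ explicitly --- the one in which the block players split into cheap mutual pairs and each outside player $i$ expresses the midpoint between $s_i$ and the nearest block opinion --- and compute $\SC(\tilde{\zz},\sss)$; since $\SC(\zz^*(\sss),\sss)\le\SC(\tilde{\zz},\sss)$, this suffices and I never need to identify the true optimum. Then I would argue that $\tilde{\zz}$, and indeed every profile inducing the same neighbourhood pattern, fails to be a pure Nash equilibrium: the three consecutive players forming the ``cheap pair plus hanger'' satisfy $I_a=[s_a,z_b]$, $I_b=[z_a,s_b]$, $I_c=[z_b,s_c]$ (or the mirror image), so Lemma~\ref{lem:small-chain} would force $s_b\le\tfrac{5s_a+3s_c}{8}$, which is exactly the inequality the beliefs are chosen to violate.

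Next I would enumerate all pure Nash equilibria. By Lemmas~\ref{lem:equilibrium} and~\ref{lem:mon}, in any equilibrium the opinions are non-decreasing and every player sits at the midpoint of her interval; by Lemmas~\ref{lem:nei} and~\ref{lem:betw}, each player's neighbour is one of her two index-neighbours and each opinion lies between the extreme beliefs of its block. This leaves only a constant number of candidate neighbourhood patterns; for each, the midpoint equations form a small linear system with a unique solution, and Lemma~\ref{lem:small-chain} discards the ones that are geometrically impossible --- in particular the efficient pattern. I would check that at least one pattern survives (so $\PNE(\sss)\neq\emptyset$ and $\PoS(\sss)$ is finite), compute the social cost of each surviving equilibrium, take the minimum, and verify that its ratio to $\SC(\tilde{\zz},\sss)$ is at least $17/15$.

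The main obstacle is the simultaneous balancing needed to make this go through: the belief values must be picked so that (i) the efficient chain profile is squeezed out of the equilibrium set by Lemma~\ref{lem:small-chain}, (ii) some (necessarily costlier) equilibrium still exists, and (iii) the cheapest such equilibrium exceeds $\SC(\tilde{\zz},\sss)$ by a full $17/15$ factor. Relaxing any one of these requirements collapses the ratio back toward $1$, so the instance has to be tuned with care --- this is precisely where the structural constraints of Lemma~\ref{lem:small-chain}, rather than elementary estimates, do the work.
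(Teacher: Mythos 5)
Your outline follows the same skeleton as the paper's proof: exhibit an instance, write down a cheap profile $\tilde{\zz}$ to upper-bound the optimum, use Lemma~\ref{lem:small-chain} (plus the other structural lemmas) to pin down the set of pure Nash equilibria, and compare. But as written there is a genuine gap: you never produce the instance. For a theorem whose entire content is the specific constant $17/15$, the construction \emph{is} the proof, and you explicitly defer the ``simultaneous balancing'' of the belief values --- which is exactly the part that cannot be waved at. Nothing in your argument certifies that an instance satisfying your three requirements (i)--(iii) with ratio exactly $17/15$ exists, nor where the number $17/15$ comes from.

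Two further concrete problems. First, your guiding heuristic is wrong: you assert that asymmetry is essential because ``for symmetric instances the socially optimal profile turns out to be an equilibrium, so $\PoS=1$.'' The paper's witness is the six-player belief vector $\sss=(0,\,5-3\lambda,\,8,\,15,\,18+3\lambda,\,23)$, which is mirror-symmetric about $11.5$; its near-optimal profile is symmetric and is \emph{not} an equilibrium, and the unique equilibrium costs $34/3-4\lambda$ against an optimum of at most $10+12\lambda$. So the instance family you propose to search (a tight block plus one or two outliers, as in Example~\ref{ex:prelim-example}) is guided by a false premise and may not contain a $17/15$ witness at all. Second, your pruning step leans too hard on Lemma~\ref{lem:small-chain}: that lemma only rules out the two specific chain configurations in its statement. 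In the paper's instance it eliminates the patterns with $\sigma(j-1)=j=\sigma(j+1)^{-1}$ for $j\in\{2,5\}$, but establishing uniqueness of the equilibrium still requires a separate ad hoc argument (showing $\sigma(2)=3$ forces $z_3'-s_2>s_2-z_1'$, a contradiction). Your ``enumerate patterns, solve the linear system, check consistency'' fallback would work in principle --- it is essentially Algorithm~\ref{alg:ConstructGraph} --- but only once you have a concrete instance to run it on.
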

\begin{proof}
We use the following $1$-COF game with six players and belief vector
$$\sss = (0, 5-3\lambda, 8, 15, 18+3\lambda, 23),$$
where $\lambda\in(0,1/4)$.

Consider the opinion vector
$$\tilde{\zz} = (3-\lambda, 6-2\lambda, 7-6\lambda, 16+6\lambda, 17+2\lambda, 20+\lambda).$$
It can be easily seen that it has social cost
$\SC(\tilde{\zz}, \sss) = 10+12\lambda$. So, clearly, $\SC(\zz^*, \sss)\leq 10+12\lambda$ for any optimal opinion vector $\zz^*$.

Now, consider the opinion vector
$$\zz = \left(\frac{5-3\lambda}{3}, \frac{10-6\lambda}{3}, \frac{31}{3}, \frac{38}{3}, \frac{59+6\lambda}{3}, \frac{64+3\lambda}{3}\right)$$
with social cost $\SC(\zz, \sss) = 34/3-4\lambda$. It is not hard to verify (by showing, as Lemma \ref{lem:equilibrium} requires, that each opinion lies in the middle of its player's interval) that $\zz$ is a pure Nash equilibrium; we argue that this equilibrium is unique.

We claim that, by Lemma \ref{lem:small-chain}, there cannot be a pure Nash equilibrium where both $\sigma(j-1) = j$ and $\sigma(j+1) = j$ for any $j\in \{2,5\}$. To see this, assume otherwise and note that the corresponding intervals satisfy the conditions of the lemma. However, by observing the belief vector $\sss$, it holds that $\frac{5s_{j-1}+3s_{j+1}}{8}<s_j<\frac{3s_{j-1}+5s_{j+1}}{8}$, for $j\in\{2,5\}$, i.e., $\sss$ does not satisfy the conditions of Lemma \ref{lem:small-chain}; this contradicts our original assumption.

The above observation, together with Lemma \ref{lem:mon}, implies that $\sigma(1) = 2$, $\sigma(3) =4$, $\sigma(4)=3$ and $\sigma(6)=5$ in any equilibrium. This leaves only $\sigma(2)\in\{1,3\}$ and $\sigma(5)\in\{4,6\}$ undefined.

Consider an equilibrium $\zz'$ with $\sigma(2) = 3$; the case $\sigma(5)=4$ is symmetric. Since $\sigma(3) = 4$, Lemma \ref{lem:betw} implies that $z'_3>s_3=8$ and, hence
\begin{align}\label{eq:pos-1}
z'_3-s_2>3+3\lambda.\end{align}
Since $\sigma(1) = 2$, $\sigma(2) =3$ and $z'_1 = \frac{s_1+z'_2}{2}$, Lemma \ref{lem:betw} implies that $z'_2>s_2$ and we obtain that $z'_1>\frac{5-3\lambda}{2}$ and, hence,
\begin{align}\label{eq:pos-2}
s_2-z'_1<\frac{5-3\lambda}{2}.\end{align}
By inequalities (\ref{eq:pos-1}) and (\ref{eq:pos-2}), we get $z'_3-s_2>s_2-z'_1$, which contradicts our assumption that $\sigma(2) = 3$. So, it must hold that $\sigma(2)=1$ (and, respectively, $\sigma(5) =6$) which implies that $\zz$ is the unique pure Nash equilibrium.

We conclude that the price of stability is lower-bounded by
$$\frac{\SC(\zz, \sss)}{\SC(\zz^*, \sss)}=\frac{34/3-4\lambda}{10+12\lambda},$$
and the theorem follows by taking $\lambda$ to be arbitrarily close to $0$.
\end{proof}

\begin{theorem}\label{thm:pos-2}
The price of stability of $2$-COF games is at least $8/7$.
\end{theorem}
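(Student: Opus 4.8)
The plan is to mimic the structure of the proof of Theorem~\ref{thm:pos}: exhibit a concrete $2$-COF game, parametrized by a small positive constant $\lambda$, in which the essentially unique pure Nash equilibrium is forced to be inefficient. First I would fix a belief vector $\sss$ with a handful of players (some possibly sharing beliefs, so that the $k=2$ neighborhoods line up cleanly) together with an explicit opinion vector $\tilde{\zz}$ whose social cost I can compute directly; this gives an upper bound $\SC(\zz^*,\sss)\le \SC(\tilde{\zz},\sss)$ on the optimal cost. The belief vector should be designed with the ``compromise chain'' structure in mind, so that in any equilibrium the relevant intervals have one of the shapes covered by Lemma~\ref{lem:small-chain} --- namely $I_a=[s_a,z_b]$, $I_b=[s_b,z_c]$, $I_c=[z_b,s_c]$, or its mirror image --- while the numerical gaps between consecutive beliefs are chosen to \emph{violate} that lemma's necessary conditions $\tfrac{5s_a+3s_c}{8}\le s_b\le\tfrac{3s_a+5s_c}{8}$.

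Next I would enumerate the pure Nash equilibria. By Lemma~\ref{lem:mon} the opinions are sorted, and by Lemma~\ref{lem:nei} each player's neighborhood together with herself forms a block of $3$ consecutive indices, which reduces the candidate neighborhood profiles to a finite list. For each still-ambiguous player I would argue, exactly as in Theorem~\ref{thm:pos}, that the ``symmetric'' option --- where some player $j$ is simultaneously the common nearest companion of the players on its left and on its right --- is killed by Lemma~\ref{lem:small-chain}, precisely because $\sss$ is engineered to fail that lemma's inequalities. The remaining cases, where a block ``leans'' consistently toward one side, are eliminated using Lemma~\ref{lem:betw} together with the midpoint identities of Lemma~\ref{lem:equilibrium}: expressing each equilibrium opinion as the midpoint of its interval yields a small linear system whose solution either contradicts $s_{\ell(i)}\le z_i\le s_{r(i)}$ or the ordering, leaving a single surviving profile. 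Solving the midpoint equations for that profile produces the equilibrium vector $\zz$ and hence $\SC(\zz,\sss)$ in closed form; the ratio $\SC(\zz,\sss)/\SC(\tilde{\zz},\sss)$ is then a quotient of two affine functions of $\lambda$, and letting $\lambda\to0$ gives the claimed bound $8/7$.

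I expect the main obstacle to be the case analysis in the second step. With $k=2$ each player selects two neighbors, so there are many more candidate neighborhood profiles than in the $1$-COF analysis of Theorem~\ref{thm:pos}, and the gaps in $\sss$ must be tuned so that \emph{every} profile except the intended one is eliminated by one of Lemmas~\ref{lem:mon}, \ref{lem:nei}, \ref{lem:betw}, or \ref{lem:small-chain} --- and robustly so, independently of how ties in the neighborhood definition are resolved (as in the remark following Lemma~\ref{lem:nei}). A secondary subtlety is to be sure the slack between $\SC(\tilde{\zz},\sss)$ and the true optimum is small enough not to spoil the ratio; in practice this only requires a clean lower bound on $\SC(\zz^*,\sss)$, e.g.\ by summing per-player cost lower bounds forced by the belief gaps, rather than an exact computation of $\zz^*$.
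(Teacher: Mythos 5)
Your plan is essentially the paper's proof: exhibit a concrete instance, upper-bound the optimum by an explicit opinion vector, and use Lemma~\ref{lem:small-chain} to rule out the neighborhood profile in which one extreme player is the common second neighbor of both middle players, leaving a single profile whose midpoint equations (Lemma~\ref{lem:equilibrium}) determine the equilibrium and its cost. The only differences are that the paper's instance needs no parameter $\lambda$ or limiting argument --- it uses four players with $\sss=(0,1,1,2)$, the vector $\tilde{\zz}=(1,1,1,3/2)$ of cost $3/2$, and the surviving equilibrium $\zz=(4/7,6/7,8/7,10/7)$ of cost $12/7$, giving exactly $8/7$ --- and the case analysis is far smaller than you fear, since with only four players the neighborhoods of the two extreme players are completely forced and only the second neighbor of each middle player remains to be determined.
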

\begin{proof}
Consider a $2$-COF game with four players $a$, $b$, $c$, and $d$, with belief vector $\sss = (0, 1, 1, 2)$. Let $\tilde{\zz} = (1,1,1,3/2)$ be an opinion vector and observe that $SC(\tilde{\zz}, \sss) = 3/2$; note that $\tilde{\zz}$ is not a pure Nash equilibrium as player $a$ has an incentive to deviate. Clearly, the optimal social cost is at most $3/2$.

Now consider any pure Nash equilibrium $\zz$. By the structural properties of equilibria, $N_a(\zz, \sss) = N_d(\zz, \sss) = \{b,c\}$, while $b \in N_c(\zz, \sss)$ and $c \in N_b(\zz, \sss)$. It remains to argue about the second neighbor of $b$ and $c$. We distinguish between two cases depending on whether $b$ and $c$ have a common second neighbor in $\{a, d\}$ or not.

In the first case, let $a$ be the common neighbor; the case where $d$ is that neighbor is symmetric. By Lemma \ref{lem:equilibrium}, we have that $z_b = z_c = (1+z_a)/2$. Then, we have that $I_a(\zz, \sss) = [0, z_b]$, $I_b(\zz, \sss) = [z_a, 1]$, and $I_d(\zz, \sss) = [z_b, 2]$. Note that by applying Lemma \ref{lem:small-chain} on players $a$, $b$, and $d$, we obtain a contradiction to the fact that $\zz$ is a pure Nash equilibrium.

In the second case, without of loss of generality, let $N_b(\zz, \sss) = \{a, c\}$ and $N_c(\zz, \sss) = \{b, d\}$ which, by Lemma \ref{lem:betw}, imply that $z_b \in [0, 1]$ and $z_c \in [1, 2]$. Then, Lemma \ref{lem:equilibrium} yields $z_a = z_c/2$, $z_b = (z_a+z_c)/2$, $z_c = (z_b+z_d)/2$, and $z_d = 1+z_b/2$. By solving this system of equations, we obtain that $\zz = (4/7, 6/7, 8/7, 10/7)$ and, hence, $\SC(\zz) = 12/7$.
\end{proof}

\section{Complexity of equilibria}\label{sec:algorithm}
In this section we focus entirely on $1$-COF games. We present a polynomial-time algorithm that determines whether such a game admits pure Nash equilibria, and, in case it does, allows us to compute the best and worst pure Nash equilibrium with respect to the social cost. We do so by establishing a correspondence between pure Nash equilibria and source-sink paths in a suitably defined directed acyclic graph. See Example~\ref{ex:algo-example} below for an instance execution of the following procedure.

Assume that we are given neighborhood information according to which each player $i$ has either player $i-1$ or player $i+1$ as neighbor. From Lemma \ref{lem:nei}, such a neighborhood structure is necessary in a pure Nash equilibrium. We claim that this information is enough in order to decide whether there is a consistent opinion vector that is a pure Nash equilibrium or not. All we have to do is to use Lemma \ref{lem:equilibrium} and obtain $n$ equations that relate the opinion of each player to her belief and her neighbor's opinion. These equations have a unique solution which can then be verified whether it indeed satisfies the neighborhood conditions or not. So, the main idea of our algorithm is to cleverly search among all possible neighborhood structures that are not excluded by Lemma \ref{lem:nei} for one that defines a pure Nash equilibrium.

For integers $1\leq a\leq b<c\leq n$, let us define the {\em segment} $C(a,b,c)$ to be the set of players $\{a, a+1, ..., c\}$ together with the following neighborhood information for them: $\sigma(p)=p+1$ for $p=a, ..., b$ and $\sigma(p)=p-1$ for $p=b+1, ..., c$. It can be easily seen that the neighborhood information for all players at a pure Nash equilibrium can always be decomposed into disjoint segments. Importantly, given the neighborhood information in segment $C(a,b,c)$ and the beliefs of its players, the opinions they could have in any pure Nash equilibrium that contains this segment are uniquely defined using Lemma \ref{lem:equilibrium}. In particular, the opinions of the players within a segment $C(a,b,c)$ are computed as follows. First, we set $z_b = s_b+\frac{s_{b+1}-s_b}{3}$ and $z_{b+1} = s_b+\frac{2(s_{b+1}-s_b)}{3}$. Then, we set $z_p = \frac{s_p+z_{p+1}}{2}$ if $a\leq p <b$, and $z_p = \frac{s_p+z_{p-1}}{2}$ if $b<p\leq c$.

We remark that the opinion vector implied by a segment is not necessarily consistent to the given neighborhood structure. So, we call segment $C(a,b,c)$ {\em legit} if $a\not=2$, $c\not=n-1$ (so that it can be part of a decomposition) and the uniquely defined opinions are consistent to the neighborhood information of the segment, i.e., if $|z_{\sigma(p)} - s_p| \leq |z_{p'} - s_p|$ for any pair of players $p, p'$ (with $p\neq p'$) in $C(a,b,c)$. This process appears in Algorithm~\ref{alg:Segment}.

A decomposition of neighborhood information for all players will consist of consecutive segments $C(a_1,b_1,c_1)$, $C(a_2,b_2,c_2)$, ..., $C(a_t,b_t,c_t)$ so that $a_1=1$, $c_t=n$, $a_\ell=c_{\ell-1}+1$ for $\ell=2, ..., t$. Such a decomposition will yield a pure Nash equilibrium if it consists of legit segments and, furthermore, the uniquely defined opinions of players in consecutive segments are consistent to the neighborhood information.

In particular, consider the directed graph $G$ that has two special nodes designated as the source and the sink, and a node for each legit segment $C(a,b,c)$. Note that $G$ has $\bigO(n^3)$ nodes. The source node is connected to all segment nodes $C(1,b,c)$ while all segment nodes $C(a,b,n)$ are connected to the sink. An edge from segment node $C(a,b,c)$ to segment node $C(a',b',c')$ exists if $a'=c+1$ and the uniquely defined opinions of players in the two segments are consistent to the neighborhood information in both of them. This consistency test has to check
\begin{enumerate}
\item whether the leftmost opinion $z_{a'}$ in segment $C(a',b',c')$ is indeed further away from the belief $s_c$ of player $c$ than the opinion $z_{c-1}$ of the designated neighbor of $c$ in segment $C(a,b,c)$, i.e., $|z_{c-1}-s_c|\leq |z_{a'}-s_c|$, and whether
\item whether the rightmost opinion $z_c$ in segment $C(a,b,c)$ is further away from the belief $s_{a'}$ of player $a'$ than the opinion $z_{a'+1}$ of the designated neighbor of $a'$ in segment $C(a',b',c')$, i.e., $|z_{a'+1}-s_{a'}|\leq |z_c-s_{a'}|$.
\end{enumerate}
By the definition of segments and of its edges, $G$ is acyclic. This process appears in Algorithm~\ref{alg:ConstructGraph}.

Based on the discussion above, there is a bijection between pure Nash equilibria and source-sink paths in $G$. In addition, we can assign a weight to each node of $G$ that is equal to the total cost of the players in the corresponding segment, i.e.,
$$\weight({C(a,b,c)}) = \sum_{a\leq p \leq c}{|z_p-s_p|}.$$
Then, the total weight of a source-sink path ${\cal P}$ is equal to the social cost of the corresponding pure Nash equilibrium, i.e,
$$\SC(\zz,\sss) = \sum_{C(a,b,c) \in {\cal P}}{\weight({C(a,b,c)})}.$$

Hence, standard algorithms for computing shortest or longest paths in directed acyclic graphs can be used not only to detect whether a pure Nash equilibrium exists, but also to compute the equilibrium of best or worst social cost.

\begin{theorem}\label{thm:complexity}
Given a $1$-COF game, deciding whether a pure Nash equilibrium exists can be done in polynomial time. Furthermore, computing a pure Nash equilibrium of highest or lowest social cost can be done in polynomial time as well.
\end{theorem}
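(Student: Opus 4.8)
The plan is to show that the directed acyclic graph $G$ constructed in Algorithm~\ref{alg:ConstructGraph} faithfully encodes the pure Nash equilibria of the given $1$-COF game, and then to invoke standard shortest/longest path algorithms on a DAG. The argument has three components: (i) establishing the bijection between source-sink paths in $G$ and pure Nash equilibria; (ii) verifying that $G$ can be built in polynomial time; and (iii) verifying that the weight of a source-sink path equals the social cost of the corresponding equilibrium, so that optimizing over paths optimizes over equilibria.

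First I would prove the bijection. Given a pure Nash equilibrium $\zz$, Lemma~\ref{lem:nei} (in the $k=1$ specialization) forces each player $i$ to have neighbor $i-1$ or $i+1$; reading off these choices from left to right partitions $[n]$ into maximal blocks where, within a block, an initial run of players point right ($\sigma(p)=p+1$) followed by a terminal run pointing left ($\sigma(p)=p-1$) — this is exactly a decomposition into segments $C(a_1,b_1,c_1),\dots,C(a_t,b_t,c_t)$ with $a_1=1$, $c_t=n$, and $a_\ell=c_{\ell-1}+1$. One must check two things: that the change of direction happens exactly once per block (so the block really is a single segment), which follows because a player pointing right cannot be immediately followed by a player pointing left unless that is the seam $b_\ell/b_\ell+1$ — otherwise Lemma~\ref{lem:mon} and the equilibrium condition would be violated; and that $a_\ell\neq 2$ and $c_\ell\neq n-1$ for each $\ell$, i.e. no segment boundary falls one step inside the extremes (this is where the ``legit'' side conditions $a\neq 2$, $c\neq n-1$ come from: a segment ending at $n-1$ would leave player $n$ as a singleton with no valid neighbor structure, and symmetrically on the left). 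Lemma~\ref{lem:equilibrium} then pins down the opinions uniquely from the neighborhood data via the recurrences stated before the theorem, so the equilibrium determines, and is determined by, its segment decomposition together with the inter-segment consistency; the latter is precisely what the edge-existence test (items 1 and 2) and the ``legit'' intra-segment consistency test enforce. Conversely, any source-sink path in $G$ assembles a neighborhood structure whose induced opinions satisfy Lemma~\ref{lem:equilibrium} at every player and satisfy the ``$\sigma(i)$ is the closest opinion'' condition at every player (the within-segment checks handle all pairs inside a segment, and the edge checks handle the two critical cross-boundary comparisons, which by Lemma~\ref{lem:mon} dominate all other cross-segment comparisons), hence is a pure Nash equilibrium.

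Next, the polynomial-time claims. The graph has $\bigO(n^3)$ nodes, one per triple $(a,b,c)$; testing whether a segment is legit costs $\bigO(n^2)$ (checking $|z_{\sigma(p)}-s_p|\le |z_{p'}-s_p|$ over all pairs $p,p'$ in the segment, after computing the segment's opinions in $\bigO(n)$ time via the recurrences), and testing each of the $\bigO(n^4)$ candidate edges costs $\bigO(1)$. So $G$ is built in polynomial time. Since $G$ is acyclic (edges strictly increase the right endpoint $c$), a single topological-order sweep computes, for the node-weights $\weight(C(a,b,c))=\sum_{a\le p\le c}|z_p-s_p|$, the minimum-weight and maximum-weight source-sink paths — or reports that no source-sink path exists, which by the bijection means no pure Nash equilibrium exists. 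Finally, because the segments in a decomposition are vertex-disjoint and cover $[n]$, the path weight $\sum_{C(a,b,c)\in\cal P}\weight(C(a,b,c))$ equals $\sum_{i=1}^n |z_i-s_i|=\SC(\zz,\sss)$ (using Lemma~\ref{lem:equilibrium}, which makes each player's cost equal to half her interval length, i.e. $|z_i-s_i|$); hence the shortest and longest source-sink paths yield the best and worst equilibria.

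The main obstacle I anticipate is item (i), and specifically making airtight the claim that the edge-consistency test (comparing $z_c$ only against $z_{a'+1}$, and $z_{a'}$ only against $z_{c-1}$) suffices to certify the global ``$k$ nearest'' condition across a segment boundary. One must argue that for player $a'$, among all players outside her own segment the candidate whose opinion could possibly beat her designated neighbor $a'+1$ is the nearest one to the right-of-her-belief side, namely $c$; this uses Lemma~\ref{lem:mon} (opinions are sorted) together with $s_{a'}$ lying inside $[s_{a'},z_{a'+1}]$'s... more carefully, that $z_{a'}\le s_{a'}\le z_{a'+1}$ by Lemma~\ref{lem:betw}, so the only dangerous competitor from the left is the nearest one, $z_c$; symmetrically for player $c$. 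I would also need to double-check the boundary bookkeeping (the roles of $a\neq 2$ and $c\neq n-1$, and that the source connects to $C(1,b,c)$ and $C(a,b,n)$ to the sink exactly captures the endpoints). The remaining steps are routine.
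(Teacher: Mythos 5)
Your proposal follows essentially the same route as the paper: decompose the equilibrium neighborhood structure (forced by Lemma~\ref{lem:nei}) into segments, use Lemma~\ref{lem:equilibrium} to pin down the unique opinions per segment, encode legit segments and cross-boundary consistency as nodes and edges of a DAG whose source-sink paths are in bijection with pure Nash equilibria, and run shortest/longest-path on the node-weighted DAG. The details you flag as needing care (the role of $a\neq 2$, $c\neq n-1$, and why the two local cross-boundary comparisons suffice via the monotonicity of opinions from Lemmas~\ref{lem:mon} and~\ref{lem:betw}) are exactly the points the paper's construction relies on, and your sketched resolutions are correct.
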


\begin{algorithm}[p]
    \SetAlgoLined
    \KwIn{belief vector $\sss = (s_1, ..., s_n)$, parameters $a$, $b$, and $c$ such that $a\leq b<c$}
    \KwOut{opinion vector $\zz_{a:c} = (z_a, ..., z_c)$, segment weight, legit indicator}
	$\text{legit} := 0$ \\
    \If{$a\neq2$ or $c\neq n-1$}{
    	$\text{legit} := 1$\\
	    $z_b := s_b + \frac{1}{3}(s_{b+1}-s_b)$ \\
    $z_{b+1} := s_b + \frac{2}{3}(s_{b+1}-s_b)$ \\
    \For{$p:=b-1$ downto $a$}{
    	$z_p := \frac{1}{2}(s_p + z_{p+1})$
    }
    \For{$p:=b+2$ to $c$}{
    	$z_p := \frac{1}{2}(s_p + z_{p-1})$
    }
    \For{$p:=a+1$ to $b$}{
    	\If{$|z_{p-1} - s_p| < |z_{p+1} - s_p|$}{
    		$\text{legit} := 0$
    	}
    }
    \For{$p:=b+1$ to $c-1$}{
    	\If{$|z_{p+1} - s_p| < |z_{p-1} - s_p|$}{
    		$\text{legit} := 0$
    	}
    }
    $\text{weight} := 0$ \\
	\For{$p:=a$ to $c$}{
		$\text{weight} := \text{weight} +|z_p - s_p|$
	}
     	}
    return [$\zz_{a:c}$, weight, legit]
\caption{\texttt{Segment}}
\label{alg:Segment}
\end{algorithm}

\begin{algorithm}[p]
    \SetAlgoLined
    \KwIn{belief vector $\sss = (s_1, ..., s_n)$}
    \KwOut{a node-weighted directed acyclic graph $G$}
	
	$V := \emptyset$ \\
    \For{$a:=1$ to $n-1$}{
    \For{$b:=a$ to $n-1$}{
    \For{$c:=b+1$ to $n$}{
        [$\zz_{a:c}$, weight, legit] := \texttt{Segment}($\sss,a,b,c$) \\
        \If{legit $=1$}{
    		$C.a = a$, $C.b = b$, $C.c = c$, $C.\zz_{a:c} = \zz_{a:c}$, $C.weight = weight$ \\
    		$V := V \union C$
    	}
    }
    }
    }
	$V := V \union \{\text{source}, \text{sink}\}$\\
	$E := \emptyset$\\
	\For{$C \in V$}{
		\uIf{$C.a = 1$}{
		   $E := E \union (\text{source},C)$
		}
		\ElseIf{$C.c = n$}{
		   $E := E \union (C,\text{sink})$
		}
	}
	\For{all segment pairs $(C,D)$ such that $D.a = C.c +1$}{
		\If{$|C.z_{c-1} - s_{C.c}|\leq |D.z_{a} - s_{C.c}|$ and $|D.z_{a+1} - s_{D.a}|\leq |C.z_{c} - s_{D.a}|$}{
			$E := E \union (C,D)$
		}
	}
    return $G = (V,E)$
\caption{\texttt{ConstructGraph}}
\label{alg:ConstructGraph}
\end{algorithm}

\begin{example}\label{ex:algo-example}
Consider a $1$-COF game with four players with belief vector
$\sss = (0, 9, 12, 21)$. According to the discussion above, there are $10$ segments of the form $C(a,b,c)$ with $1\leq a\leq b< c\leq 4$, but it can be shown that only $3$ of them are legit; these are $C(1,1,2)$, $C(3,3,4)$ (see Figure~\ref{fig:algo-example}a), and $C(1,2,4)$ (see Figure~\ref{fig:algo-example}b). For example, segment $C(1,1,4)$, in which $\sigma(1)= 2$, $\sigma(2) = 1$, $\sigma(3) = 2$, and $\sigma(4) = 3$, corresponds to the opinion vector $(3,6,9,15)$. This is not consistent to the neighborhood information $\sigma(2)=1$ in the segment, as the belief of player $2$ coincides with the opinion of player $3$, while the opinion of player $1$ is further away. The resulting directed acyclic graph $G$ (see Figure \ref{fig:algo-example}c) implies that there exist two pure Nash equilibria for this $1$-COF game, namely the opinion vectors $(3,6,15,18)$ and $(5,10,11,16)$.
\end{example}

\begin{figure}[ht!]
\centering
\begin{subfigure}{\textwidth}
\centering
\begin{tikzpicture}[xscale=1, yscale=1]
  \filldraw (0,0) -- (4,0);
  \filldraw (6,0) -- (10,0);
  \filldraw[dashed] (4,0) -- (6,0);

  \filldraw (0,-0.1) node[align=center, below] {\small $0$};
  \filldraw (4,-0.1) node[align=center, below] {\small $9$};
  \filldraw (6,-0.1) node[align=center, below] {\small $12$};
  \filldraw (10,-0.1) node[align=center, below] {\small $21$};

  \filldraw (1.33,-0.1) node[align=center, below] {\small $3$};
  \filldraw (2.66,-0.1) node[align=center, below] {\small $6$};
  \filldraw (7.33,-0.1) node[align=center, below] {\small $15$};
  \filldraw (8.66,-0.1) node[align=center, below] {\small $18$};

  \filldraw (1.33,0) circle(1pt);
  \filldraw (2.66,0) circle(1pt);
  \filldraw (7.33,0) circle(1pt);
  \filldraw (8.66,0) circle(1pt);

  \filldraw ([xshift=-2pt,yshift=-2pt]0,0) rectangle ++(4pt,4pt);
  \filldraw ([xshift=-2pt,yshift=-2pt]4,0) rectangle ++(4pt,4pt);
  \filldraw ([xshift=-2pt,yshift=-2pt]6,0) rectangle ++(4pt,4pt);
  \filldraw ([xshift=-2pt,yshift=-2pt]10,0) rectangle ++(4pt,4pt);

  \draw[-stealth,thick] (0,0) to[bend left=30] (1.33,0);
  \draw[-stealth,thick] (4,0) to[bend right=30] (2.66,0);
  \draw[-stealth,thick] (6,0) to[bend left=30] (7.33,0);
  \draw[-stealth,thick] (10,0) to[bend right=30] (8.66,0);

\end{tikzpicture}
\caption{ \ }
\end{subfigure}
\begin{subfigure}{\textwidth}
\centering
\begin{tikzpicture}[xscale=1, yscale=1]
  \filldraw (0,0) -- (10,0);

  \filldraw (0,-0.1) node[align=center, below] {\small $0$};
  \filldraw (4,-0.1) node[align=center, below] {\small $9$};
  \filldraw (6,-0.1) node[align=center, below] {\small $12$};
  \filldraw (10,-0.1) node[align=center, below] {\small $21$};

  \filldraw (2.33,-0.1) node[align=center, below] {\small $5$};
  \filldraw (4.66,-0.1) node[align=center, below] {\small $10$};
  \filldraw (5.33,-0.1) node[align=center, below] {\small $11$};
  \filldraw (7.66,-0.1) node[align=center, below] {\small $16$};

  \filldraw (2.33,0) circle(1pt);
  \filldraw (4.66,0) circle(1pt);
  \filldraw (5.33,0) circle(1pt);
  \filldraw (7.66,0) circle(1pt);

  \filldraw ([xshift=-2pt,yshift=-2pt]0,0) rectangle ++(4pt,4pt);
  \filldraw ([xshift=-2pt,yshift=-2pt]4,0) rectangle ++(4pt,4pt);
  \filldraw ([xshift=-2pt,yshift=-2pt]6,0) rectangle ++(4pt,4pt);
  \filldraw ([xshift=-2pt,yshift=-2pt]10,0) rectangle ++(4pt,4pt);

  \draw[-stealth,thick] (0,0) to[bend left=20] (2.33,0);
  \draw[-stealth,thick] (4,0) to[bend left=30] (4.66,0);
  \draw[-stealth,thick] (6,0) to[bend right=30] (5.33,0);
  \draw[-stealth,thick] (10,0) to[bend right=20] (7.66,0);

\end{tikzpicture}
\caption{ \ }
\end{subfigure}
\begin{subfigure}{\textwidth}
\centering
\begin{tikzpicture}[xscale=1, yscale=1]

	\node[fill=white,draw,thick,rounded corners=2pt,minimum height=0.8cm,minimum width=1.3cm,font=\small] at (0,0) (source){source};
	
	\node[fill=white,draw,thick,rounded corners=2pt,minimum height=0.8cm,minimum width=1.3cm,font=\small] at ([xshift=18em,yshift=0em]source.center) (sink){sink};
	
	\node[fill=white,draw,thick,rounded corners=2pt,minimum height=0.8cm,minimum width=1.3cm,font=\small] at ([xshift=6em,yshift=2em]source.center) (v1){$C(1,1,2)$};
	
	\node[fill=white,draw,thick,rounded corners=2pt,minimum height=0.8cm,minimum width=1.3cm,font=\small] at ([xshift=12em,yshift=2em]source.center) (v2){$C(3,3,4)$};
	
	\node[fill=white,draw,thick,rounded corners=2pt,minimum height=0.8cm,minimum width=1.3cm,font=\small] at ([xshift=9em,yshift=-2em]source.center) (v3){$C(1,2,4)$};
		
	\draw[-stealth, thick] (source) to (v1);
	\draw[-stealth, thick] (v1) to (v2);
	\draw[-stealth, thick] (v2) to (sink);
	
	\draw[-stealth, thick] (source) to (v3);
	\draw[-stealth, thick] (v3) to (sink);

\end{tikzpicture}
\caption{ \ }
\end{subfigure}
\caption{The $1$-COF game considered in Example~\ref{ex:algo-example}. (a) The legit segments $C(1,1,2)$ and $C(3,3,4)$ which imply the opinion vector $(3,6,15,18)$. (b) The legit segment $C(1,2,4)$ which implies the opinion vector $(5,10,11,16)$. (c) The directed acyclic graph $G$ which shows that there exist two pure Nash equilibria in the game.}
\label{fig:algo-example}
\end{figure}
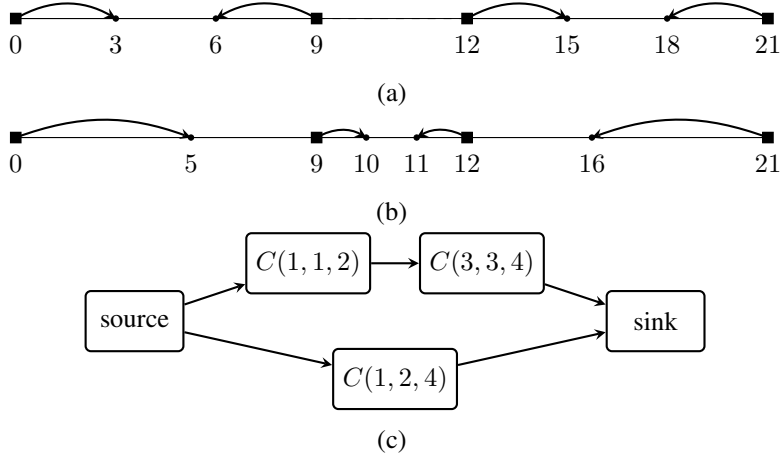

\section{Upper bounds on the price of anarchy}\label{sec:upper}
In this section we prove upper bounds on the price of anarchy of $k$-COF games. In our proof, we relate the social cost of any deterministic opinion vector, including optimal ones, to a quantity that depends only on the beliefs of the players and can be thought of as the cost of the truthful opinion vector (in which the opinion of every player is equal to her belief). In particular, we prove a lower bound on the optimal social cost (in Lemmas~\ref{lem:opt-lower}) and an upper bound on the social cost of any pure Nash equilibrium, both expressed in terms of this quantity. The bound on the price of anarchy then follows by these relations; see the proof of Theorem~\ref{thm:ppoa-upper}.

Consider an $n$-player $k$-COF game with belief vector $\sss=(s_1, ..., s_n)$. For player $i$, we denote by $\ell^*(i)$ and $r^*(i)$ the integers in $[n]$ such that $\ell^*(i)\leq i\leq r^*(i)$, $r^*(i)-\ell^*(i) = k$, and $|s_{r^*(i)}-s_{\ell^*(i)}|$ is minimized. The proof of the next lemma exploits linear programming and duality.

\begin{lemma}\label{lem:opt-lower}
Consider a $k$-COF game with belief vector $\sss = (s_1, ..., s_n)$ and let $\zz$ be any deterministic opinion vector. Then, $$\SC(\zz,\sss) \geq \frac{1}{2(k+1)}\sum_{i=1}^n{|s_{r^*(i)}-s_{\ell^*(i)}|}.$$
\end{lemma}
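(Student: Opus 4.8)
I would prove the bound in two steps: a ``local'' step converting each player's cost into a statement about how many opinions cluster near her belief, and a ``global'' step, via linear programming duality, turning that into the claimed inequality in terms of the belief spreads $w_i:=|s_{r^*(i)}-s_{\ell^*(i)}|$. \emph{Local step:} fix a deterministic opinion vector $\zz$ and write $c_i=\cost_i(\zz,\sss)$. Since $c_i\ge|z_i-s_i|$ and $c_i\ge|z_j-z_i|$ for every $j\in N_i(\zz,\sss)$, the triangle inequality gives $|z_j-s_i|\le 2c_i$ for every $j\in N_i(\zz,\sss)\cup\{i\}$, so the interval $[s_i-2c_i,\,s_i+2c_i]$ contains the opinions of at least $k+1$ distinct players. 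Let $f_i$ be the smallest $\rho\ge0$ with $[s_i-\rho,\,s_i+\rho]$ containing the opinions of at least $k+1$ players (equivalently, $f_i$ is the $(k+1)$-th smallest among $|z_1-s_i|,\dots,|z_n-s_i|$). Then $f_i\le 2c_i$ for all $i$, hence $\SC(\zz,\sss)=\sum_i c_i\ge\tfrac12\sum_i f_i$, and it remains to prove $\sum_i f_i\ge\tfrac{1}{k+1}\sum_i w_i$ for every opinion vector.

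\emph{Global step.} Since $\sum_i f_i$ depends only on $\sss$ and on the multiset of opinion values, I may relabel opinions to be sorted, $z_1\le\dots\le z_n$; then $f_i$ is the distance from $s_i$ to the $(k+1)$-th closest of them, and the $k+1$ closest opinions to $s_i$ form a consecutive block. I would bound from below the minimum of $\sum_i f_i$ over all placements of $n$ opinions (which in turn lower-bounds $\sum_i f_i$ for the given vector) by the optimum of a linear program $(\mathrm P)$: a fractional covering program in which opinion mass --- which one may assume supported on the belief points --- is placed so that each player $i$ is ``served'' by $k+1$ units of mass and pays in proportion to the distance that mass is moved from $s_i$. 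Passing to the dual $(\mathrm D)$, I would exhibit an explicit feasible dual solution of value exactly $\tfrac{1}{k+1}\sum_i w_i$, in which player $i$'s dual multiplier is supported on her minimum-spread window $[\ell^*(i),r^*(i)]$; the dual feasibility constraints then assert that the multiplier mass charged through any point of the line is at most $1$, which holds precisely because $[\ell^*(i),r^*(i)]$ is a \emph{minimum}-spread window. Weak duality gives $\sum_i f_i\ge\tfrac{1}{k+1}\sum_i w_i$, and combined with the local step this yields $\SC(\zz,\sss)\ge\tfrac{1}{2(k+1)}\sum_i w_i$.

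\emph{Where the difficulty lies.} The local step is routine. The effort is entirely in the global step: pinning down the right linear program and checking that it indeed lower-bounds $\sum_i f_i$ (including the reduction that an optimal opinion placement may be taken on belief points, since $\sum_i f_i$ is the minimum of a non-convex ``$(k+1)$-th order statistic'' objective), and then verifying dual feasibility --- this is delicate near the two ends of the belief sequence, where each window $[\ell^*(i),r^*(i)]$ is pushed to one side and the charging has to be arranged so that no point is overloaded. A consistency check for the target inequality: when $n=k+1$ it reads $\sum_i f_i\ge s_n-s_1$, which holds because for any index $j$ maximizing $|z_j-s_1|$ we have $f_1+f_n\ge|z_j-s_1|+|z_j-s_n|\ge s_n-s_1$.
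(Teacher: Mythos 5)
Your local step is valid but throws away too much. From $c_i\ge|z_i-s_i|$ and $c_i\ge|z_j-z_i|$ you retain only that at least $k+1$ opinions lie within $2c_i$ of $s_i$, i.e.\ $f_i\le 2c_i$; the order statistic $f_i$ records the density of opinions around the belief $s_i$ and forgets where player $i$'s own opinion sits. As a consequence the inequality you reduce to, $\sum_i f_i\ge\tfrac{1}{k+1}\sum_i w_i$, is simply false, so no covering LP and dual certificate can establish it. Concretely, take $k=1$, $n=4$, beliefs $\sss=(0,M,M,M)$ and opinions $\zz=(0,0,M,M)$. Two opinions sit at $0$ and two at $M$, so $f_i=0$ for every $i$ and $\sum_i f_i=0$; yet the only window of two consecutive players containing player $1$ is $\{1,2\}$, so $w_1=M$, while $w_2=w_3=w_4=0$, giving $\tfrac{1}{k+1}\sum_i w_i=M/2>0$. (The lemma itself survives: player $2$ has belief $M$ and opinion $0$, so $\SC(\zz,\sss)=M\ge\tfrac14\sum_i w_i$.) The construction scales to every $k$: one belief at $0$ and $2k+2$ beliefs at $M$, with $k+1$ opinions at $0$ and $k+2$ at $M$. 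Your consistency check at $n=k+1$ cannot detect this, because there every window has full width $s_n-s_1$; the failure occurs precisely when one player's minimum-spread window is forced to be wide while her neighbours' windows are degenerate, and it is exactly the player whose opinion has ``migrated'' away from her belief who pays, a fact invisible to $\sum_i f_i$.

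What must be carried into the global step is the coupling between $t_i:=\cost_i(\zz,\sss)$ and the position of $z_i$ itself: $t_i\ge|s_i-z_i|$ together with $t_i\ge z_{\tilde{r}(i)}-z_i$ and $t_i\ge z_i-z_{\tilde{\ell}(i)}$, where $\tilde{\ell}(i)$ and $\tilde{r}(i)$ are the extreme-belief members of a block of $k+1$ players of consecutive \emph{opinion rank} that contains $i$ and whose opinions player $i$'s cost provably brackets. The paper's proof keeps the $z_i$ as primal LP variables subject to exactly these constraints and only then dualizes; the dual solution charges $\tfrac{1}{2(k+1)}$ per membership in such an effective neighbourhood, the second dual constraint telescopes, and the objective becomes $\tfrac{1}{2(k+1)}\sum_i(s_{\tilde{r}(i)}-s_{\tilde{\ell}(i)})$, which dominates $\tfrac{1}{2(k+1)}\sum_i w_i$ because each effective neighbourhood contains $k+1$ distinct beliefs including $s_i$. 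If you want to keep a two-step structure, the quantity passed to the LP has to be anchored at $z_i$ (distances from $z_i$ to the bracketing opinions), not the order statistics of $|z_j-s_i|$.
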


\begin{proof}
Consider any deterministic opinion vector $\zz$ and let $\pi$ be a permutation of the players so that $z_{\pi(j)}\leq z_{\pi(j+1)}$ for each $j\in [n-1]$. We refer to player $\pi(j)$ as the player with rank $j$.~\footnote{Note that we have proved monotonicity of opinions for pure Nash equilibria only (Lemma~\ref{lem:mon}) and it is not clear whether such a monotonicity property holds for opinion vectors of minimum social cost. In addition, the statement of Lemma~\ref{lem:opt-lower} refers to any opinion vector. This clearly includes non-monotonic ones, so we need to rank players in terms of opinions in the proof.} For each player $i$, we will identify an effective neighborhood $F_i(\zz,\sss)$ that consists of $k+1$ players with consecutive ranks and includes player $i$. Define $\tilde{\ell}(i)$ and $\tilde{r}(i)$ to be the players in $F_i(\zz,\sss)$ with the lowest and highest belief, respectively. In the extreme case where all players in $F_i(\zz,\sss)$ have the same belief, we let $\tilde{\ell}(i)$ and $\tilde{r}(i)$ be the players with the lowest and highest ranks, respectively. The effective neighborhood will be defined in such a way that it satisfies the properties $\cost_i(\zz,\sss)\geq z_{\tilde{r}(i)}-z_i$ and $\cost_i(\zz,\sss)\geq z_i-z_{\tilde{\ell}(i)}$.

Let $N_i(\zz,\sss)$ denote the neighborhood of player $i$, i.e., the set of players (not including $i$) with the $k$ closest opinions to the belief $s_i$ of player $i$. Let $J_i(\zz,\sss)$ be the smallest contiguous interval containing all opinions of players in $N_i(\zz,\sss)\cup \{i\}$ and let $D_i(\zz,\sss)$ be the set of players with opinions in $J_i(\zz,\sss)$. Clearly, $|D_i(\zz,\sss)|\geq k+1$. We define $F_i(\zz,\sss)$ to be a subset of $D_i(\zz,\sss)$ that consists of $k+1$ players with consecutive ranks including player $i$. See Figure~\ref{fig:example-opt-lower-k} for an illustrative example of all quantities defined above.

\begin{figure}[h!]
\centering
\begin{tikzpicture}[xscale=1, yscale=1]
  \filldraw[thick] (0,0) -- (12,0);

  \filldraw (0,-0.1) node[align=center, below] {\small $s_1$};
  \filldraw (1,-0.1) node[align=center, below] {\small $z_4$};
  \filldraw (2,-0.1) node[align=center, below] {\small $s_2$};
  \filldraw (3.1,-0.1) node[align=center, below] {\small $z_1$};
  \filldraw (3.8,-0.1) node[align=center, below] {\small $z_3$};
  \filldraw (5.4,-0.1) node[align=center, below] {\small $s_3$};
  \filldraw (6,-0.1) node[align=center, below] {\small $z_2$};
  \filldraw (7.7,-0.1) node[align=center, below] {\small $s_4$};
  \filldraw (6.8,-0.1) node[align=center, below] {\small $z_6$};
  \filldraw (9,-0.1) node[align=center, below] {\small $s_5$};
  \filldraw (10.5,-0.1) node[align=center, below] {\small $s_6$};
  \filldraw (12,-0.1) node[align=center, below] {\small $z_5$};

  \filldraw (3.1,0) circle(1pt);
  \filldraw (6,0) circle(1pt);
  \filldraw (3.8,0) circle(1pt);
  \filldraw (1,0) circle(1pt);
  \filldraw (6.8,0) circle(1pt);
  \filldraw (12,0) circle(1pt);

  \filldraw ([xshift=-2pt,yshift=-2pt]0,0) rectangle ++(4pt,4pt);
  \filldraw ([xshift=-2pt,yshift=-2pt]2,0) rectangle ++(4pt,4pt);
  \filldraw ([xshift=-2pt,yshift=-2pt]5.4,0) rectangle ++(4pt,4pt);
  \filldraw ([xshift=-2pt,yshift=-2pt]7.7,0) rectangle ++(4pt,4pt);
  \filldraw ([xshift=-2pt,yshift=-2pt]9,0) rectangle ++(4pt,4pt);
  \filldraw ([xshift=-2pt,yshift=-2pt]10.5,0) rectangle ++(4pt,4pt);

  \draw[-stealth,thick] (0,0) to[bend left=30] (3.1,0);
  \draw[-stealth,thick] (2,0) to[bend left=30] (6,0);
  \draw[-stealth,thick] (5.4,0) to[bend right=30] (3.8,0);
  \draw[-stealth,thick] (7.7,0) to[bend right=40] (1,0);
  \draw[-stealth,thick] (9,0) to[bend left=30] (12,0);
  \draw[-stealth,thick] (10.5,0) to[bend right=30] (6.8,0);

\end{tikzpicture}
\caption{An example of the quantities used in the proof of Lemma~\ref{lem:opt-lower}. Let $k=2$ and $i=4$. Then, the neighborhood of player $4$ is $N_4(\zz,\sss) = \{2,6\}$, the smallest contiguous interval containing the opinions of players in $N_4(\zz,\sss)\cup\{4\}$ is $J_4(\zz,\sss) = [z_4,z_6]$, the set of players with opinions in $J_4(\zz,\sss)$ is $D_4(\zz,\sss) = \{1,2,3,4,6\}$, the effective neighborhood is $F_4(\zz,\sss) = \{1,3,4\}$, and, hence, $\tilde{\ell}(4)=1$, and $\tilde{r}(4)=4$. }
\label{fig:example-opt-lower-k}
\end{figure}
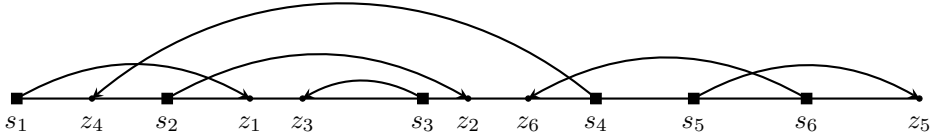

Let $\ell'(i)$ and $r'(i)$ be the players in $N_i(\zz,\sss)$ with the leftmost and rightmost opinion. In order to show that the definition of $F_i(\zz,\sss)$ satisfies the two desired properties, we distinguish between three different cases depending on the location of opinion $z_i$ among the players in $N_i(\zz,\sss)\cup \{i\}$.
\begin{itemize}
\item {\bf Case I:} Player $i$ has neither the leftmost nor the rightmost opinion in $N_i(\zz,\sss)\cup \{i\}$, i.e., $z_{\ell'(i)}<z_i<z_{r'(i)}$.~\footnote{Case I cannot appear when $k=1$.} In this case, $J_i(\zz,\sss)=[z_{\ell'(i)},z_{r'(i)}]$. Then, the definition of $N_i(\zz,\sss)$ implies that $\cost_i(\zz,\sss)\geq z_{r'(i)}-z_i$ and $\cost_i(\zz,\sss)\geq z_i-z_{\ell'(i)}$. Hence, $\cost_i(\zz,\sss)\geq |z_j-z_i|$ for every $z_j\in J_i(\zz,\sss)$ or, equivalently, $j\in D_i(\zz,\sss)$ and, subsequently, for each $j\in F_i(\zz,\sss)$. This implies the two desired properties $\cost_i(\zz,\sss)\geq z_{\tilde{r}(i)}-z_i$ and $\cost_i(\zz,\sss)\geq z_i - z_{\tilde{\ell}(i)}$.

\item {\bf Case II:} Player $i$ has the leftmost opinion in $N_i(\zz,\sss)\cup \{i\}$, i.e., $z_i\leq z_{\ell'(i)}$. Then, $J_i(\zz,\sss)=[z_i,z_{r'(i)}]$. Now, the definition of $N_i(\zz,\sss)$ implies that $\cost_i(\zz,\sss)\geq z_{r'(i)}-z_i$ and, hence, $\cost_i(\zz,\sss)\geq |z_j-z_i|$ for every $z_j\in J_i(\zz,\sss)$ or, equivalently, $j\in D_i(\zz,\sss)$ and, subsequently, for each $j\in F_i(\zz,\sss)$. Again, this implies the two desired properties.

\item {\bf Case III:} Player $i$ has the rightmost opinion in $N_i(\zz,\sss)\cup \{i\}$, i.e., $z_i\geq z_{r'(i)}$. Then, $J_i(\zz,\sss)=[z_{\ell'(i)},z_i]$. Now, the definition of $N_i(\zz,\sss)$ implies that $\cost_i(\zz,\sss)\geq z_i-z_{\ell'(i)}$ and, hence, $\cost_i(\zz,\sss)\geq |z_j-z_i|$ for every $z_j\in J_i(\zz,\sss)$ or, equivalently, $j\in D_i(\zz,\sss)$ and, subsequently, for every $j\in F_i(\zz,\sss)$. Again, the two desired properties follow.
\end{itemize}

By setting the variable $t_i$ equal to $\cost_i(\zz,\sss)$ for $i\in [n]$, the discussion above and the fact that $\cost_i(\zz,\sss)\geq |s_i-z_i|$ imply that the opinion vector $\zz$ together with $\ttt=(t_1, \dots, t_n)$ is a feasible solution to the following linear program:
\begin{align*}
\mbox{minimize} &\quad \sum_{i\in [n]}{t_i}\\
\mbox{subject to} &\quad t_i+z_i\geq s_i, \forall i\in [n]\\
&\quad t_i-z_i\geq -s_i, \forall i\in [n]\\
&\quad t_i+z_i-z_{\tilde{r}(i)} \geq 0, \forall i \in [n] \mbox{ such that } \tilde{r}(i)\not=i\\
&\quad t_i+z_{\tilde{\ell}(i)}-z_i \geq 0, \forall i \in [n] \mbox{ such that } \tilde{\ell}(i)\not=i\\
&\quad t_i,z_i\geq 0, \forall i\in [n]
\end{align*}

Using the dual variables $\alpha_i$, $\beta_i$, $\gamma_i$, and $\delta_i$ associated with the four constraints of the above LP, we obtain its dual LP:
\begin{align*}
\mbox{maximize} &\quad \sum_{i\in [n]}{s_i \alpha_i} - \sum_{i\in [n]}{s_i \beta_i}\\
\mbox{subject to} &\quad \alpha_i+\beta_i+\gamma_i \cdot\one{\tilde{r}(i)\not=i} +\delta_i \cdot\one{\tilde{\ell}(i)\not=i} \leq 1, \forall i\in [n]\\
&\quad \alpha_i-\beta_i+\gamma_i\cdot\one{\tilde{r}(i)\not=i}-\delta(i)\cdot\one{\tilde{\ell}(i)\not=i}-\sum_{j\not=i:\tilde{r}(j)=i}{\gamma_j} \\
&\quad +\sum_{j\not=i:\tilde{\ell}(j)=i}{\delta_j}\leq 0, \forall i\in [n]\\
&\quad \alpha_i, \beta_i, \gamma_i, \delta_i \geq 0
\end{align*}
The indicator $\one{X}$ is equal to $1$ when the condition $X$ is true, and $0$ otherwise.
We will show that the solution defined as $$\alpha_i=\frac{|\{j\in [n]:\tilde{r}(j)=i\}|}{2(k+1)},$$
$$\beta_i=\frac{|\{j\in [n]:\tilde{\ell}(j)=i\}|}{2(k+1)},$$
$$\gamma_i=\delta_i=\frac{1}{2(k+1)},$$
is a feasible dual solution. Indeed, to see why the first dual constraint is satisfied, first observe that player $i$ belongs to at most $2k+1$ different effective neighborhoods. Hence, player $i$ can have the lowest or highest belief among the players in the effective neighborhood of at most $2k+1$ players (implying that $\alpha_i+\beta_i\leq 1-\frac{1}{2(k+1)}$) when $\tilde{r}(i)=i$ or $\tilde{\ell}(i)=i$ and of at most $2k$ players  (implying that $\alpha_i+\beta_i\leq 1-\frac{1}{k+1}$) when $\tilde{r}(i)\not=i$ and $\tilde{\ell}(i)\not=i$. The first constraint follows.

It remains to show that the second constraint is satisfied as well (with equality). We do so by distinguishing between three cases:
\begin{itemize}
\item When $\tilde{r}(i)\not=i$ and $\tilde{\ell}(i)\not=i$, the dual solution guarantees that $\alpha_i=\sum_{j\not=i:\tilde{r}(j)=i}{\gamma_j}$ and the term $\alpha_i$ in the left-hand side of the second constraint cancels out with the sum of $\gamma$'s. Similarly, $\beta_i=\sum_{j\not=i:\tilde{\ell}(j)=i}{\delta_j}$ and the term $\beta_i$ cancels out with the sum of $\delta$'s. Also, the terms $\gamma_i$ and $\delta_i$ are both equal to $\frac{1}{2(k+1)}$ and cancel out as well.

\item When $\tilde{r}(i)=i$ (then, clearly, $\tilde{\ell}(i)\not=i$), we have that $\alpha_i=\delta_i\cdot \one{\tilde{\ell}(i)\not=i} +\sum_{j\not=i:\tilde{r}(j)=i}{\gamma_j}$ (cancelling out the first, fourth and fifth terms) and $\beta_i=\sum_{j\not=i:\tilde{\ell}(j)=i}{\delta_j}$ (cancelling out the second and sixth terms), and the second constraint is satisfied with equality as the third term is zero.

\item Finally, when $\tilde{\ell}(i)=i$ (now, it is $\tilde{r}(i)\not=i$), we have that $\alpha_i=\sum_{j\not=i:\tilde{r}(j)=i}{\gamma_j}$ (cancelling out the first and fifth terms) and $\beta_i=\gamma_i\cdot\one{\tilde{r}(i)\not=i}+\sum_{j\not=i:\tilde{\ell}(j)=i}{\delta_j}$ (cancelling out the second, third and sixth terms), and the second constraint is satisfied with equality as the fourth term is zero.
\end{itemize}

So, the social cost of the solution $\zz$ is lower-bounded by the objective value of the primal LP which, by duality, is lower-bounded by the objective value of the dual LP. Hence
\begin{align*}
\SC(\zz,\sss) &\geq \sum_{i\in [n]}{s_i \alpha_i} - \sum_{i\in [n]}{s_i \beta_i}\\
&= \frac{1}{2(k+1)} \left(\sum_{i\in [n]}{|\{j\in [n]:\tilde{r}(j)=i\}|s_i} - \sum_{i\in [n]}{|\{j\in [n]:\tilde{\ell}(j)=i\}|s_i}\right)\\
&= \frac{1}{2(k+1)} \sum_{i\in [n]}{(s_{\tilde{r}(i)}-s_{\tilde{\ell}(i)})}\\
&= \frac{1}{2(k+1)} \sum_{i\in [n]}{|s_{\tilde{r}(i)}-s_{\tilde{\ell}(i)}|}.
\end{align*}
The last equality follows since $s_{\tilde{r}(i)}\geq s_{\tilde{\ell}(i)}$, by the definition of $\tilde{r}(i)$ and $\tilde{\ell}(i)$.

Note that for each player $i$, there are at least $k+1$ beliefs of different players with values in $[s_{\tilde{\ell}(i)}, s_{\tilde{r}(i)}]$, including player $i$. By the definition of $\ell^*(i)$ and $r^*(i)$ for each player $i$, the above inequality yields
\begin{align*}
\SC(\zz,\sss) &\geq \frac{1}{2(k+1)} \sum_{i\in [n]}{|s_{r^*(i)}-s_{\ell^*(i)}|},
\end{align*}
as desired.
\end{proof}

We are now ready to prove our upper bound on the price of anarchy for $k$-COF games. In our proof, we exploit the mononicity of opinions in a pure Nash equilibrium and we associate the cost of each player in the equilibrium to the same quantity used in the statement of Lemma \ref{lem:opt-lower}.

\begin{theorem}\label{thm:ppoa-upper}
The price of anarchy of $k$-COF games over pure Nash equilibria is at most $4(k+1)$.
\end{theorem}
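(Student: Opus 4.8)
The plan is to obtain the bound by combining Lemma~\ref{lem:opt-lower} with a matching upper bound on the social cost of every pure Nash equilibrium, both expressed through the quantity $Q:=\sum_{i=1}^n |s_{r^*(i)}-s_{\ell^*(i)}|$. Lemma~\ref{lem:opt-lower} already gives $\SC(\zz^*(\sss),\sss)\ge \frac{1}{2(k+1)}Q$, so it will suffice to prove that $\SC(\zz,\sss)\le 2Q$ for every pure Nash equilibrium $\zz$; dividing the two estimates then yields $\PoA(\sss)\le 4(k+1)$ for every belief vector $\sss$, which is exactly the theorem (and the degenerate case $Q=0$ forces both the equilibrium cost and the optimum to vanish).

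To bound $\SC(\zz,\sss)$ at a pure Nash equilibrium $\zz$, taken with sorted opinions as per the convention following Lemma~\ref{lem:mon}, I would first rewrite each player's cost. By Lemma~\ref{lem:nei} and the convention after it, the block $B_i:=N_i(\zz,\sss)\cup\{i\}$ is a set of $k+1$ consecutive players; write $L_i:=\min B_i$ and $R_i:=\max B_i$, so $L_i\le i\le R_i$ and $R_i-L_i=k$. By Lemma~\ref{lem:equilibrium} and monotonicity, $I_i(\zz,\sss)=[\min\{s_i,z_{L_i}\},\max\{s_i,z_{R_i}\}]$ and $\cost_i(\zz,\sss)$ is half its length. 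Splitting this interval into the opinion span $[z_{L_i},z_{R_i}]$ of the block and the leftover piece between $s_i$ and that span — whose length is at most $|z_i-s_i|\le\cost_i(\zz,\sss)$, since $s_i$ and $z_i$ sit on the same side of $[z_{L_i},z_{R_i}]$ — gives $2\cost_i(\zz,\sss)\le (z_{R_i}-z_{L_i})+\cost_i(\zz,\sss)$, hence
\[
\cost_i(\zz,\sss)\le z_{R_i}-z_{L_i}.
\]

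The next step is to pass from opinions to beliefs using Lemma~\ref{lem:betw}: $z_{R_i}\le s_{r(R_i)}$ and $z_{L_i}\ge s_{\ell(L_i)}$, where $r(R_i)$ lies in the block of $R_i$ and $\ell(L_i)$ in the block of $L_i$, hence within distance $k$ of $B_i$ and within distance $2k$ of $i$. So $\cost_i(\zz,\sss)\le s_{r(R_i)}-s_{\ell(L_i)}$, a difference of two beliefs of players near $i$, which I would expand as a telescoping sum of consecutive belief gaps $s_{t+1}-s_t$. Summing over $i$, exchanging the order of summation, and bounding for each gap the number of equilibria whose block actually straddles it (using the block structure of Lemma~\ref{lem:nei} and monotonicity), one should be able to re-apportion the resulting windows of consecutive beliefs among the minimal-span quantities $|s_{r^*(j)}-s_{\ell^*(j)}|$ of the players they contain — exploiting the defining minimality of $\ell^*(j),r^*(j)$ — to reach $\SC(\zz,\sss)=\sum_i\cost_i(\zz,\sss)\le 2Q$.

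The main obstacle is exactly this last accounting. A crude window estimate such as $\cost_i(\zz,\sss)\le s_{i+2k}-s_{i-2k}$ is too wasteful: it can be strictly positive even when all the quantities $|s_{r^*(j)}-s_{\ell^*(j)}|$ near $i$ vanish (for instance when the beliefs are clustered), so it does not yield $\SC(\zz,\sss)\le 2Q$. The remedy is to use that the equilibrium block $B_i$ is itself tight — its opinion span collapses to $0$ whenever the nearby beliefs are constant — and to charge each belief gap only to the equilibria that genuinely cross it, rather than to a fixed window around $i$. Driving the constant in this charging scheme down to $2$, so that after combining with the factor $2(k+1)$ from Lemma~\ref{lem:opt-lower} the price of anarchy comes out at $4(k+1)$, is the delicate part; the structural reductions above are routine by comparison.
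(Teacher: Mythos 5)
Your high-level architecture is exactly the paper's: lower-bound the optimum by $\frac{1}{2(k+1)}Q$ via Lemma~\ref{lem:opt-lower} and upper-bound every pure Nash equilibrium by $2Q$, where $Q=\sum_i|s_{r^*(i)}-s_{\ell^*(i)}|$. Your preliminary reductions are also sound: at equilibrium $2\cost_i(\zz,\sss)$ equals the length of $I_i(\zz,\sss)$, the leftover piece between $s_i$ and the opinion span of the block is at most $\cost_i(\zz,\sss)$, so $\cost_i(\zz,\sss)\le z_{R_i}-z_{L_i}$, and Lemma~\ref{lem:betw} converts this to a difference of beliefs of players within distance $2k$ of $i$.

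However, the proof stops precisely where the theorem's actual content begins. The inequality $\SC(\zz,\sss)\le 2Q$ is never established: you correctly observe that the window bound $\cost_i(\zz,\sss)\le s_{r(R_i)}-s_{\ell(L_i)}$ is too weak (it can be positive while all nearby terms $|s_{r^*(j)}-s_{\ell^*(j)}|$ vanish), and then you defer the repair to an unspecified ``charging scheme'' whose constant you admit you cannot yet drive down to $2$. That charging scheme is the proof; without it nothing is shown. The paper takes a different and more direct route here: it proves the \emph{per-player} bound $\cost_i(\zz,\sss)\le 2(s_{r^*(i)}-s_{\ell^*(i)})$, so no global telescoping or re-apportioning across players is needed at all. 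To get this per-player bound one cannot stay at the level of ``which beliefs lie in a window around $i$''; one has to exploit which players are \emph{excluded} from which neighborhoods at equilibrium. Concretely, the paper's Case I (where $r(i)>r^*(i)$, $\ell(i)>\ell^*(i)$) uses that $\ell^*(i)\notin N_i(\zz,\sss)$ while $r(i)\in N_i(\zz,\sss)$ to get $z_{r(i)}\le 2s_i-z_{\ell^*(i)}$, then descends one more level and analyzes the equilibrium condition of player $\ell^*(i)$ herself (using that $r^*(i)\notin N_{\ell^*(i)}(\zz,\sss)$, and splitting on whether $i\in N_{\ell^*(i)}(\zz,\sss)$) to lower-bound $z_{\ell^*(i)}$ in terms of $s_{\ell^*(i)}$, $s_i$, and $z_i$; Case II is the mirror image through $r^*(i)$. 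These second-order deviation arguments are what make the clustered-beliefs examples you worry about come out right, and they have no counterpart in your proposal. As written, the argument has a genuine gap at its central step.
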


\begin{proof}
Consider a $k$-COF game with belief vector $\sss=(s_1, \dots, s_n)$, and let $\zz^*=(z_1^*, \dots, z_n^*)$ be any opinion vector that minimizes the social cost. By Lemma \ref{lem:opt-lower}, we have
\begin{align}\label{eq:opt-lower-k}
\SC(\zz^*,\sss) &\geq \frac{1}{2(k+1)}\sum_{i=1}^n{|s_{r^*(i)}-s_{\ell^*(i)}|}.
\end{align}

Now, consider any pure Nash equilibrium $\zz$ of the game. We will show that
\begin{align}\label{eq:nash-upper-k}
\SC(\zz,\sss) &\leq  2\sum_{i=1}^n{|s_{r^*(i)}-s_{\ell^*(i)}|},
\end{align}
and the theorem will then follow by inequalities (\ref{eq:opt-lower-k}) and (\ref{eq:nash-upper-k}).

The rest of this proof is, therefore, devoted to showing inequality (\ref{eq:nash-upper-k}). To this end, we will show that, for any player $i$, we have $\cost_i(\zz,\sss) \leq 2(s_{r^*(i)}-s_{\ell^*(i)})$. Then, inequality (\ref{eq:nash-upper-k}) will follow by summing over all players.

Consider an arbitrary player $i$ and, without loss of generality, let us assume that $z_i\geq s_i$ (the case $z_i\leq s_i$ is symmetric). Recall that $\ell(i)$ and $r(i)$ denote the players in $N_i(\zz,\sss)\cup\{i\}$ with the leftmost and rightmost point, respectively, in $I_i(\zz,\sss)$ and note that $r(i) - \ell(i) = k$. First, observe that if $z_{r(i)}=z_i$, the assumption $z_i\geq s_i$ implies that all players in $N_i(\zz,\sss)\cup\{i\}$ have opinions at $s_i$ (since, by Lemma \ref{lem:equilibrium}, $z_i$ is in the middle of interval $I_i(\zz,\sss)$ at equilibrium). In this case, $\cost_i(\zz,\sss)=0$ and the desired inequality holds trivially. So, in the following, we assume that $r(i)>i$ and $z_{r(i)}>z_i$, i.e., $z_{r(i)}$ is at the right of $z_i$ which in turn is at the right of (or coincides with) $s_i$.

Recall that, for player $i$, $\ell^*(i)$ and $r^*(i)$ denote the integers in $[n]$ such that $\ell^*(i)\leq i\leq r^*(i)$, $r^*(i)-\ell^*(i) = k$, and $|s_{r^*(i)}-s_{\ell^*(i)}|$ is minimized. Since $r(i) - \ell(i) = r^*(i) - \ell^*(i) = k$, we distinguish between two main cases depending on the relative order of $r(i)$ and $r^*(i)$.

\paragraph{Case I} $r(i)>r^*(i)$ and $\ell(i)>\ell^*(i)$. Since $z_{r(i)}$ is at the right of $s_i$ and $\ell^*(i)$ does not belong to the neighborhood of player $i$ (while player $r(i)$ does so by definition), $z_{\ell^*(i)}$ is at the left of $s_i$ and, furthermore, $z_{r(i)}-s_i\leq s_i-z_{\ell^*(i)}$ or, equivalently,
\begin{align}\label{eq:z-r-i}
z_{r(i)} &\leq 2s_i-z_{\ell^*(i)}.
\end{align}
This yields
\begin{align}\label{eq:cost-1}
\cost_i(\zz,\sss) &= z_{r(i)}-z_i \leq 2s_i-z_{\ell^*(i)}-z_i.
\end{align}
These inequalities will be useful in several places of the proof for this case below.

If $z_{\ell^*(i)}\geq s_{\ell^*(i)}$ then, since $r^*(i)\geq i$ and $z_i\geq s_i$, inequality (\ref{eq:cost-1}) becomes $\cost_i(\zz,\sss)\leq s_i-s_{\ell^*(i)} \leq s_{r^*(i)}-s_{\ell^*(i)}$ and the desired inequality follows. So, in the following, we assume that $z_{\ell^*(i)}<s_{\ell^*(i)}$ i.e., $z_{\ell^*(i)}$ is (strictly) at the left of $s_{\ell^*(i)}$. Hence, $\ell^*(i)$ has her leftmost neighbor with $z_{\ell(\ell^*(i))}<z_{\ell^*(i)}$ and, by Lemma \ref{lem:equilibrium},
\begin{align}\label{eq:z-l-star}
z_{\ell^*(i)} &= \frac{z_{\ell(\ell^*(i))}+\max\{s_{\ell^*(i)},z_{r(\ell^*(i))}\}}{2}.
\end{align}

Since $r^*(i)-\ell^*(i)=k$ and $\ell(\ell^*(i))<\ell^*(i)$,  we have $r^*(i) - \ell(\ell^*(i)) > k$,  and, therefore, $r^*(i)$ does not belong to the neighborhood of $\ell^*(i)$. Hence,
$s_{\ell^*(i)}-z_{\ell(\ell^*(i))}\leq z_{r^*(i)}-s_{\ell^*(i)}$ or, equivalently
\begin{align}\label{eq:z-l-l-star}
z_{\ell(\ell^*(i))} &\geq 2s_{\ell^*(i)}-z_{r^*(i)} \geq 2s_{\ell^*(i)}-2s_i+z_{\ell^*(i)},
\end{align}
where the second inequality follows by our case assumption $z_{r^*(i)}\leq z_{r(i)}$ and inequality (\ref{eq:z-r-i}).

We now further distinguish between two cases, depending on whether player $i$ belongs to the neighborhood of player $\ell^*(i)$ or not.

\paragraph{Case I.1} $i\in N_{\ell^*(i)}(\zz,\sss)$; see also Figure \ref{fig:example-Thm10}a for an example of this case. Then, we have $z_i\leq z_{r(\ell^*(i))}$ and, subsequently,
\begin{align}\label{eq:max}
\max\{s_{\ell^*(i)},z_{r(\ell^*(i))}\} &\geq z_{r(\ell^*(i))} \geq z_i.
\end{align}
Using inequalities (\ref{eq:z-l-l-star}) and (\ref{eq:max}), (\ref{eq:z-l-star}) yields
\begin{align*}
z_{\ell^*(i)} &\geq s_{\ell^*(i)}-s_i+\frac{z_{\ell^*(i)}}{2}+\frac{z_i}{2},
\end{align*}
which implies that $z_{\ell^*(i)} \geq  2s_{\ell^*(i)}-2s_i+z_i$. Now, inequality (\ref{eq:cost-1}) becomes
\begin{align*}
\cost_i(\zz,\sss) &\leq 4s_i-2s_{\ell^*(i)}-2z_i
\leq 2s_i-2s_{\ell^*(i)}
\leq 2(s_{r^*(i)}-s_{\ell^*(i)})
\end{align*}
as desired. The second inequality follows since $z_i\geq s_i$ and the last one follows since $r^*(i)\geq i$.

\paragraph{Case I.2} $i\not\in N_{\ell^*(i)}(\zz,\sss)$; see also Figure \ref{fig:example-Thm10}b for an example. Then, we have $s_{\ell^*(i)}-z_{\ell(\ell^*(i))} \leq z_i-s_{\ell^*(i)}$, which implies that $z_{\ell(\ell^*(i))} \geq 2s_{\ell^*(i)}-z_i$. Using this inequality together with the fact that $\max\{s_{\ell^*(i)},z_{r(\ell^*(i))}\}\geq s_{\ell^*(i)}$, (\ref{eq:z-l-star}) yields
\begin{align*}
z_{\ell^*(i)} &\geq \frac{3s_{\ell^*(i)}-z_i}{2}
\end{align*}
and inequality (\ref{eq:cost-1}) becomes
\begin{align*}
\cost_i(\zz,\sss) &\leq 2s_i-\frac{3}{2}s_{\ell^*(i)}-\frac{z_i}{2}
\leq \frac{3}{2}s_i-\frac{3}{2}s_{\ell^*(i)}
\leq 2(s_{r^*(i)}-s_{\ell^*(i)}),
\end{align*}
as desired. The second last inequality follows since $z_i\geq s_i$ and the last one follows since $r^*(i)\geq i$.

\paragraph{Case II} $r(i)\leq r^*(i)$ and $\ell(i)\leq \ell^*(i)$. Since $z_i$ is in the middle of the interval $I_i(\zz,\sss)$ and $z_{r(i)}$ is the rightmost opinion in $I_i(\zz,\sss)$, we have $$z_i = \frac{\min\{s_i,z_{\ell(i)}\} + z_{r(i)}}{2} \leq \frac{z_{\ell(i)}+z_{r(i)}}{2}\leq \frac{z_{\ell^*(i)}+z_{r^*(i)}}{2}.$$ Since $s_i\leq z_i$, the last inequality yields
\begin{align}\label{eq:l-star-II}
z_{\ell^*(i)} &\geq 2s_i-z_{r^*(i)}.
\end{align}
We also have
\begin{align}\label{eq:cost-2}
\cost_i(\zz,\sss) &= z_{r(i)}-z_i \leq z_{r^*(i)}-z_i.
\end{align}

If $z_{r^*(i)}\leq s_{r^*(i)}$ then, since $s_{\ell^*(i)} \leq s_i\leq z_i$, inequality (\ref{eq:cost-2}) yields $\cost_i(\zz,\sss)\leq s_{r^*(i)}-s_i\leq s_{r^*(i)}-s_{\ell^*(i)}$, which is even stronger than the desired inequality. So, in the following we assume that $z_{r^*(i)}>s_{r^*(i)}$, i.e., $z_{r^*(i)}$ is at the right of $s_{r^*(i)}$. Since $z_{r^*(i)}$ is in the middle of the interval $I_{r^*(i)}(\zz,\sss)$, we have that $r(r^*(i)) > r^*(i)$ and, therefore,
\begin{align}\label{eq:z-r-star-II}
z_{r^*(i)} &= \frac{\min\{s_{r^*(i)},z_{\ell(r^*(i))}\}+z_{r(r^*(i))}}{2}.
\end{align}
Moreover, since $r(r^*(i)) - \ell^*(i)> r^*(i) - \ell^*(i) = k$, player $\ell^*(i)$ does not belong to the neighborhood of player $r^*(i)$. Hence, $z_{r(r^*(i))}-s_{r^*(i)}\leq s_{r^*(i)}-z_{\ell^*(i)}$ which, together with inequality (\ref{eq:l-star-II}), yields that
\begin{align}\label{eq:z-r-r-star}
z_{r(r^*(i))} &\leq 2s_{r^*(i)}-z_{\ell^*(i)}
\leq 2s_{r^*(i)}-2s_i+z_{r^*(i)}.
\end{align}

We now further distinguish between two cases, depending on whether player $i$ belongs to the neighborhood of player $r^*(i)$ or not.

\paragraph{Case II.1} $i\in N_{r^*(i)}(\zz,\sss)$; see also Figure \ref{fig:example-Thm10}c for an example. Then, using the fact that $\min\{s_{r^*(i)},z_{\ell(r^*(i))}\} \leq z_{\ell(r^*(i))} \leq z_i$ and inequality (\ref{eq:z-r-r-star}), equation (\ref{eq:z-r-star-II}) becomes $$z_{r^*(i)} \leq  \frac{z_i+2s_{r^*(i)}-2s_i+z_{r^*(i)}}{2}$$
and, equivalently, $z_{r^*(i)} \leq z_i+2s_{r^*(i)}-2s_i$. Hence, inequality (\ref{eq:cost-2}) yields
\begin{align*}
\cost_i(\zz,\sss) &\leq 2s_{r^*(i)}-2s_i
\leq 2(s_{r^*(i)}-s_{\ell^*(i)}),
\end{align*}
as desired. The last inequality follows since $\ell^*(i)\leq i$.

\paragraph{Case II.2} $i\not\in N_{r^*(i)}(\zz,\sss)$; see Figure \ref{fig:example-Thm10}d for an example. Since $i$ does not belong to the neighborhood of player $r^*(i)$ but player $r(r^*(i))$ does, we have that $z_{r(r^*(i))}-s_{r^*(i)} \leq s_{r^*(i)}-z_i$ or, equivalently, $z_{r(r^*(i))} \leq 2s_{r^*(i)}-z_i$. Then, using also the fact that $\min\{s_{r^*(i)},z_{\ell(r^*(i))}\} \leq s_{r^*(i)}$, equation (\ref{eq:z-r-star-II}) becomes $$z_{r^*(i)}\leq \frac{3s_{r^*(i)}-z_i}{2}$$ and (\ref{eq:cost-2}) yields
\begin{align*}
\cost_i(\zz,\sss) &\leq \frac{3}{2}(s_{r^*(i)}-z_i)
\leq \frac{3}{2}(s_{r^*(i)}-s_{\ell^*(i)}),
\end{align*}
which is even stronger than the desired inequality. The last inequality follows since $z_i\geq s_i$ and $\ell^*(i)\leq i$.

So, we have shown that in the pure Nash equilibrium $\zz$ and for any player $i$, we have that $\cost_i(\zz,\sss) \leq 2(s_{r^*(i)}-s_{\ell^*(i)})$. By summing over all players, we obtain inequality (\ref{eq:nash-upper-k}) and the theorem follows.
\end{proof}

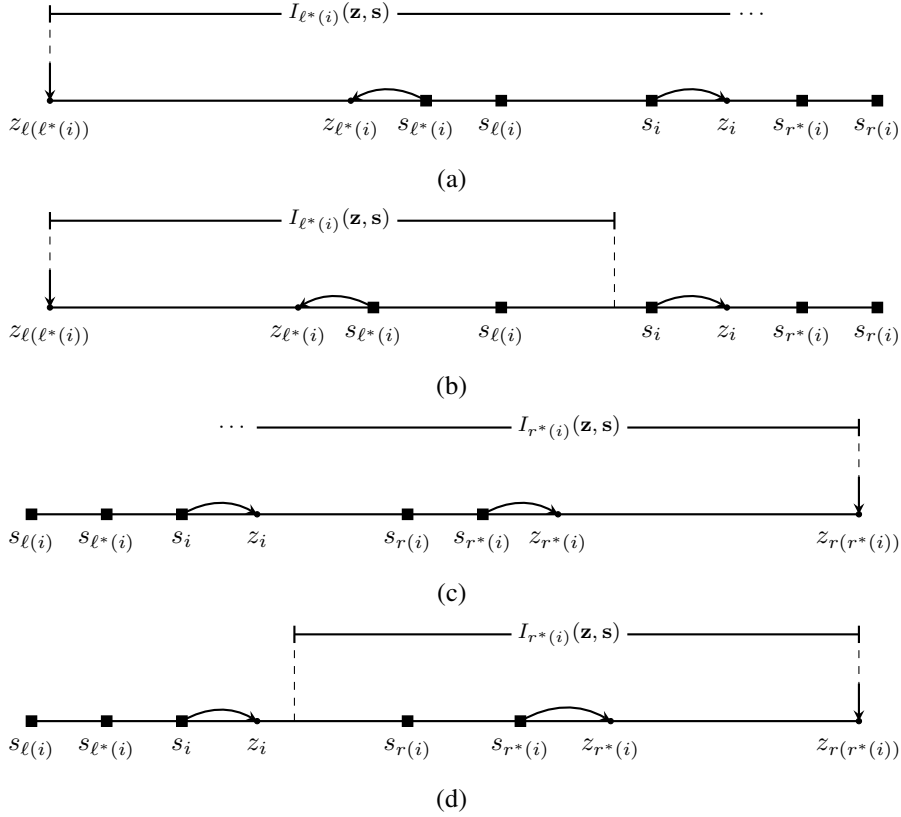
\begin{figure}[h!]
\begin{subfigure}{\textwidth}
\centering
\centering
\begin{tikzpicture}[xscale=1, yscale=1]
  \filldraw[thick] (0,0) -- (11,0);
  \filldraw (0,-0.1) node[align=center, below] {\small $z_{\ell(\ell^*(i))}$};
  \filldraw (4,-0.1) node[align=center, below] {\small $z_{\ell^*(i)}$};
  \filldraw (5,-0.1) node[align=center, below] {\small $s_{\ell^*(i)}$};
  \filldraw (6,-0.1) node[align=center, below] {\small $s_{\ell(i)}$};
  \filldraw (8,-0.1) node[align=center, below] {\small $s_i$};
  \filldraw (9,-0.1) node[align=center, below] {\small $z_i$};
  \filldraw (10,-0.1) node[align=center, below] {\small $s_{r^*(i)}$};
  \filldraw (11,-0.1) node[align=center, below] {\small $s_{r(i)}$};

  \filldraw (0,0) circle(1pt);
  \filldraw (4,0) circle(1pt);
  \filldraw (9,0) circle(1pt);

  \filldraw ([xshift=-2pt,yshift=-2pt]5,0) rectangle ++(4pt,4pt);
  \filldraw ([xshift=-2pt,yshift=-2pt]6,0) rectangle ++(4pt,4pt);
  \filldraw ([xshift=-2pt,yshift=-2pt]8,0) rectangle ++(4pt,4pt);
  \filldraw ([xshift=-2pt,yshift=-2pt]10,0) rectangle ++(4pt,4pt);
  \filldraw ([xshift=-2pt,yshift=-2pt]11,0) rectangle ++(4pt,4pt);

  \draw[-stealth,thick] (5,0) to[bend right=30] (4,0);
  \draw[-stealth,thick] (8,0) to[bend left=30] (9,0);

  \draw[-stealth,thick] (-0,0.5) to (0,0);

  \node at (0,0) (z){};

  \draw[thick] ([xshift=0em,yshift=3em]z.center) -- ([xshift=8em,yshift=3em]z.center);
  \draw[thick] ([xshift=0em,yshift=2.7em]z.center) -- ([xshift=0em,yshift=3.3em]z.center);
  \node[align=center] at ([xshift=10em,yshift=3em]z.center) {\scriptsize $I_{\ell^*(i)}(\zz,\sss)$};
  \draw[thick] ([xshift=12em,yshift=3em]z.center) -- ([xshift=23.5em,yshift=3em]z.center);
  \node[align=center] at ([xshift=24.3em,yshift=3em]z.center) {\scriptsize $\cdots$};

  \draw[dashed] ([xshift=0em,yshift=3em]z.center) -- (z.center);
\end{tikzpicture}
\caption{ \ }
\end{subfigure}
\begin{subfigure}{\textwidth}
\centering
\begin{tikzpicture}[xscale=1, yscale=1]
  \filldraw[thick] (0,0) -- (11,0);
  \filldraw (0,-0.1) node[align=center, below] {\small $z_{\ell(\ell^*(i))}$};
  \filldraw (3.3,-0.1) node[align=center, below] {\small $z_{\ell^*(i)}$};
  \filldraw (4.3,-0.1) node[align=center, below] {\small $s_{\ell^*(i)}$};
  \filldraw (6,-0.1) node[align=center, below] {\small $s_{\ell(i)}$};
  \filldraw (8,-0.1) node[align=center, below] {\small $s_i$};
  \filldraw (9,-0.1) node[align=center, below] {\small $z_i$};
  \filldraw (10,-0.1) node[align=center, below] {\small $s_{r^*(i)}$};
  \filldraw (11,-0.1) node[align=center, below] {\small $s_{r(i)}$};

  \filldraw (0,0) circle(1pt);
  \filldraw (3.3,0) circle(1pt);
  \filldraw (9,0) circle(1pt);

  \filldraw ([xshift=-2pt,yshift=-2pt]4.3,0) rectangle ++(4pt,4pt);
  \filldraw ([xshift=-2pt,yshift=-2pt]6,0) rectangle ++(4pt,4pt);
  \filldraw ([xshift=-2pt,yshift=-2pt]8,0) rectangle ++(4pt,4pt);
  \filldraw ([xshift=-2pt,yshift=-2pt]10,0) rectangle ++(4pt,4pt);
  \filldraw ([xshift=-2pt,yshift=-2pt]11,0) rectangle ++(4pt,4pt);

  \draw[-stealth,thick] (4.3,0) to[bend right=30] (3.3,0);
  \draw[-stealth,thick] (8,0) to[bend left=30] (9,0);

  \draw[-stealth,thick] (0,0.5) to (0,0);

  \node at (0,0) (z){};

  \draw[thick] ([xshift=0em,yshift=3em]z.center) -- ([xshift=8em,yshift=3em]z.center);
  \draw[thick] ([xshift=0em,yshift=2.7em]z.center) -- ([xshift=0em,yshift=3.3em]z.center);
  \node[align=center] at ([xshift=10em,yshift=3em]z.center) {\scriptsize $I_{\ell^*(i)}(\zz,\sss)$};
  \draw[thick] ([xshift=12em,yshift=3em]z.center) -- ([xshift=19.5em,yshift=3em]z.center);
  \draw[thick] ([xshift=19.5em,yshift=2.7em]z.center) -- ([xshift=19.5em,yshift=3.3em]z.center);

  \draw[dashed] ([xshift=0em,yshift=3em]z.center) -- (z.center);
  \draw[dashed] ([xshift=19.5em,yshift=3em]z.center) -- ([xshift=19.5em,yshift=0em]z.center);
\end{tikzpicture}
\caption{ \ }
\end{subfigure}
\begin{subfigure}{\textwidth}
\centering
\begin{tikzpicture}[xscale=1, yscale=1]
  \filldraw[thick] (0,0) -- (11,0);
  \filldraw (0,-0.1) node[align=center, below] {\small $s_{\ell(i)}$};
  \filldraw (1,-0.1) node[align=center, below] {\small $s_{\ell^*(i)}$};
  \filldraw (2,-0.1) node[align=center, below] {\small $s_i$};
  \filldraw (3,-0.1) node[align=center, below] {\small $z_i$};
  \filldraw (5,-0.1) node[align=center, below] {\small $s_{r(i)}$};
  \filldraw (6,-0.1) node[align=center, below] {\small $s_{r^*(i)}$};
  \filldraw (7,-0.1) node[align=center, below] {\small $z_{r^*(i)}$};
  \filldraw (11,-0.1) node[align=center, below] {\small $z_{r(r^*(i))}$};

  \filldraw (3,0) circle(1pt);
  \filldraw (7,0) circle(1pt);
  \filldraw (11,0) circle(1pt);

  \filldraw ([xshift=-2pt,yshift=-2pt]0,0) rectangle ++(4pt,4pt);
  \filldraw ([xshift=-2pt,yshift=-2pt]1,0) rectangle ++(4pt,4pt);
  \filldraw ([xshift=-2pt,yshift=-2pt]2,0) rectangle ++(4pt,4pt);
  \filldraw ([xshift=-2pt,yshift=-2pt]5,0) rectangle ++(4pt,4pt);
  \filldraw ([xshift=-2pt,yshift=-2pt]6,0) rectangle ++(4pt,4pt);

  \draw[-stealth,thick] (2,0) to[bend left=30] (3,0);
  \draw[-stealth,thick] (6,0) to[bend left=30] (7,0);

  \draw[-stealth,thick] (11,0.5) to (11,0);

  \node at (11,0) (z){};

  \draw[thick] ([xshift=0em,yshift=3em]z.center) -- ([xshift=-8em,yshift=3em]z.center);
  \draw[thick] ([xshift=0em,yshift=2.7em]z.center) -- ([xshift=0em,yshift=3.3em]z.center);
  \node[align=center] at ([xshift=-10em,yshift=3em]z.center) {\scriptsize $I_{r^*(i)}(\zz,\sss)$};
  \draw[thick] ([xshift=-12em,yshift=3em]z.center) -- ([xshift=-20.8em,yshift=3em]z.center);
  \node[align=center] at ([xshift=-21.6em,yshift=3em]z.center) {\scriptsize $\cdots$};

  \draw[dashed] ([xshift=0em,yshift=3em]z.center) -- (z.center);
\end{tikzpicture}
\caption{ \ }
\end{subfigure}
\begin{subfigure}{\textwidth}
\centering
\begin{tikzpicture}[xscale=1, yscale=1]
  \filldraw[thick](0,0) -- (11,0);

  \filldraw (0,-0.1) node[align=center, below] {\small $s_{\ell(i)}$};
  \filldraw (1,-0.1) node[align=center, below] {\small $s_{\ell^*(i)}$};
  \filldraw (2,-0.1) node[align=center, below] {\small $s_i$};
  \filldraw (3,-0.1) node[align=center, below] {\small $z_i$};
  \filldraw (5,-0.1) node[align=center, below] {\small $s_{r(i)}$};
  \filldraw (6.5,-0.1) node[align=center, below] {\small $s_{r^*(i)}$};
  \filldraw (7.7,-0.1) node[align=center, below] {\small $z_{r^*(i)}$};
  \filldraw (11,-0.1) node[align=center, below] {\small $z_{r(r^*(i))}$};

  \filldraw (3,0) circle(1pt);
  \filldraw (7.7,0) circle(1pt);
  \filldraw (11,0) circle(1pt);

  \filldraw ([xshift=-2pt,yshift=-2pt]0,0) rectangle ++(4pt,4pt);
  \filldraw ([xshift=-2pt,yshift=-2pt]1,0) rectangle ++(4pt,4pt);
  \filldraw ([xshift=-2pt,yshift=-2pt]2,0) rectangle ++(4pt,4pt);
  \filldraw ([xshift=-2pt,yshift=-2pt]5,0) rectangle ++(4pt,4pt);
  \filldraw ([xshift=-2pt,yshift=-2pt]6.5,0) rectangle ++(4pt,4pt);

  \draw[-stealth,thick] (2,0) to[bend left=30] (3,0);
  \draw[-stealth,thick] (6.5,0) to[bend left=30] (7.7,0);

  \draw[-stealth,thick] (11,0.5) to (11,0);

  \node at (11,0) (z){};

  \draw[thick] ([xshift=0em,yshift=3em]z.center) -- ([xshift=-8em,yshift=3em]z.center);
  \draw[thick] ([xshift=0em,yshift=2.7em]z.center) -- ([xshift=0em,yshift=3.3em]z.center);
  \node[align=center] at ([xshift=-10em,yshift=3em]z.center) {\scriptsize $I_{r^*(i)}(\zz,\sss)$};
  \draw[thick] ([xshift=-12em,yshift=3em]z.center) -- ([xshift=-19.5em,yshift=3em]z.center);
  \draw[thick] ([xshift=-19.5em,yshift=2.7em]z.center) -- ([xshift=-19.5em,yshift=3.3em]z.center);

  \draw[dashed] ([xshift=0em,yshift=3em]z.center) -- (z.center);
  \draw[dashed] ([xshift=-19.5em,yshift=3em]z.center) -- ([xshift=-19.5em,yshift=0em]z.center);
\end{tikzpicture}
\caption{ \ }
\end{subfigure}
\caption{Indicative examples of the different cases in the proof of Theorem \ref{thm:ppoa-upper}. Subfigures (a) and (b) concern Case I, as $r(i)>r^*(i)$ and $\ell(i)>\ell^*(i)$, while subfigures (c) and (d) fall under Case II, as $r(i)\leq r^*(i)$ and $\ell(i)\leq \ell^*(i)$.}
\label{fig:example-Thm10}
\end{figure}

\section{An improved bound on the price of anarchy for $1$-COF games}\label{sec:upper-1}

For the case of $1$-COF games we can prove an even stronger statement following a similar proof roadmap as in the previous section, but using simpler (and shorter) arguments. We denote by $\eta(i)$ the player (other than $i$) that minimizes the distance $|s_i-s_{\eta(i)}|$; note that $\eta(i) \in \{i-1,i+1\}$. The proof of the next lemma (which can be thought of as a stronger version of Lemma~\ref{lem:opt-lower} for 1-COF games) relies on a particular charging scheme that allows us to lower-bound the cost of each player in any deterministic opinion vector.

\begin{lemma}\label{lem:opt-lower-1}
Consider a $1$-COF game with belief vector $\sss = (s_1, \dots, s_n)$ and let $\zz$ be any deterministic opinion vector. Then, $$\SC(\zz,\sss) \geq \frac{1}{3}\sum_{i=1}^n{|s_i-s_{\eta(i)}|}.$$
\end{lemma}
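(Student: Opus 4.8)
The plan is to find, for each player $i$, a lower bound on $\cost_i(\zz,\sss)$ in terms of distances $|s_j - s_{\eta(j)}|$ between consecutive beliefs, and then to sum these bounds with a charging scheme that overcounts each such distance by a factor of at most $3$. The key structural observation is the following: for any player $i$, consider its two belief-neighbors $i-1$ and $i+1$ (when they exist). Since the neighborhood $N_i(\zz,\sss)$ consists of a single player $\sigma(i)$ with the opinion closest to $s_i$, at most one opinion among $\{z_{i-1}, z_i, z_{i+1}\}$ can be strictly closer to $s_i$ than all the others; this limits how small $\cost_i$ can be simultaneously with $\cost_{i-1}$ and $\cost_{i+1}$. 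Concretely, I would argue that for any three consecutive players, the triangle-type inequality $|s_i - s_{i+1}| \le |s_i - z_i| + |z_i - z_{\sigma(i)}|$-style bounds, combined with the fact that $z_{\sigma(i)}$ is at distance at most $\cost_i$ from $z_i$ and $|s_i - z_i| \le \cost_i$, force $\cost_i + \cost_{i+1}$ (or some small combination) to dominate $|s_i - s_{i+1}|$.

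More precisely, the charging scheme I would set up assigns to each belief-gap $|s_i - s_{i+1}|$ a ``payer'' among the players, and shows that each player $i$ is charged for at most the gaps $|s_{i-1}-s_i|$ and $|s_i - s_{i+1}|$, i.e., at most twice — but with the right constant this yields the $1/3$. The cleanest route: first show that for every player $i$ with $\eta(i) = i+1$ (the case $\eta(i)=i-1$ is symmetric), we have $\cost_i + \cost_{\sigma(i)} \ge |s_i - s_{\eta(i)}|$ or something of this flavor, using that $z_{\sigma(i)}$ lies within $\cost_i$ of $z_i$, that $z_i$ lies within $\cost_i$ of $s_i$, and that since $\sigma(i) \ne \eta(i)$ would mean $z_{\eta(i)}$ is farther from $s_i$ than $z_{\sigma(i)}$, we can relate $z_{\sigma(i)}$ back to $s_{\eta(i)}$. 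Then I would carefully account for how many times each player's cost appears on the right-hand side of all these inequalities: player $i$'s cost $\cost_i$ appears in the bound for gap $(i-1,i)$ or $(i,i+1)$ directly (once or twice as $\eta(i)$ may point either way), and appears as $\cost_{\sigma(j)}$ for those $j$ with $\sigma(j) = i$. Bounding the number of players $j$ with $\sigma(j)=i$ — which is at most $2$ since opinions are ordered (Lemma~\ref{lem:mon}) and $\sigma$ picks a nearest neighbor — gives the total multiplicity $3$.

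The main obstacle I anticipate is handling the interaction between the belief order and the opinion order cleanly: Lemma~\ref{lem:mon} only guarantees monotonicity of opinions at pure Nash equilibria, whereas this lemma (like Lemma~\ref{lem:opt-lower}) must hold for an \emph{arbitrary} deterministic $\zz$, so I cannot assume $z_i \le z_{i+1}$. I would deal with this exactly as in the proof of Lemma~\ref{lem:opt-lower}: rank the players by opinion value via a permutation $\pi$, and define an ``effective neighbor'' of $i$ among its opinion-rank neighbors rather than its belief-index neighbors, establishing $\cost_i \ge |z_i - z_{\text{eff}(i)}|$ for that effective neighbor. Then the charging is done in opinion-rank space, and the bound $|s_i - s_{\eta(i)}|$ is recovered at the end by noting that consecutive beliefs are the closest, so any pair of players contributes a gap at least as large as the relevant $|s_i - s_{\eta(i)}|$. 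The bookkeeping to get the constant exactly $3$ — rather than $4$ — will require being careful about whether a player is charged once or twice for adjacent gaps, and I expect the tight instances in Theorems~\ref{thm:pos} and~\ref{thm:pos-2} to be the guide for which inequalities must be kept sharp.
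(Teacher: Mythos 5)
Your high-level plan (lower-bound each gap $|s_i-s_{\eta(i)}|$ by a small combination of player costs, then control the multiplicity with which each $\cost_i$ is charged) is the right shape, and you correctly flag the two danger points: opinions need not be monotone in $i$ for an arbitrary $\zz$, and the constant $3$ is tighter than what the Lemma~\ref{lem:opt-lower} machinery yields. But the proposal does not actually resolve either. The central unproved step is your multiplicity bound: you assert that at most $2$ players $j$ can have $\sigma(j)=i$, citing Lemma~\ref{lem:mon}, which applies only at equilibrium. For an arbitrary deterministic $\zz$ this is simply false --- if all opinions but $z_1$ sit in a distant cluster, then $\sigma(j)=1$ for many players $j$ simultaneously, and the sum $\sum_j\bigl(\cost_j+\cost_{\sigma(j)}\bigr)$ overcounts $\cost_1$ by an unbounded factor, so your charging gives no constant at all. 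The paper escapes this by charging a gap to $\cost_{\sigma(i)}$ \emph{only} when player $i$ is ``small'' (its own cost is below $\tfrac13|s_i-s_{\eta(i)}|$), and then proving two geometric facts (Claims~\ref{claim:ns1} and~\ref{claim:ns2}): the neighbor of a small player is never small, and no two small players share a neighbor. That is what caps the multiplicity at the value needed for $1/3$; nothing in your sketch substitutes for it. Falling back to your other suggested route --- opinion-rank effective neighborhoods as in Lemma~\ref{lem:opt-lower} --- only yields the constant $\tfrac{1}{2(k+1)}=\tfrac14$ for $k=1$, which is exactly the bound this lemma is meant to beat.

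The second missing ingredient is a mechanism for opinions that overshoot. If $z_i\notin[s_{i-1},s_{i+1}]$ (a ``kangaroo'' in the paper's terminology), the players whose beliefs $z_i$ jumps over may each have tiny cost (their nearest opinion is $z_i$, which landed next to their belief) and all point their $\sigma$ at the same faraway player, so purely local gap-to-adjacent-player charging breaks down. The paper handles this by defining an $\excess$ for each kangaroo and showing (Claim~\ref{claim:ineq-covered-kangaroos}) that $\cost_i-\excess_i$ already pays $\tfrac13$ of \emph{all} the gaps of the covered players --- the kangaroo's cost is at least the distance it travelled, which dominates the sum of the gaps it jumped over --- while the leftover $\excess_i$ is reserved to help small players adjacent to the landing point (Claim~\ref{claim:ineq-small-kangaroos}). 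Without some version of this global accounting, your scheme cannot certify the constant; this is the idea you would need to add.
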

\begin{proof}
We begin by classifying the players into groups and, subsequently, we show how the costs of different groups can be combined so that the lemma holds.  We call a player $i$ with $z_i\notin [s_{i-1}, s_{i+1}]$ a {\em kangaroo} player and associate the quantity $\excess_i$ with her. If $z_i \in [s_j, s_{j+1}]$ for some $j>i$, we say that the players in the set $C_i=\{i+1, ..., j\}$ are {\em covered} by player $i$ and define $\excess_i=z_i-s_j$. Otherwise, if $z_i \in [s_{j-1}, s_j]$ for some $j<i$, we say that the players in the set $C_i=\{j, ..., i-1\}$ are covered by player $i$ and define $\excess_i=s_j-z_i$.

Let ${\cal K}$ be the set of kangaroo players and ${\cal C}$ the set of players that are covered by a kangaroo; these need not be disjoint. We now partition the players not in ${\cal K} \cup {\cal C}$ into the set $L$ of {\em large} players such that, for any $i \in L$, it holds $\cost_i(\zz,\sss)\geq \frac{1}{3}(|s_i-s_{\eta(i)}|)$, and the set $S$ that contains the remaining players who we call {\em small}. See also Figure \ref{fig:example-opt-lower-1} for an example of these sets.

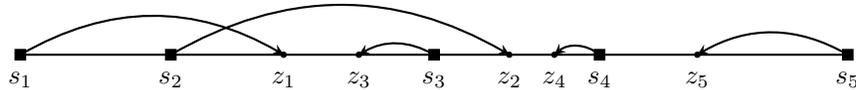
\begin{figure}[h!]
\centering
\begin{tikzpicture}[xscale=1, yscale=1]
  \filldraw[thick] (0,0) -- (11,0);

  \filldraw (0,-0.1) node[align=center, below] {\small $s_1$};
  \filldraw (2,-0.1) node[align=center, below] {\small $s_2$};
  \filldraw (3.5,-0.1) node[align=center, below] {\small $z_1$};
  \filldraw (4.5,-0.1) node[align=center, below] {\small $z_3$};
  \filldraw (5.5,-0.1) node[align=center, below] {\small $s_3$};
  \filldraw (6.5,-0.1) node[align=center, below] {\small $z_2$};
  \filldraw (7.1,-0.1) node[align=center, below] {\small $z_4$};
  \filldraw (7.7,-0.1) node[align=center, below] {\small $s_4$};
  \filldraw (9,-0.1) node[align=center, below] {\small $z_5$};
  \filldraw (11,-0.1) node[align=center, below] {\small $s_5$};

  \filldraw (3.5,0) circle(1pt);
  \filldraw (4.5,0) circle(1pt);
  \filldraw (6.5,0) circle(1pt);
  \filldraw (7.1,0) circle(1pt);
  \filldraw (9,0) circle(1pt);

  \filldraw ([xshift=-2pt,yshift=-2pt]0,0) rectangle ++(4pt,4pt);
  \filldraw ([xshift=-2pt,yshift=-2pt]2,0) rectangle ++(4pt,4pt);
  \filldraw ([xshift=-2pt,yshift=-2pt]5.5,0) rectangle ++(4pt,4pt);
  \filldraw ([xshift=-2pt,yshift=-2pt]7.7,0) rectangle ++(4pt,4pt);
  \filldraw ([xshift=-2pt,yshift=-2pt]11,0) rectangle ++(4pt,4pt);

  \draw[-stealth,thick] (0,0) to[bend left=30] (3.5,0);
  \draw[-stealth,thick] (2,0) to[bend left=30] (6.5,0);
  \draw[-stealth,thick] (5.5,0) to[bend right=30] (4.5,0);
  \draw[-stealth,thick] (7.7,0) to[bend right=40] (7.1,0);
  \draw[-stealth,thick] (11,0) to[bend right=30] (9,0);

\end{tikzpicture}
\caption{An example with kangaroos, covered, large, and small players. In particular, $1 \in {\cal K}$ as $z_1\notin [s_1,s_2]$, $2 \in {\cal K} \cap {\cal C}$ as she is covered by player $1$ and, in addition, $z_2 \notin [s_1, s_3]$. Similarly, $3 \in {\cal C}$ as she is covered by player $2$, while $4$ and $5$ are neither kangaroo nor covered. Since $\cost_4(\zz,\sss)< \frac{1}{3}(s_4-s_3)$,  it is $4 \in S$, while, since $\cost_5(\zz,\sss) \geq \frac{1}{3}(s_5-s_4)$, we have $5 \in L$.}
\label{fig:example-opt-lower-1}
\end{figure}

We proceed to prove five useful properties (Claims~\ref{claim:ineq-covered-kangaroos}--\ref{claim:ns2}); recall that $\sigma(i)$ denotes the single neighbor of player $i$.

\begin{claim}\label{claim:ineq-covered-kangaroos}
Let $i\in {\cal K}$. Then, $\cost_i(\zz,\sss)-\excess_i \geq \frac{1}{3}(|s_i-s_{\eta(i)}|+\sum_{j\in C_i}{|s_j-s_{\eta(j)}|})$.
\end{claim}

\begin{proof}
We assume that $z_i>s_i$ (the other case is symmetric). Let $\ell$ be the player with the rightmost belief that is covered by $i$. Then, $\excess_i=z_i-s_{\ell}$. We have
\begin{align*}
\cost_i(\zz,\sss)-\excess_i &= \max\{|s_i-z_i|,|z_i-z_{\sigma(i)}|\}-(z_i-s_\ell)\\
&\geq s_\ell-s_i=\sum_{j=i}^{\ell-1}(s_{j+1}-s_j)\\&\geq \frac{1}{3}(|s_i-s_{\eta(i)}|+\sum_{j\in C_i}{|s_j-s_{\eta(j)}|})\end{align*}
as desired.
\end{proof}

\begin{claim}\label{claim:ineq-small-kangaroos}
Let $i\in S$ such that $\sigma(i)\in {\cal K}$. Then, $\cost_i(\zz,\sss)+\excess_{\sigma(i)} \geq \frac{1}{3}|s_i-s_{\eta(i)}|$.
\end{claim}
\begin{proof}
We assume that $\sigma(i)>i$ (the other case is symmetric). If $z_{\sigma(i)}>s_{\sigma(i)}$, then
\begin{align*}
\cost_i(\zz,\sss) &= \max\{|s_i-z_i|,|z_i-z_{\sigma(i)}|\}\\
&\geq \frac{1}{2}(z_{\sigma(i)}-s_i) > \frac{1}{2}(s_{\sigma(i)}-s_i)\\
&\geq \frac{1}{3}|s_i-s_{\eta(i)}|,
\end{align*}
 which contradicts the fact that $i$ is a small player. Hence, $z_{\sigma(i)}\in [s_i,s_{\sigma(i)}]$, otherwise player $i$ would be covered. Let $j$ be the player with the leftmost belief that is covered by player $\sigma(i)$. Then, $\excess_{\sigma(i)}=s_j-z_{\sigma(i)}$. We have
 \begin{align*}
 \cost_i(\zz,\sss)+\excess_{\sigma(i)} &= \max\{|s_i-z_i|,|z_i-z_{\sigma(i)}|\}+s_j-z_{\sigma(i)}\\
  &\geq \frac{1}{2} (z_{\sigma(i)}-s_i)+\frac{1}{2}(s_j-z_{\sigma(i)}) = \frac{1}{2}(s_j-s_i)\\
  &\geq \frac{1}{3}|s_i-s_{\eta(i)}|
  \end{align*}
  as desired.
\end{proof}

\begin{claim}\label{claim:ineq-small-large}
Let $i\in S$ such that $\sigma(i)\in L$ or $\sigma(i)\in {\cal C}\setminus{\cal K}$. Then, $\cost_i(\zz,\sss)+\cost_{\sigma(i)}(\zz,\sss) \geq \frac{1}{3}(|s_i-s_{\eta(i)}|+|s_{\sigma(i)}-s_{\eta(\sigma(i))}|)$.
\end{claim}

\begin{proof}
We assume that $\sigma(i)>i$ (the other case is symmetric). If $z_{\sigma(i)}>s_{\sigma(i)}$, then
\begin{align*}
\cost_i(\zz,\sss) &= \max\{|s_i-z_i|,|z_i-z_{\sigma(i)}|\}\\
&\geq \frac{1}{2}(z_{\sigma(i)}-s_i) > \frac{1}{2}(s_{\sigma(i)}-s_i)\\
&\geq \frac{1}{3}|s_i-s_{\eta(i)}|,\end{align*}
 which contradicts the fact that $i$ is a small player. Hence, $z_{\sigma(i)}\in [s_i,s_{\sigma(i)}]$, otherwise player $i$ would be covered. Then, \begin{align*}
 \cost_i(\zz,\sss)+\cost_{\sigma(i)}(\zz,\sss)&=\max\{|s_i-z_i|,|z_i-z_{\sigma(i)}|\}+\max\{|s_{\sigma(i)}-z_{\sigma(i)}|,|z_{\sigma(i)}-z_{\sigma(\sigma(i))}|\}\\
 &\geq z_{\sigma(i)}-z_i+s_{\sigma(i)}-z_{\sigma(i)}=s_{\sigma(i)}-z_i.\end{align*}
  Since $i$ is small, we have $z_i<s_i+\frac{1}{3}(s_{\sigma(i)}-s_i)$ and we get
  $$
  \cost_i(\zz,\sss)+\cost_{\sigma(i)}(\zz,\sss) \geq \frac{2}{3}(s_{\sigma(i)}-s_i) \geq \frac{1}{3}|s_i-s_{\eta(i)}|+\frac{1}{3}|s_{\sigma(i)}-s_{\eta(\sigma(i))}|$$
  as desired. \end{proof}

Let $N(S)$ denote the set of players $j$ that are neighbors of players in $S$ (i.e., $j\in N(S)$ when $\sigma(i)=j$ for some player $i\in S$).

\begin{claim}\label{claim:ns1}
$N(S)$ does not contain small players.
\end{claim}
\begin{proof}
Assume otherwise that for some player $i\in S$, $\sigma(i)$ also belongs to $S$. Without loss of generality $\sigma(i)>i$. If $z_{\sigma(i)}\geq s_{\sigma(i)}$, then
$$\cost_i(\zz,\sss) \geq \frac{1}{2}|z_{\sigma(i)}-s_i|\geq \frac{1}{2}{|s_{\sigma(i)}-s_i|} \geq \frac{1}{3}|s_i-s_{\eta(i)}|$$
 contradicting the fact that $i\in S$. So, $z_{\sigma(i)}<s_{\sigma(i)}$. Also, $z_{\sigma(i)}\geq s_i$ (since neither $i$ is covered nor $\sigma(i)$ is kangaroo). Since $\sigma(i)$ is small, $s_{\sigma(i)}-z_{\sigma(i)}<\frac{1}{3}|s_{\sigma(i)}-s_{\eta(\sigma(i))}|\leq \frac{1}{3}(s_{\sigma(i)}-s_i)$, i.e., $z_{\sigma(i)}>\frac{2}{3}s_{\sigma(i)}+\frac{1}{3}s_i$. Hence,
 $$ \cost_i(\zz,\sss) \geq \frac{1}{2}(z_{\sigma(i)}-s_i) > \frac{1}{3}(s_{\sigma(i)}-s_i),$$ which contradicts $i\in S$.
\end{proof}

\begin{claim}\label{claim:ns2}
For every two players $i,i'\in S$, $\sigma(i)\not=\sigma(i')$.
\end{claim}
\begin{proof}
Assume otherwise and let $\sigma(i)=\sigma(i')=j$ with $i<i'$. If $z_j\not\in [s_i,s_{i'}]$, then the cost of either $i$ or $i'$ is at least $\frac{1}{2}(s_{i'}-s_i)$, contradicting the fact that both players are small. Hence, $z_j\in [s_i,s_{i'}]$. Notice that $s_j\in [s_i,s_{i'}]$ as well, otherwise either $i$ or $i'$ would be covered by $j$. Now the fact that $i$ and $i'$ are small implies that $$\cost_i(\zz,\sss)+\cost_{i'}(\zz,\sss)<\frac{1}{3}|s_i-s_{\eta(i)}|+\frac{1}{3}|s_{i'}-s_{\eta(i')}|\leq \frac{1}{3}(s_j-s_i)+\frac{1}{3}(s_{i'}-s_j)=\frac{1}{3}(s_{i'}-s_i).$$ On the other hand, $$\cost_i(\zz,\sss)+\cost_{i'}(\zz,\sss)\geq \frac{1}{2}(z_j-s_i)+\frac{1}{2}(s_{i'}-z_j)=\frac{1}{2}(s_{i'}-s_i),$$ a contradiction.
\end{proof}

We now consider the social cost of $\zz$ due to players of different groups and exploit the claims above so that we obtain the lemma. In particular, we have
\begin{align*}
\SC(\zz,\sss) &= \sum_{i=1}^n{\cost_i(\zz,\sss)}\\
&\geq \sum_{i\in S:\sigma(i)\in {\cal K}}{\left(\cost_i(\zz,\sss)+\excess_{\sigma(i)}\right)} \\
&\quad + \sum_{i\in S:\sigma(i)\in L\cup({\cal C}\setminus{\cal K})}{\left(\cost_i(\zz,\sss)+\cost_{\sigma(i)}(\zz,\sss)\right)}\\
&\quad + \sum_{i\in {\cal K}}{\left(\cost_i(\zz,\sss)-\excess_i\right)}
+ \sum_{i\in L\setminus N(S)}{\cost_i(\zz,\sss)}\\
&\geq \frac{1}{3}\sum_{i\in S:\sigma(i)\in {\cal K}}{|s_i-s_{\eta(i)}|} \\
&\quad + \frac{1}{3}\sum_{i\in S:\sigma(i)\in L\cup({\cal C}\setminus{\cal K})}{\left(|s_i-s_{\eta(i)}|+|s_{\sigma(i)}-s_{\eta(\sigma(i))}|\right)}\\
&\quad + \frac{1}{3}\sum_{i\in {\cal K}}{\left(|s_i-s_{\eta(i)}|+\sum_{j\in C_i}{|s_j-s_{\eta(j)}|}\right)}
+ \frac{1}{3}\sum_{i\in L\setminus N(S)}{|s_i-s_{\eta(i)}|}\\
&\geq \frac{1}{3}\sum_{i=1}^n{|s_i-s_{\eta(i)}|},
\end{align*}
as desired. The first inequality follows by the classification of the players and due to Claims \ref{claim:ns1} and \ref{claim:ns2}. The second one follows by Claims \ref{claim:ineq-small-kangaroos}, \ref{claim:ineq-small-large}, and \ref{claim:ineq-covered-kangaroos}, and by the definition of large players. The last one follows since the players enumerated in the first two sums at its left cover the whole set $S$ (by Claim \ref{claim:ns1}).
\end{proof}

We are ready to present our upper bound on the price of anarchy for $1$-COF games.

\begin{theorem}\label{thm:poa-upper-1}
The price of anarchy of $1$-COF games over pure Nash equilibria is at most~$3$.
\end{theorem}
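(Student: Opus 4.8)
The plan is to follow the same two-part roadmap as in the proof of Theorem~\ref{thm:ppoa-upper}, but to replace the per-player factor $2$ used there by the optimal factor $1$, exploiting the much tighter consistency conditions available in $1$-COF games. By Lemma~\ref{lem:opt-lower-1} we have $\SC(\zz^*(\sss),\sss)\geq\tfrac13\sum_{i=1}^n|s_i-s_{\eta(i)}|$, so it suffices to prove that every pure Nash equilibrium $\zz$ satisfies $\SC(\zz,\sss)\leq\sum_{i=1}^n|s_i-s_{\eta(i)}|$; I will in fact establish the pointwise bound $\cost_i(\zz,\sss)\leq|s_i-s_{\eta(i)}|$ for every player $i$ and then sum. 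For the setup: by Lemma~\ref{lem:mon} assume $z_1\leq\cdots\leq z_n$; by Lemma~\ref{lem:nei} (with $k=1$) every player has $\sigma(i)\in\{i-1,i+1\}$; by Lemma~\ref{lem:equilibrium} we have $z_i=\tfrac12(s_i+z_{\sigma(i)})$, hence $\cost_i(\zz,\sss)=\tfrac12|s_i-z_{\sigma(i)}|$. Write $d_i=s_{i+1}-s_i\geq0$ for $i\in[n-1]$ and set $d_0=d_n=+\infty$, so that $|s_i-s_{\eta(i)}|=\min\{d_{i-1},d_i\}$. Finally, decompose the equilibrium neighbourhood structure into segments $C(a,b,c)$ as in Section~\ref{sec:algorithm}; since $\zz$ corresponds to a source--sink path in the graph $G$ of that section, every segment is legit and consecutive segments satisfy the edge conditions of Algorithm~\ref{alg:ConstructGraph}.

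\textbf{Inside a segment.} A short check shows that the opinion of a left-part player ($\sigma(p)=p+1$, $a\leq p\leq b$) lies above her belief and that of a right-part player ($\sigma(p)=p-1$, $b+1\leq p\leq c$) below it; consequently $\cost_b=\cost_{b+1}=\tfrac13 d_b$, $\cost_p=\tfrac12(d_p+\cost_{p+1})$ for $a\leq p<b$, and $\cost_p=\tfrac12(d_{p-1}+\cost_{p-1})$ for $b+1<p\leq c$, and $z_p=s_p+\cost_p$ (resp.\ $z_p=s_p-\cost_p$) on the left (resp.\ right) part. For a left-part player $p$ with $a<p\leq b$, plugging these expressions into the legit inequality $|z_{p+1}-s_p|\leq|z_{p-1}-s_p|$ and using $\cost_{p-1}=\tfrac12(d_{p-1}+\cost_p)$ gives $4\cost_p\leq|\cost_p-d_{p-1}|$, whence $\cost_p\leq\tfrac15 d_{p-1}$; reinserting this into the recursion yields $\cost_p\leq\tfrac35 d_p$ (and similarly $\cost_a\leq\tfrac35 d_a$). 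Hence $\cost_p\leq\min\{d_{p-1},d_p\}$ for all $a<p\leq b$ and $\cost_a\leq d_a$, with the mirror statements on the right part. It remains to bound $\cost_a$ by $d_{a-1}$ and $\cost_c$ by $d_c$.

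\textbf{Segment boundaries (the hard part).} Suppose $a>1$, and let $a-1$ be the right endpoint of the preceding segment, so $\sigma(a-1)=a-2$ while $\sigma(a)=a+1$. The relevant edge conditions are $|z_{a+1}-s_a|\leq|z_{a-1}-s_a|$ and $|z_{a-2}-s_{a-1}|\leq|z_a-s_{a-1}|$, which (given the orderings of the points established above) read $2\cost_a\leq s_a-z_{a-1}$ and $z_{a-2}\geq 2s_{a-1}-z_a$. One now distinguishes whether player $a-1$ is ``deep'' in the preceding segment, so $z_{a-1}=\tfrac12(s_{a-1}+z_{a-2})$, or is the right peak member there, so $z_{a-1}=s_{a-2}+\tfrac23 d_{a-2}$ and $z_{a-2}=s_{a-2}+\tfrac13 d_{a-2}$. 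In the first case the two inequalities combine to $\tfrac12(3s_{a-1}-z_a)\leq s_a-2\cost_a$, which with $z_a=s_a+\cost_a$ simplifies to $\cost_a\leq s_a-s_{a-1}=d_{a-1}$; in the second case one uses the peak version of player $(a-1)$'s condition to get $d_{a-2}\leq\tfrac32(\cost_a+d_{a-1})$, substitutes this into $\cost_a\leq\tfrac12 d_{a-1}+\tfrac16 d_{a-2}$ (which follows from $2\cost_a\leq s_a-z_{a-1}$), and again concludes $\cost_a\leq d_{a-1}$. Combined with $\cost_a\leq d_a$ this gives $\cost_a\leq\min\{d_{a-1},d_a\}=|s_a-s_{\eta(a)}|$; for $a=1$ the bound is vacuous since $\eta(1)=2$. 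The endpoint $c$ is handled by the mirror-image argument. I expect this chaining of two equilibrium inequalities, with the case split on the type of the neighbouring endpoint, to be the one genuinely delicate point, and it is exactly where the $1$-COF structure (rather than the weaker $k$-COF one) is indispensable.

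\textbf{Conclusion.} Summing the pointwise bound over all players gives $\SC(\zz,\sss)=\sum_{i=1}^n\cost_i(\zz,\sss)\leq\sum_{i=1}^n|s_i-s_{\eta(i)}|\leq 3\,\SC(\zz^*(\sss),\sss)$ by Lemma~\ref{lem:opt-lower-1}, hence $\PoA(\sss)\leq 3$; taking the supremum over belief vectors finishes the proof. (Players with $\min\{d_{i-1},d_i\}=0$, i.e.\ sharing a belief with a neighbour, need $\cost_i(\zz,\sss)=0$ as well; this follows from a short direct check, or from viewing such instances as limits of non-degenerate ones.)
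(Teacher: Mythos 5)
Your proposal is correct, and it follows the paper's overall roadmap exactly: the same key Lemma~\ref{lem:opt-lower-1} for the lower bound on the optimum, combined with the same per-player bound $\cost_i(\zz,\sss)\leq|s_i-s_{\eta(i)}|$ at equilibrium, summed and divided. Where you diverge is in how that per-player bound is established. The paper proves it by a short, self-contained four-case analysis on $(\sigma(i),\sigma(i-1))$, using only Lemmas~\ref{lem:mon} and~\ref{lem:betw} plus one neighborhood inequality per case (the trick in Cases III/IV being to add and subtract $2\cost_{i-1}$ or $2\cost_i$ and telescope). You instead route the argument through the segment machinery of Section~\ref{sec:algorithm}: solving the opinion recursion inside each segment $C(a,b,c)$ to get the quantitative bounds $\cost_p\leq\frac15 d_{p-1}$ and $\cost_p\leq\frac35 d_p$ for interior players, and then chaining the two cross-segment edge conditions (with a sub-case split on whether the adjacent endpoint is the peak of its segment) to handle segment boundaries. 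I checked your computations — the $4\cost_p\leq|\cost_p-d_{p-1}|$ derivation, both boundary sub-cases, and the degenerate/endpoint players — and they go through. Your cases in fact mirror the paper's (interior left/right-part players correspond to its Cases I--II, segment boundaries to its Cases III--IV), so the content is parallel; the trade-off is that your version gives sharper intermediate constants and reuses Section~\ref{sec:algorithm}'s structure, at the price of being longer and of leaning on the correspondence between equilibria and legit segment decompositions (which the paper asserts but is anyway a direct consequence of Lemmas~\ref{lem:equilibrium}--\ref{lem:nei}), whereas the paper's four-case argument is shorter and independent of that section.
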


\begin{proof}
Let us consider a $1$-COF game with $n$ players and belief vector $\sss$. Let $\zz^*$ be an optimal opinion vector and recall that $\eta(i)$ is the player that minimizes the distance $|s_i-s_{\eta(i)}|$.
By Lemma \ref{lem:opt-lower-1}, we have
\begin{align}\label{eq:opt-lower-1}
\SC(\zz^*,\sss) &\geq \frac{1}{3}\sum_{i=1}^n{|s_i-s_{\eta(i)}|}.
\end{align}

Now, consider any pure Nash equilibrium $\zz$ of the game. We will show that
\begin{align}\label{eq:nash-upper}
\SC(\zz,\sss) &\leq  \sum_{i=1}^n{|s_i-s_{\eta(i)}|}.
\end{align}
The theorem then follows by (\ref{eq:opt-lower-1}) and (\ref{eq:nash-upper}).

In particular, we will show that $\cost_i(\zz,\sss)\leq |s_i-s_{\eta(i)}|$ for each player $i$. Let us assume that $\eta(i)=i-1$; the case $\eta(i)=i+1$ is symmetric. Recall that $\sigma(i)$ is the neighbor of player $i$ in the pure Nash equilibrium $\zz$. We distinguish between four cases.
\begin{itemize}
\item {\bf Case I:} $\sigma(i)=i-1$. By Lemma \ref{lem:betw}, we have $s_{i-1}\leq z_i\leq s_i$. Then, clearly, $\cost_i(\zz,\sss) = |s_i-z_i| \leq |s_i-s_{i-1}|$ as desired.

\item {\bf Case II:} $\sigma(i)=i+1$ and $\sigma(i-1)=i$. By Lemmas \ref{lem:mon} and \ref{lem:betw}, we have $s_{i-1}\leq z_{i-1}\leq s_i \leq z_i$. Since player $i$ has player $i+1$ as neighbor, we have $|z_{i+1}-s_i|\leq |s_i-z_{i-1}|$. Hence, $\cost_i(\zz,\sss) =|z_i-s_i|\leq |z_{i+1}-s_i|\leq |s_i-z_{i-1}|\leq |s_i-s_{i-1}|$.

\item {\bf Case III:} $\sigma(i)=i+1$, $\sigma(i-1)=i-2$, and $\cost_i(\zz,\sss)\leq \cost_{i-1}(\zz,\sss)$. By the definition of $\sigma(\cdot)$ and Lemma \ref{lem:mon}, we have $z_{i-2}\leq z_{i-1}\leq s_{i-1}\leq s_i\leq z_i\leq z_{i+1}$. We have
\begin{align*}
\cost_i(\zz,\sss) &\leq 2\cost_{i-1}(\zz,\sss)-\cost_i(\zz,\sss)\\
&= |s_{i-1}-z_{i-2}|-|z_{i}-s_i|\\
&\leq |z_i-s_{i-1}|-|z_{i}-s_{i}|\\
&= |s_i-s_{i-1}|.
\end{align*}
The second inequality follows since player $i-2$ (instead of $i$) is the neighbor of player $i-1$.

\item {\bf Case IV:} $\sigma(i)=i+1$, $\sigma(i-1)=i-2$, and $\cost_i(\zz,\sss)>\cost_{i-1}(\zz,\sss)$.
\begin{align*}
\cost_i(\zz,\sss) &< 2\cost_i(\zz,\sss)-\cost_{i-1}(\zz,\sss)\\
&= |z_{i+1}-s_i|-|s_{i-1}-z_{i-1}|\\
&\leq |s_i-z_{i-1}|-|s_{i-1}-z_{i-1}|\\
&= |s_i-s_{i-1}|.
\end{align*}
The second inequality follows since player $i+1$ (instead of $i-1$) is the neighbor of player $i$.
\end{itemize}
This completes the proof.
\end{proof}


\section{Lower bounds on the price of anarchy}\label{sec:lower}
This section contains our lower bounds on the price of anarchy.~\footnote{We remark that our lower bounds on the price of stability in Section \ref{sec:existence} are also lower bounds on the price of anarchy. However, the lower bounds presented in this section are much stronger.} We begin by considering the simpler case of $1$-COF games, for which we present a tight lower bound of $3$ for pure Nash equilibria (Theorem \ref{thm:pure-lower-1}) and a lower bound of $6$ for mixed Nash equilibria (Theorem \ref{thm:mixed-lower-1}). We remark that, for $1$-COF games, this implies that mixed Nash equilibria are strictly worse than pure ones. Then, we study the general case of $k$-COF games and we show lower bounds for pure and mixed Nash equilibria (Theorems \ref{thm:pure-lower-k} and \ref{thm:mixed-lower-k}, respectively) that grow linearly with $k$.

\subsection{The case of $1$-COF games}
We now present our lower bounds for the case of $1$-COF games; both results rely on the same, and rather simple, instance.

\begin{figure}[h!]
\centering
\begin{subfigure}{\textwidth}
\centering
\begin{tikzpicture}[xscale=1, yscale=1]
  \filldraw (0,0) -- (4,0) -- (7,0) -- (11,0);

  \filldraw (0,-0.1) node[align=center, below] {\small $-10-\lambda$};
  \filldraw (0,0.1) node[align=center, above] {\small $[2]$};

  \filldraw (4,-0.1) node[align=center, below] {\small $-2-\lambda$};
  \filldraw (4,0.1) node[align=center, above] {\small $[1]$};

  \filldraw (7,-0.1) node[align=center, below] {\small $2+\lambda$};
  \filldraw (7,0.1) node[align=center, above] {\small $[1]$};

  \filldraw (11,-0.1) node[align=center, below] {\small $10+\lambda$};
  \filldraw (11,0.1) node[align=center, above] {\small $[2]$};

  \filldraw ([xshift=-2pt,yshift=-2pt]0,0) rectangle ++(4pt,4pt);
  \filldraw ([xshift=-2pt,yshift=-2pt]4,0) rectangle ++(4pt,4pt);
  \filldraw ([xshift=-2pt,yshift=-2pt]7,0) rectangle ++(4pt,4pt);
  \filldraw ([xshift=-2pt,yshift=-2pt]11,0) rectangle ++(4pt,4pt);
\end{tikzpicture}
\caption{ \ }
\end{subfigure}
\begin{subfigure}{\textwidth}
\centering
\begin{tikzpicture}[xscale=1, yscale=1]
  \filldraw (0,0) -- (11,0);

  \filldraw (0,-0.1) node[align=center, below] {\small $-10-\lambda$};
  \filldraw (4,-0.1) node[align=center, below] {\small $-2-\lambda$};
  \filldraw (7,-0.1) node[align=center, below] {\small $2+\lambda$};
  \filldraw (11,-0.1) node[align=center, below] {\small $10+\lambda$};

  \filldraw ([xshift=-2pt,yshift=-2pt]0,0) rectangle ++(4pt,4pt);
  \filldraw ([xshift=-2pt,yshift=-2pt]4,0) rectangle ++(4pt,4pt);
  \filldraw ([xshift=-2pt,yshift=-2pt]7,0) rectangle ++(4pt,4pt);
  \filldraw ([xshift=-2pt,yshift=-2pt]11,0) rectangle ++(4pt,4pt);

  \filldraw (2,0) circle(1pt);
  \filldraw (2,-0.1) node[align=center, below] {\small $-6-\lambda$};
  \filldraw (9,0) circle(1pt);
  \filldraw (9,-0.1) node[align=center, below] {\small $6+\lambda$};

  \node at (0,0) (zero){};
  \node at (11,0) (last){};

  \draw[-stealth,thick] (4,0) to[bend right=20] (2,0);
  \draw[-stealth,thick] (7,0) to[bend left=20] (9,0);
   \path[-stealth,thick] (zero) edge[out=45,in=135,looseness=10]  (zero);
  \path[-stealth,thick] (last) edge[out=135,in=45,looseness=10]  (last);
\end{tikzpicture}
\caption{ \ }
\end{subfigure}
\begin{subfigure}{\textwidth}
\centering
\begin{tikzpicture}[xscale=1, yscale=1]
  \filldraw (0,0) -- (11,0);

  \filldraw (0,-0.1) node[align=center, below] {\small $-10-\lambda$};

  \filldraw (4,-0.1) node[align=center, below] {\small $-2-\lambda$};

  \filldraw (7,-0.1) node[align=center, below] {\small $2+\lambda$};

  \filldraw (11,-0.1) node[align=center, below] {\small $10+\lambda$};

  \filldraw ([xshift=-2pt,yshift=-2pt]0,0) rectangle ++(4pt,4pt);
  \filldraw ([xshift=-2pt,yshift=-2pt]4,0) rectangle ++(4pt,4pt);
  \filldraw ([xshift=-2pt,yshift=-2pt]7,0) rectangle ++(4pt,4pt);
  \filldraw ([xshift=-2pt,yshift=-2pt]11,0) rectangle ++(4pt,4pt);


  \filldraw (5,0) circle(1pt);
  \filldraw (5.1,-0.05) node[align=center, below] {\small $\frac{-2-\lambda}{3}$};
  \filldraw (6,0) circle(1pt);
  \filldraw (5.9,-0.05) node[align=center, below] {\small $\frac{2+\lambda}{3}$};

  \node at (0,0) (zero){};
  \node at (11,0) (last){};

  \draw[-stealth,thick] (4,0) to[bend left=30] (5,0);
  \draw[-stealth,thick] (7,0) to[bend right=30] (6,0);
  \path[-stealth,thick] (zero) edge[out=45,in=135,looseness=10]  (zero);
  \path[-stealth,thick] (last) edge[out=135,in=45,looseness=10]  (last);
\end{tikzpicture}
\caption{ \ }
\end{subfigure}
\caption{(a) The $1$-COF game considered in the proofs of Theorems~\ref{thm:pure-lower-1} and \ref{thm:mixed-lower-1}. (b) The pure Nash equilibrium vector $\zz$ (see the proof of Theorem~\ref{thm:pure-lower-1}) with social cost $8$. (c) The opinion vector $\tilde{\zz}$ with social cost $\frac{8+4\lambda}{3}$.}
\label{fig:lower-bound}
\end{figure}
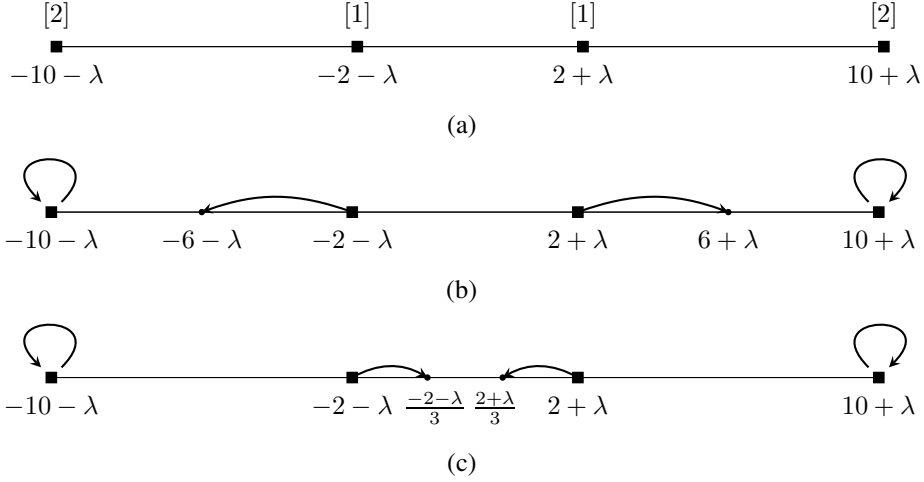

\begin{theorem}\label{thm:pure-lower-1}
The price of anarchy of $1$-COF games over pure Nash equilibria is at least $3$.
\end{theorem}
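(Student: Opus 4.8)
The plan is to exhibit the single six-player $1$-COF game depicted in Figure~\ref{fig:lower-bound}(a) and show that it has a pure Nash equilibrium whose social cost is arbitrarily close to three times the optimum. Take beliefs $\sss=(-10-\lambda,-10-\lambda,-2-\lambda,2+\lambda,10+\lambda,10+\lambda)$ for an arbitrarily small constant $\lambda>0$; the two extreme beliefs are deliberately duplicated so that, in equilibrium, the two ``inner'' players (with beliefs $\pm(2+\lambda)$) are dragged towards the extremes rather than towards each other.

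First I would upper bound the optimum by evaluating the opinion vector $\tilde\zz=(-10-\lambda,-10-\lambda,-\tfrac{2+\lambda}{3},\tfrac{2+\lambda}{3},10+\lambda,10+\lambda)$ of Figure~\ref{fig:lower-bound}(c). Each extreme player keeps her belief as her opinion and has her (co-located) duplicate as neighbour, so her cost is $0$; each inner player has the other inner player as neighbour, because the distance $\tfrac{8+4\lambda}{3}$ between an inner belief and the opposite inner opinion is smaller than the distance $8$ to the nearer extreme opinion, and she incurs cost $\tfrac{2(2+\lambda)}{3}$. Hence $\SC(\zz^*,\sss)\le\SC(\tilde\zz,\sss)=\tfrac{8+4\lambda}{3}$.

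Next I would exhibit the equilibrium $\zz=(-10-\lambda,-10-\lambda,-6-\lambda,6+\lambda,10+\lambda,10+\lambda)$ of Figure~\ref{fig:lower-bound}(b). Since, for fixed $\zz_{-i}$, the neighbour $\sigma(i)$ depends only on which opinion is closest to $s_i$ and not on $z_i$ itself, a player's unique best response is to set $z_i$ to the midpoint of the shortest interval containing $s_i$ and $z_{\sigma(i)}$; so it suffices, exactly as in Lemma~\ref{lem:equilibrium}, to check that every opinion of $\zz$ is the midpoint of its interval. Each extreme player has her duplicate as neighbour, so her interval is a single point and her cost is $0$. For the left inner player the opinion $-10-\lambda$ lies at distance $8$ from her belief $-2-\lambda$, strictly less than the distance $8+2\lambda$ to the right inner opinion $6+\lambda$, so her neighbour is an extreme player, her interval is $[-10-\lambda,-2-\lambda]$ with midpoint $-6-\lambda$, and her cost is $4$; the right inner player is symmetric. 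Thus $\zz\in\PNE(\sss)$ and $\SC(\zz,\sss)=8$. Note the only ties in this construction are between the pairs of identically placed duplicate players, so nothing here depends on how ties are broken.

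Combining the two estimates,
$$\PoA(\sss)\ \ge\ \frac{\SC(\zz,\sss)}{\SC(\zz^*,\sss)}\ \ge\ \frac{8}{(8+4\lambda)/3}\ =\ \frac{6}{2+\lambda},$$
which tends to $3$ as $\lambda\to 0^+$; taking the supremum over all belief vectors gives the claimed bound (and, with Theorem~\ref{thm:poa-upper-1}, shows it is tight). The only genuinely delicate step is the equilibrium verification, specifically confirming that each inner player's neighbour is an extreme player and not the other inner player — this is precisely where the perturbation $\lambda$ is used and where one must check that tie-breaking plays no role.
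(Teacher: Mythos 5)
Your proposal is correct and is essentially identical to the paper's own proof: same six-player instance, same equilibrium $\zz$ with social cost $8$, same comparison vector $\tilde\zz$ with social cost $(8+4\lambda)/3$, and the same limit $\lambda\to 0^+$. The verification details (inner players' neighbours being the extreme players in $\zz$ but each other in $\tilde\zz$, and ties occurring only between co-located duplicates) all check out.
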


\begin{proof}
Let $\lambda \in (0,1)$ and consider a $1$-COF game with six players and belief vector
$\sss =(-10-\lambda,-10-\lambda,-2-\lambda,2+\lambda,10+\lambda,10+\lambda).$
This game is depicted in Figure~\ref{fig:lower-bound}a.
We can show that the opinion vector (see Figure~\ref{fig:lower-bound}b)
$$\zz = (-10-\lambda,-10-\lambda,-6-\lambda,6+\lambda,10+\lambda,10+\lambda)$$
is a pure Nash equilibrium with social cost $\SC(\zz, \sss) = 8$. The first two players suffer zero cost as they follow each other and their opinions coincide with their beliefs; the same holds also for the last two players. For the third player, it is $\sigma(3) \in \{1,2\}$ since $|z_1-s_3| = |z_2-s_3| = 8 < |z_4-s_3| = 8+2\lambda$ and $z_3$ is in the middle of the interval $[-10-\lambda,-2-\lambda]$; hence, $\cost_3(\zz,\sss) = 4$. Similarly, we have $\sigma(4) \in \{5,6\}$, $z_4$ lies in the middle of the interval $[2+\lambda, 10+\lambda]$ and $\cost_4(\zz,\sss) = 4$. Hence, $\zz$ is indeed a pure Nash equilibrium.

Now, consider the opinion vector (see Figure~\ref{fig:lower-bound}c)
$$\tilde{\zz} = \left(-10-\lambda,-10-\lambda,\frac{-2-\lambda}{3},\frac{2+\lambda}{3},10+\lambda,10+\lambda\right)$$
which yields a social cost of $\SC(\tilde{\zz}, \sss) = \frac{8+4\lambda}{3}$; here, again, the first and last two players have zero cost, but players $3$ and $4$ now each have cost $\frac{4+2\lambda}{3}$ since they follow each other. The optimal social cost is upper bounded by $\SC(\tilde{\zz})$ and, hence, the price of anarchy is at least
$$\frac{\SC(\zz, \sss)}{\SC(\tilde{\zz}, \sss)} = \frac{3}{1+\lambda/2},$$
and the theorem follows by setting $\lambda$ arbitrarily close to $0$.
\end{proof}

Our next theorem gives a lower bound on the price of anarchy over mixed Nash equilibria for $1$-COF games; we remark that this lower bound is greater than the upper bound of Theorem \ref{thm:poa-upper-1} for the price of anarchy over pure Nash equilibria.

\begin{theorem}\label{thm:mixed-lower-1}
The price of anarchy of $1$-COF games over mixed Nash equilibria is at least $6$.
\end{theorem}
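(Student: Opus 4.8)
The plan is to reuse the six-player instance from Theorem~\ref{thm:pure-lower-1}, namely the belief vector $\sss=(-10-\lambda,-10-\lambda,-2-\lambda,2+\lambda,10+\lambda,10+\lambda)$ with $\lambda\in(0,1)$, for which we already know that the optimal social cost is at most $\SC(\tilde\zz,\sss)=\frac{8+4\lambda}{3}$. It therefore suffices to exhibit a mixed Nash equilibrium of expected social cost approaching $16$. Fix a small $\epsilon>0$. In the candidate equilibrium, players $1,2$ deterministically express $-10-\lambda$, players $5,6$ deterministically express $10+\lambda$, player $3$ chooses uniformly at random between $-6-\lambda$ and $-6+\lambda+\epsilon$, and player $4$ chooses uniformly at random between $6+\lambda$ and $6-\lambda-\epsilon$ (the reflection of player $3$'s strategy). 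First I would observe that players $1,2,5,6$ are trivially best-responding: the opinion of each such player's ``twin'' is at distance $0$ from her belief, so her neighbor is her twin no matter what the others do, her cost equals the distance of her own opinion from her belief, and this is uniquely minimized at her belief; hence these four players contribute $0$ at every realization.

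The heart of the argument is to check that player $3$ (and, by symmetry, player $4$) cannot profitably deviate. The key structural point is that player $3$'s neighbor is determined solely by player $4$'s opinion (together with the fixed positions of $1,2,5,6$), not by player $3$'s own opinion: when player $4$ expresses $6+\lambda$, its distance from $s_3=-2-\lambda$ is $8+2\lambda>8$, so player $3$'s neighbor is the left cluster at $-10-\lambda$; when player $4$ expresses $6-\lambda-\epsilon$, its distance from $s_3$ is $8-\epsilon<8$, so player $3$'s neighbor is player $4$ itself. Consequently, using the elementary identity $\max\{|z-a|,|z-b|\}=\bigl|z-\frac{a+b}{2}\bigr|+\frac{|a-b|}{2}$, player $3$'s expected cost as a function of her opinion $z_3$ equals a constant plus $\frac12\bigl|z_3+(6+\lambda)\bigr|+\frac12\bigl|z_3-(2-\lambda-\frac{\epsilon}{2})\bigr|$, and such a function is minimized exactly over the interval $[-(6+\lambda),\,2-\lambda-\frac{\epsilon}{2}]$. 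Both of player $3$'s atoms lie in this interval, so she is indeed best-responding; the mirror computation handles player $4$.

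Finally I would compute the expected social cost. Since player $3$'s expected cost is constant on her best-response interval, evaluating it at $z_3=-6-\lambda$ gives $\frac12\cdot 4+\frac12\cdot(12-\epsilon)=8-\frac{\epsilon}{2}$, where the value $12-\epsilon$ is $\cost_3$ in the realization where player $3$ expresses $-6-\lambda$ and player $4$ expresses $6-\lambda-\epsilon$ and is her neighbor, so that $\cost_3=\max\{4,\,12-\epsilon\}$. By symmetry player $4$'s expected cost is also $8-\frac{\epsilon}{2}$, so the expected social cost is $16-\epsilon$, and the price of anarchy over mixed Nash equilibria of this game is at least $\frac{16-\epsilon}{(8+4\lambda)/3}$, which tends to $6$ as $\epsilon,\lambda\to 0$.

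I expect the main obstacle to be the best-response verification of the second step: one must show not merely that player $3$ is indifferent between her two atoms but that no other opinion beats them, which is exactly what the reduction to a sum of two absolute-value terms is meant to make transparent. A related point requiring care is the boundary behavior at distance $8$ from $s_3$ (and from $s_4$): this is precisely why player $4$'s ``inward'' atom is placed at $6-\lambda-\epsilon$ rather than at the tie point $6-\lambda$, so that the construction is robust to how ties are resolved, while still letting the large deviation $|z_3-z_4|\to 12$ occur with probability close to $\frac12$.
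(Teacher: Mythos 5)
Your proposal is correct and follows essentially the same route as the paper: the same six-player instance, the same deterministic clusters, and a symmetric two-atom randomization for the two middle players with one atom placed just inside and one just outside the critical distance $8$ from the other middle player's belief, so that the neighbor switches with probability $1/2$ and the expected social cost approaches $16$ against an optimum of $(8+4\lambda)/3$. The only differences are cosmetic (the paper uses atoms at $-6-\lambda$ and $-6+3\lambda$ rather than $-6-\lambda$ and $-6+\lambda+\epsilon$, and verifies best responses via the bound $\max\{|a|,|b|\}\geq a$ instead of your sum-of-two-absolute-values identity), and both verifications are sound.
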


\begin{proof}
Consider again the $1$-COF game depicted in Figure~\ref{fig:lower-bound}a with six players and belief vector $\sss =(-10-\lambda,-10-\lambda,-2-\lambda,2+\lambda,10+\lambda,10+\lambda)$, where $\lambda \in (0,1)$. To simplify the following discussion, we will refer to the first two players as the $L$ players, the third player as player $\ell$, the fourth player as player $r$, and the last two players as the $R$ players.

Let $\zz$ be a randomized opinion vector according to which $z_i=s_i$ for every $i \in L \cup R$, $z_\ell$ is chosen equiprobably from $\{-6-\lambda, -6+3\lambda\}$, and $z_r$ is chosen equiprobably from $\{6+\lambda, 6-3\lambda\}$. Observe that $\sigma(\ell) \in L$ whenever $z_r = 6+\lambda$, and $\sigma(\ell) = r$ whenever $z_r = 6-3\lambda$; each of these events occurs with probability $1/2$. Hence, we obtain
$$ \EE[\cost_{\ell}(\zz,\sss)] = \EE[\cost_r(\zz,\sss)] = \frac{1}{2}\left(\frac{4}{2}+\frac{4+4\lambda}{2}\right)+\frac{1}{2}\left(\frac{12-2\lambda}{2}+\frac{12-6\lambda}{2}\right) = 8-\lambda,$$ and, thus, $\EE[\SC(\zz,\sss)] = 16-2\lambda$. In the following, we will prove that $\zz$ is a mixed Nash equilibrium. First, observe that all players in sets $L$ and $R$ have no incentive to deviate since they follow each other and have zero cost. We will now argue about player $\ell$; due to symmetry, our findings will apply to player $r$ as well.

Consider a deterministic deviating opinion $y$ for player $\ell$. We will show that $\EE[\cost_{\ell}(\zz,\sss)]\leq \EE_{\zz_{-\ell}}[\cost_{\ell}(y,\zz_{-\ell}),\sss]$ for any $y$, which implies that player $\ell$ has no incentive to deviate from the randomized opinion $z_\ell$. Indeed, we have that
\begin{align*}
&\EE_{\zz_{-\ell}}[\cost_\ell((y,\zz_{-\ell}),\sss)] \\
&= \frac{1}{2}\max\{|-2-\lambda-y|,|y+10+\lambda|\}+\frac{1}{2}\max\{|-2-\lambda-y|,|6-3\lambda-y|\}\\
&\geq \frac{1}{2}(y+10+\lambda)+\frac{1}{2}(6-3\lambda-y)\\
&= 8-\lambda,
\end{align*}
where the inequality holds since $\max\{|a|,|b|\}\geq a$ for any $a$ and $b$. Hence, player $\ell$ has no incentive to deviate from her strategy in $\zz$, and neither has player $r$ due to symmetry. Therefore, $\zz$ is a mixed Nash equilibrium.

Now, consider the opinion vector
$$\tilde{\zz} = \left(-10-\lambda,-10-\lambda,\frac{-2-\lambda}{3},\frac{2+\lambda}{3},10+\lambda,10+\lambda \right)$$
which, as in Theorem~\ref{thm:pure-lower-1}, yields a social cost of $\SC(\tilde{\zz}, \sss) = \frac{8+4\lambda}{3}$. Hence, the optimal social cost is upper bounded by $\SC(\tilde{\zz}, \sss)$, and the price of anarchy over mixed equilibria is at least
$$\frac{\EE[\SC(\zz, \sss)]}{\SC(\tilde{\zz}, \sss)} = 3\frac{16-2\lambda}{8+4\lambda},$$
and the theorem follows by setting $\lambda$ arbitrarily close to $0$.
\end{proof}

\subsection{The general case of $k$-COF games with $k\geq 2$}

We will now present lower bounds on the price of anarchy for $k$-COF games, with $k\geq 2$. We start with the case of pure Nash equilibria and continue with the more general case of mixed equilibria. As in the case of $1$-COF games, a particular game will be used in order to derive the lower bounds both for pure and mixed Nash equilibria.

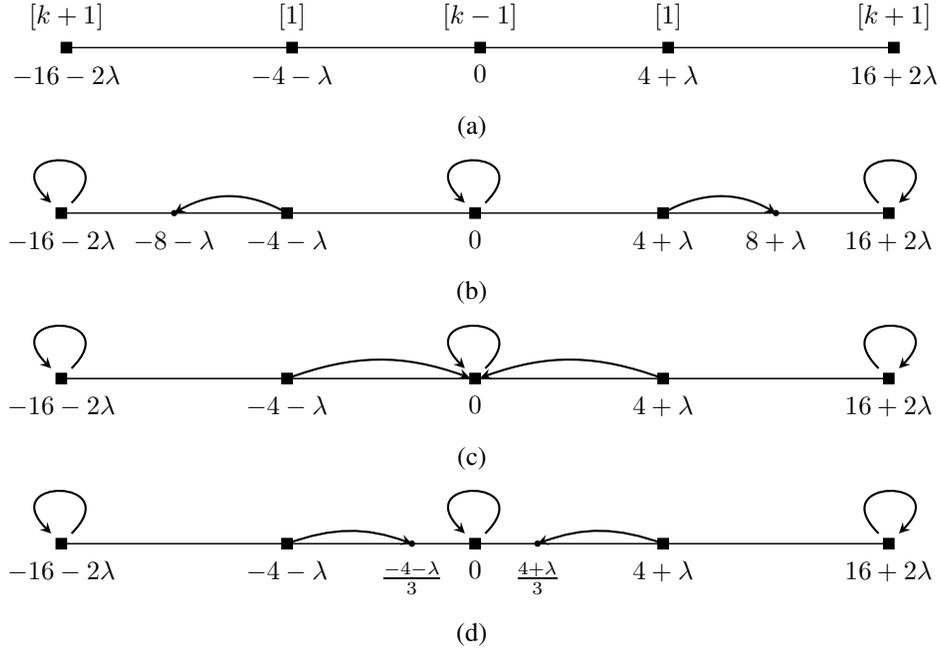
\begin{figure}[h!]
\centering
\begin{subfigure}{\textwidth}
\centering
\begin{tikzpicture}[xscale=1, yscale=1]
  \filldraw (0,0) -- (3,0) -- (5.5,0) -- (8,0) -- (11,0);

  \filldraw (0,-0.1) node[align=center, below] {\small $-16-2\lambda$};
  \filldraw (0,0.1) node[align=center, above] {\small $[k+1]$};

  \filldraw (3,-0.1) node[align=center, below] {\small $-4-\lambda$};
  \filldraw (3,0.1) node[align=center, above] {\small $[1]$};

  \filldraw (5.5,-0.1) node[align=center, below] {\small $0$};
  \filldraw (5.5,0.1) node[align=center, above] {\small $[k-1]$};

  \filldraw (8,-0.1) node[align=center, below] {\small $4+\lambda$};
  \filldraw (8,0.1) node[align=center, above] {\small $[1]$};

  \filldraw (11,-0.1) node[align=center, below] {\small $16+2\lambda$};
  \filldraw (11,0.1) node[align=center, above] {\small $[k+1]$};

  \filldraw ([xshift=-2pt,yshift=-2pt]0,0) rectangle ++(4pt,4pt);
  \filldraw ([xshift=-2pt,yshift=-2pt]3,0) rectangle ++(4pt,4pt);
  \filldraw ([xshift=-2pt,yshift=-2pt]5.5,0) rectangle ++(4pt,4pt);
  \filldraw ([xshift=-2pt,yshift=-2pt]8,0) rectangle ++(4pt,4pt);
  \filldraw ([xshift=-2pt,yshift=-2pt]11,0) rectangle ++(4pt,4pt);
\end{tikzpicture}
\caption{ \ }
\end{subfigure}
\begin{subfigure}{\textwidth}
\centering
\begin{tikzpicture}[xscale=1, yscale=1]
  \filldraw (0,0) -- (3,0) -- (5.5,0) -- (8,0) -- (11,0);

  \filldraw (0,-0.1) node[align=center, below] {\small $-16-2\lambda$};
  \filldraw (3,-0.1) node[align=center, below] {\small $-4-\lambda$};
  \filldraw (5.5,-0.1) node[align=center, below] {\small $0$};
  \filldraw (8,-0.1) node[align=center, below] {\small $4+\lambda$};
  \filldraw (11,-0.1) node[align=center, below] {\small $16+2\lambda$};

  \filldraw ([xshift=-2pt,yshift=-2pt]0,0) rectangle ++(4pt,4pt);
  \filldraw ([xshift=-2pt,yshift=-2pt]3,0) rectangle ++(4pt,4pt);
  \filldraw ([xshift=-2pt,yshift=-2pt]5.5,0) rectangle ++(4pt,4pt);
  \filldraw ([xshift=-2pt,yshift=-2pt]8,0) rectangle ++(4pt,4pt);
  \filldraw ([xshift=-2pt,yshift=-2pt]11,0) rectangle ++(4pt,4pt);

  \filldraw (1.5,0) circle(1pt);
  \filldraw (1.5,-0.1) node[align=center, below] {\small $-8-\lambda$};
  \filldraw (9.5,0) circle(1pt);
  \filldraw (9.5,-0.1) node[align=center, below] {\small $8+\lambda$};

  \node at (0,0) (zero){};
  \node at (5.5,0) (middle){};
  \node at (11,0) (last){};

  \draw[-stealth,thick] (3,0) to[bend right=30] (1.5,0);
  \draw[-stealth,thick] (8,0) to[bend left=30] (9.5,0);
  \path[-stealth,thick] (zero) edge[out=45,in=135,looseness=10]  (zero);
  \path[-stealth,thick] (middle) edge[out=45,in=135,looseness=10]  (middle);
  \path[-stealth,thick] (last) edge[out=135,in=45,looseness=10]  (last);
\end{tikzpicture}
\caption{ \ }
\end{subfigure}
\begin{subfigure}{\textwidth}
\centering
\begin{tikzpicture}[xscale=1, yscale=1]
  \filldraw (0,0) -- (3,0) -- (5.5,0) -- (8,0) -- (11,0);

  \filldraw (0,-0.1) node[align=center, below] {\small $-16-2\lambda$};
  \filldraw (3,-0.1) node[align=center, below] {\small $-4-\lambda$};
  \filldraw (5.5,-0.1) node[align=center, below] {\small $0$};
  \filldraw (8,-0.1) node[align=center, below] {\small $4+\lambda$};
  \filldraw (11,-0.1) node[align=center, below] {\small $16+2\lambda$};

  \filldraw ([xshift=-2pt,yshift=-2pt]0,0) rectangle ++(4pt,4pt);
  \filldraw ([xshift=-2pt,yshift=-2pt]3,0) rectangle ++(4pt,4pt);
  \filldraw ([xshift=-2pt,yshift=-2pt]5.5,0) rectangle ++(4pt,4pt);
  \filldraw ([xshift=-2pt,yshift=-2pt]8,0) rectangle ++(4pt,4pt);
  \filldraw ([xshift=-2pt,yshift=-2pt]11,0) rectangle ++(4pt,4pt);

  \node at (0,0) (zero){};
  \node at (5.5,0) (middle){};
  \node at (11,0) (last){};

  \draw[-stealth,thick] (3,0) to[bend left=20] (5.43,0.01);
  \draw[-stealth,thick] (8,0) to[bend right=20] (5.57,0.01);
  \path[-stealth,thick] (zero) edge[out=45,in=135,looseness=10]  (zero);
  \path[-stealth,thick] (middle) edge[out=45,in=135,looseness=10]  (middle);
  \path[-stealth,thick] (last) edge[out=135,in=45,looseness=10]  (last);
\end{tikzpicture}
\caption{ \ }
\end{subfigure}
\begin{subfigure}{\textwidth}
\centering
\begin{tikzpicture}[xscale=1, yscale=1]
  \filldraw (0,0) -- (3,0) -- (5.5,0) -- (8,0) -- (11,0);

  \filldraw (0,-0.1) node[align=center, below] {\small $-16-2\lambda$};
  \filldraw (3,-0.1) node[align=center, below] {\small $-4-\lambda$};
  \filldraw (5.5,-0.1) node[align=center, below] {\small $0$};
  \filldraw (8,-0.1) node[align=center, below] {\small $4+\lambda$};
  \filldraw (11,-0.1) node[align=center, below] {\small $16+2\lambda$};

  \filldraw ([xshift=-2pt,yshift=-2pt]0,0) rectangle ++(4pt,4pt);
  \filldraw ([xshift=-2pt,yshift=-2pt]3,0) rectangle ++(4pt,4pt);
  \filldraw ([xshift=-2pt,yshift=-2pt]5.5,0) rectangle ++(4pt,4pt);
  \filldraw ([xshift=-2pt,yshift=-2pt]8,0) rectangle ++(4pt,4pt);
  \filldraw ([xshift=-2pt,yshift=-2pt]11,0) rectangle ++(4pt,4pt);

  \filldraw (4.66,0) circle(1pt);
  \filldraw (4.66,-0.1) node[align=center, below] {\small $\frac{-4-\lambda}{3}$};
  \filldraw (6.33,0) circle(1pt);
  \filldraw (6.33,-0.1) node[align=center, below] {\small $\frac{4+\lambda}{3}$};

  \node at (0,0) (zero){};
  \node at (5.5,0) (middle){};
  \node at (11,0) (last){};

  \draw[-stealth,thick] (3,0) to[bend left=20] (4.66,0);
  \draw[-stealth,thick] (8,0) to[bend right=20] (6.33,0);
  \path[-stealth,thick] (zero) edge[out=45,in=135,looseness=10]  (zero);
  \path[-stealth,thick] (middle) edge[out=45,in=135,looseness=10]  (middle);
  \path[-stealth,thick] (last) edge[out=135,in=45,looseness=10]  (last);
\end{tikzpicture}
\caption{ \ }
\end{subfigure}
\caption{(a) The $k$-COF game considered in the proofs of Theorems~\ref{thm:pure-lower-k} and \ref{thm:mixed-lower-k}, for $k\geq 2$. (b) The pure Nash equilibrium opinion vector $\zz$ (see the proof of Theorem~\ref{thm:pure-lower-k}). (c) The optimal opinion vector $\tilde{\zz}$ for $k\geq 3$. (d) The optimal opinion vector $\tilde{\zz}$ for $k=2$. Observe that the optimal opinion vector changes at $k=2$ due to the neighborhood size.}
\label{fig:lower-bound-k}
\end{figure}

\begin{theorem}\label{thm:pure-lower-k}
The price of anarchy of $k$-COF games over pure Nash equilibria is at least $k+1$ for $k \geq 3$, and at least $18/5$ for $k=2$.
\end{theorem}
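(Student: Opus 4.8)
The plan is to analyze a single $(3k+3)$-player instance, the one shown in Figure~\ref{fig:lower-bound-k}(a): a block of $k+1$ players with belief $-16-2\lambda$, one player with belief $-4-\lambda$, a block of $k-1$ players with belief $0$, one player with belief $4+\lambda$, and a block of $k+1$ players with belief $16+2\lambda$, where $\lambda\in(0,1)$ will be driven to $0$ at the end. I would exhibit the candidate equilibrium $\zz$ of Figure~\ref{fig:lower-bound-k}(b): the two extreme blocks keep their beliefs as opinions, each belief-$0$ player keeps opinion $0$, the player of belief $-4-\lambda$ moves to $-8-\lambda$, and the player of belief $4+\lambda$ moves to $8+\lambda$. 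To see this is a pure Nash equilibrium, note first that a unilateral deviation of player $i$ cannot change $N_i(\zz,\sss)$ (the neighborhood depends only on the opinions of the \emph{other} players relative to $s_i$), so each player's best response is to sit at the midpoint of the smallest interval containing her belief and her neighbors' opinions; hence it suffices, by Lemma~\ref{lem:equilibrium}, to check that every player already does so. An extreme-block player has her $k$ same-opinion blockmates as neighbors, a singleton interval, and cost $0$; a belief-$0$ player has the other $k-2$ belief-$0$ players together with the two players now at $\pm(8+\lambda)$ as neighbors, so her interval is $[-8-\lambda,\,8+\lambda]$ with midpoint $0$; and the player of belief $-4-\lambda$ has the $k-1$ players at opinion $0$ plus one extreme-left player as neighbors (the relevant distances being $4+\lambda<12+\lambda<12+2\lambda$), so her interval is $[-16-2\lambda,\,0]$ with midpoint $-8-\lambda$ (symmetrically for belief $4+\lambda$). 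Because for $\lambda>0$ every pair of distances that determines these neighborhoods is strictly ordered, the description is insensitive to how ties in the neighborhood definition are broken. Summing costs gives $\SC(\zz,\sss)=(k+1)(8+\lambda)$, as the $k-1$ belief-$0$ players and the two side players each pay $8+\lambda$ while the extreme blocks pay nothing.

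Next I would upper-bound the optimal social cost. For $k\geq3$ take $\tilde\zz$ as in Figure~\ref{fig:lower-bound-k}(c): the extreme blocks keep their beliefs, while all of the $k-1$ belief-$0$ players and the two players of belief $\pm(4+\lambda)$ sit at opinion $0$. Then the extreme blocks and the belief-$0$ players have cost $0$, and each of the two side players has a neighborhood of $k$ players all at opinion $0$, hence cost $4+\lambda$; so $\SC(\zz^*(\sss),\sss)\leq\SC(\tilde\zz,\sss)=8+2\lambda$. Therefore $\PoA(\sss)\geq\frac{(k+1)(8+\lambda)}{8+2\lambda}$, which tends to $k+1$ as $\lambda\to0^+$.

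For $k=2$ this $\tilde\zz$ is no longer close to optimal: with only one belief-$0$ player its two neighbors must be the two side players, and pulling them all the way to $0$ is wasteful. Instead I would use $\tilde\zz$ as in Figure~\ref{fig:lower-bound-k}(d), with the belief-$0$ player at opinion $0$ and the players of belief $\pm(4+\lambda)$ at opinions $\pm\frac{4+\lambda}{3}$. The neighborhoods are again straightforward to read off (for the player of belief $-4-\lambda$ the two nearest opinions are those at $0$ and $\frac{4+\lambda}{3}$, since $4+\lambda<\frac{16+4\lambda}{3}<12+\lambda$), which gives the belief-$0$ player cost $\frac{4+\lambda}{3}$ and each side player cost $\frac{2(4+\lambda)}{3}=\frac{8+2\lambda}{3}$; hence $\SC(\zz^*(\sss),\sss)\leq\SC(\tilde\zz,\sss)=\frac{20+5\lambda}{3}$. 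Since the equilibrium $\zz$ above has $\SC(\zz,\sss)=3(8+\lambda)=24+3\lambda$ when $k=2$, we get $\PoA(\sss)\geq\frac{3(24+3\lambda)}{20+5\lambda}$, which tends to $\frac{18}{5}$ as $\lambda\to0^+$.

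The arithmetic for the player costs under $\zz$ and $\tilde\zz$ is routine; the one step that needs genuine care is confirming that $\zz$ is a pure Nash equilibrium \emph{regardless of how neighborhood ties are resolved}, which is exactly why the instance carries the perturbation $\lambda$ that strictly separates all the relevant distances. (A short case analysis in the spirit of Lemmas~\ref{lem:mon}--\ref{lem:betw} in fact shows $\zz$ is the unique equilibrium of the instance, but for a price-of-anarchy lower bound exhibiting one expensive equilibrium is enough.)
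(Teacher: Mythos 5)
Your proposal is correct and follows essentially the same route as the paper: the identical $(3k+3)$-player instance, the same expensive equilibrium with social cost $(k+1)(8+\lambda)$, and the same two comparison vectors (all middle players at $0$ for $k\geq 3$, and the $\pm\frac{4+\lambda}{3}$ variant for $k=2$), yielding the limits $k+1$ and $18/5$ as $\lambda\to 0^+$. The only difference is presentational: you make explicit the (correct) observation that a unilateral deviation cannot change a player's own neighborhood, so the equilibrium check reduces to the midpoint condition of Lemma~\ref{lem:equilibrium}.
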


\begin{proof}
Let $\lambda \in (0,1)$ and consider a $k$-COF game with $3k+3$ players, for $k\geq 2$, that are partitioned into the following five sets. The first set $L$ consists of $k+1$ players with $s_i = -16-2\lambda$ for any $i\in L$, the second set consists of a single player $\ell$ with $s_{\ell} = -4-\lambda$, the third set $M$ has $k-1$ players with $s_i = 0$ for any $i\in M$, the fourth set is a single player $r$ with $s_r = 4+\lambda$, and the last set $R$ consists of $k+1$ players with $s_i= 16+2\lambda$ for any $i\in R$. This instance is depicted in Figure~\ref{fig:lower-bound-k}a.

Let $\zz$ be the following opinion vector: $z_i=-16-2\lambda$ for any $i \in R$, $z_\ell = -8-\lambda$, $z_i = 0$ for any $i \in M$, $z_r = 8+\lambda$, and $z_i = 16+2\lambda$ for any $i \in R$; see Figure~\ref{fig:lower-bound-k}b. It is not hard to verify that this opinion vector is a pure Nash equilibrium with social cost $\SC(\zz,\sss) = (8+\lambda)(k+1)$. First, observe that all players in sets $L$ and $R$ have zero cost, and, hence, have no incentive to deviate to another opinion.
Furthermore, no player $i \in M$ has an incentive to deviate either since $z_i$ lies in the middle of the interval $[-8-\lambda,8+\lambda]$ which is defined by the opinions of players $\ell$ and $r$ who, together with the remaining players of $M$, constitute the neighborhood $N_i(\zz,\sss)$ of player $i$. The cost experienced by such a player $i$ is $8+\lambda$.
Finally, the neighborhood $N_\ell(\zz,\sss)$ of player $\ell$ consists of all players in $M$ (who have opinions that are closest to $s_\ell$) and some player $i \in L$; note that player $r$ does not belong to $N_\ell(\zz,\sss)$ since $z_r - s_\ell = 12+2\lambda > 12-\lambda = s_\ell-z_i$ for all $i \in L$. Hence, player $\ell$ has no incentive to deviate to another opinion since $z_\ell$ lies in the middle of the interval $[-16-2\lambda,0]$ and she experiences cost equal to $8+\lambda$. Due to symmetry, player $r$ does not have incentive to deviate as well. Hence, $\zz$ is indeed a pure Nash equilibrium with $\SC(\zz,\sss) = (8+\lambda)(k+1)$.

We now present an opinion vector $\tilde{\zz}$ with social cost $\SC(\tilde{\zz},\sss) = 8+2\lambda$ for $k\geq 3$ and $\cost(\tilde{\zz},\sss) = \frac{5}{3}(4+\lambda)$ for $k=2$. In particular, for $k\geq 3$, $\tilde{\zz}$ is defined as follows: $\tilde{z}_i = -16-2\lambda$ for any $i \in L$, $\tilde{z}_\ell = \tilde{z}_i = \tilde{z}_r = 0$ for any $i \in M$, and $\tilde{z}_i = 16+2\lambda$ for any $i \in R$; see Figure~\ref{fig:lower-bound-k}c. Observe that all players in $L$, $M$, and $R$ have zero cost, while players $\ell$ and $r$ have cost equal to $4+\lambda$ each. For $k=2$, $\tilde{\zz}$ is defined as follows: $\tilde{z}_i = -16-2\lambda$ for any $i \in L$, $\tilde{z}_\ell = -\frac{1}{3}(4+\lambda)$, $\tilde{z}_i = 0$ for any $i \in M$, $\tilde{z}_r = \frac{1}{3}(4+\lambda)$, and $\tilde{z}_i = 16+2\lambda$ for any $i \in R$; see Figure~\ref{fig:lower-bound-k}d. Again, all players in $L$ and $R$ have zero cost. However, players $\ell$ and $r$ now each have cost $\frac{2}{3}(4+\lambda)$ and the unique player in $M$ has cost $\frac{1}{3}(4+\lambda)$.

Clearly, since $\SC(\tilde{\zz},\sss)$ is an upper bound on the optimal social cost, we conclude that the price of anarchy over pure Nash equilibria is at least $\frac{(8+\lambda)(k+1)}{8+2\lambda}$ for $k\geq 3$ and $\frac{9(8+\lambda)}{5(4+\lambda)}$ for $k=2$, and the theorem follows by setting $\lambda$ arbitrarily close to $0$.
\end{proof}

We now consider the case of mixed Nash equilibria; we remark that, in this case, our lower bounds for $k\geq 2$ are smaller than the corresponding upper bounds for pure Nash equilibria.

\begin{theorem}\label{thm:mixed-lower-k}
The price of anarchy of $k$-COF games over mixed Nash equilibria is at least $k+2$ for $k \geq 3$, and at least $24/5$ for $k=2$.
\end{theorem}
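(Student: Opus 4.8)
The plan is to reuse the $k$-COF instance from the proof of Theorem~\ref{thm:pure-lower-k} (Figure~\ref{fig:lower-bound-k}a), with its $3k+3$ players partitioned into the sets $L$, $\{\ell\}$, $M$, $\{r\}$, $R$, and to exhibit a mixed Nash equilibrium $\zz$ whose expected social cost exceeds that of the pure equilibrium used there (in which $\ell$, $r$, and each of the $k-1$ players of $M$ have cost $8+\lambda$). In $\zz$, every player of $L\cup M\cup R$ keeps the opinion it has in that pure equilibrium (namely $-16-2\lambda$ for $L$, $0$ for $M$, $16+2\lambda$ for $R$), player $\ell$ chooses an opinion equiprobably from $\{-8-\lambda,-8+\lambda\}$, and, symmetrically, player $r$ chooses one equiprobably from $\{8-\lambda,8+\lambda\}$, the two choices being independent. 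The point of the randomization is the trick already used in the proof of Theorem~\ref{thm:mixed-lower-1}: the realization of $z_r$ decides which player is the $k$-th neighbor of $\ell$, the remaining $k-1$ neighbors always being the players of $M$. Since $s_\ell=-4-\lambda$ lies at distance $12+\lambda$ from the common opinion of $L$, when $z_r=8-\lambda$ the distance $|z_r-s_\ell|=12$ is strictly smaller, so $\ell$'s $k$-th neighbor is $r$, whereas when $z_r=8+\lambda$ it equals $12+2\lambda$, so the $k$-th neighbor is a player of $L$; by symmetry the same holds for $r$, and all these identifications are independent of how ties are broken.

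The steps, in order, are: (i) fix $\zz$ and, for each of the four realizations of $(z_\ell,z_r)$, write down the neighborhood and hence the cost of every player; in particular each player of $L$ and of $R$ has exactly the other $k$ players of its set (all at the same point) as neighbors, so its cost is $0$ and it has no profitable deviation, while every player of $M$ has the remaining players of $M$ together with $\ell$ and $r$ as its neighborhood in all four realizations. (ii) Verify that $\ell$ (and, by symmetry, $r$) is best-responding: for a deterministic deviation $y$ of $\ell$ its neighborhood is determined by $\zz_{-\ell}$ and does not depend on $y$, so $\EE_{\zz_{-\ell}}[\cost_\ell((y,\zz_{-\ell}),\sss)]=\tfrac12\cost_\ell(y\mid z_r=8-\lambda)+\tfrac12\cost_\ell(y\mid z_r=8+\lambda)$; dropping the absolute value on the neighbor term of each conditional cost gives $\cost_\ell(y\mid z_r=8-\lambda)\geq (8-\lambda)-y$ and $\cost_\ell(y\mid z_r=8+\lambda)\geq y+16+2\lambda$, hence $\EE_{\zz_{-\ell}}[\cost_\ell((y,\zz_{-\ell}),\sss)]\geq 12+\tfrac{\lambda}{2}$ for every $y$; a direct evaluation then shows that both opinions $-8-\lambda$ and $-8+\lambda$ attain exactly $12+\tfrac{\lambda}{2}$, so they are best responses and $\EE[\cost_\ell(\zz,\sss)]=\EE[\cost_r(\zz,\sss)]=12+\tfrac{\lambda}{2}$. (iii) Verify that each player of $M$ is best-responding at $0$: its cost in a realization equals $\max\{|z_\ell|,|z_r|\}$, and one shows $0$ is optimal by tracking, for a small perturbation of the opinion to $\pm\delta$, which of the three terms $|y|$, $|y-z_\ell|$, $|y-z_r|$ is the maximum in each of the four realizations and checking that the expected cost strictly increases, so $0$ is a local --- hence, since larger deviations are clearly worse, a global --- minimizer, giving $\EE[\cost_i(\zz,\sss)]=8+\tfrac{\lambda}{2}$ for $i\in M$. (iv) Sum up: $\EE[\SC(\zz,\sss)]=2\big(12+\tfrac{\lambda}{2}\big)+(k-1)\big(8+\tfrac{\lambda}{2}\big)=8k+16+\tfrac{k+1}{2}\lambda$. (v) Divide by the upper bound on the optimal social cost recorded in the proof of Theorem~\ref{thm:pure-lower-k}, namely $\SC(\tilde{\zz},\sss)=8+2\lambda$ for $k\geq 3$ and $\SC(\tilde{\zz},\sss)=\tfrac{5}{3}(4+\lambda)$ for $k=2$; letting $\lambda\to 0$ yields $\mPoA\geq k+2$ for $k\geq 3$ and $\mPoA\geq 24/5$ for $k=2$.

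The step I expect to be the main obstacle is (iii). Unlike for $\ell$ and $r$, the generic bound $\max\{|a|,|b|,|c|\}\geq a,b,c$ is not tight enough to certify that $0$ is optimal for a player of $M$, so one genuinely has to carry out the perturbation analysis and observe that, across the four independent realizations, the contributions of the form $+\delta$ outnumber the single $-\delta$. A secondary delicate point is that the offsets $\pm\lambda$ around $\pm 8$ in the supports of $\ell$ and $r$ are forced, not free: they are exactly what makes \emph{both} points of $\ell$'s support attain the minimum expected deviation cost (so that $\zz$ is a genuine mixed equilibrium and not just one pure profile), and one must check that every neighborhood identification survives each realization --- which it does precisely because $r$ is strictly closer to $s_\ell$ than the players of $L$ exactly when $z_r=8-\lambda$.
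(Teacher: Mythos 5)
Your proposal is correct and follows essentially the same route as the paper: the same $3k+3$-player instance, deterministic opinions for $L\cup M\cup R$, a two-point uniform randomization for $\ell$ and $r$ calibrated so that the realization of one determines the $k$-th neighbor of the other, and the same limiting computation $\frac{8k+16+O(\lambda)}{8+2\lambda}\to k+2$ (resp.\ $24/5$ for $k=2$). The only difference is cosmetic: the paper uses supports $\{-8-\lambda,-8+3\lambda\}$ and $\{8-3\lambda,8+\lambda\}$ (which makes the expected cost of each $M$-player exactly $8$), whereas your symmetric $\pm\lambda$ offsets give $8+\lambda/2$; both choices yield the same bound as $\lambda\to 0$, and your convexity-based check that $0$ is a best response for $M$-players goes through.
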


\begin{proof}
As in the proof of Theorem~\ref{thm:pure-lower-k}, let $\lambda \in (0,1)$ and consider the $k$-COF game depicted in Figure~\ref{fig:lower-bound-k}a with $3k+3$ players that form $5$ sets. Again, the first set $L$ consists of $k+1$ players where $s_i = -16-2\lambda$ for all $i\in L$, the second set consists of a single player $\ell$ with $s_{\ell} = -4-\lambda$, the third set $M$ has $k-1$ players with $s_i = 0$ for all $i\in M$, the fourth set is a single player $r$ with $s_r = 4+\lambda$, and the last set $R$ consists of $k+1$ players with $s_i = 16+2\lambda$ for all $i\in R$.

Consider the following (randomized) opinion vector $\zz$: $z_i = s_i$ for every $i \in L\cup M\cup R$, while $z_\ell$  is chosen uniformly at random among $\{-8-\lambda, -8+3\lambda\}$ and $z_r$ is chosen uniformly at random among $\{8-3\lambda,8+\lambda\}$. We will show that the opinion vector $\zz$ is a mixed Nash equilibrium with $\EE[\SC(\zz, \sss)] = 8k+16-\lambda$.

First, observe that the players in sets $L$ and the $R$ constitute local neighborhoods, that is, $N_i(\zz,\sss) = L\setminus\{i\}$ for any player $i \in L$, and $N_i(\zz,\sss)=R\setminus\{i\}$ for any player $i \in R$. Hence, all these players have zero cost and no incentive to deviate.

Next, let us focus on a player $i \in M$. Clearly, the neighborhood of player $i$ consists of the remaining $k-2$ players in $M$ as well as players $\ell$ and $r$. The expected cost of player $i$ in $\zz$ is $\EE[\cost_i(\zz,\sss)] = \frac{3}{4}(8+\lambda)+\frac{1}{4}(8-3\lambda) = 8$ since at least one of players $\ell$ and $r$ is at distance $8+\lambda$ with probability $3/4$ and both of them are at distance $8-3\lambda$ with probability $1/4$. Hence, these $k-1$ players contribute $8(k-1)$ to the expected social cost of $\zz$. We now argue that if player $i \in M$ deviates to a deterministic opinion $y$, her expected cost does not decrease. Clearly, if $y\geq 3\lambda$, then this trivially holds as the expected cost of $i$ is at least $y-z_\ell$ which is at least $y+8-3\lambda$; the case where $y\leq -3\lambda$ is symmetric. Hence, it suffices to consider the case where $|y|< 3\lambda$. The expected cost of $i$ when deviating to $y$ is
\begin{align*}
&\EE_{\zz_{-i}}[\cost_i((y,\zz_{-i}),\sss)] \\
&= \frac{1}{4}\max\{8+\lambda-y,y+8+\lambda\}+\frac{1}{4}\max\{8+\lambda-y,y+8-3\lambda\}\\
&\quad + \frac{1}{4}\max\{8-3\lambda-y,y+8+\lambda\} +\frac{1}{4}\max\{8-3\lambda-y,y+8-3\lambda\}\\
&\geq \frac{1}{4}(8+\lambda-y)+\frac{1}{4}(8+\lambda-y)+\frac{1}{4}(y+8+\lambda)+\frac{1}{4}(y+8-3\lambda)\\
&= 8,
\end{align*}
where the inequality holds since $\max\{a,b\}\geq a$ for any $a$ and $b$.

Now, let us examine player $r$; the case of player $\ell$ is symmetric. Observe that the $k-1$ players in $M$ always belong to the neighborhood $N_r(\zz,\sss)$ of player $r$ and it remains to argue about the identity of the last player in $N_r(\zz,\sss)$. Whenever $z_\ell = -8+3\lambda$, then $\ell \in N_r(\zz,\sss)$, otherwise, if $z_\ell = -8-\lambda$, one of the players in set $R$ belongs to $N_r(\zz,\sss)$. The expected cost of player $r$ is $\EE[\cost_r(\zz,\sss)] = \frac{1}{4}(8+\lambda) + \frac{1}{4}(8+5\lambda)+\frac{1}{4}(16-2\lambda)+\frac{1}{4}(16 - 6\lambda)= 12-\lambda/2$, and, hence, players $\ell$ and $r$ contribute $24-\lambda$ to the expected social cost of $\zz$. It remains to show that player $r$ cannot decrease her expected cost by deviating to another opinion $y$. The expected cost of player $r$ when deviating to $y$ is
\begin{align*}
\EE_{\zz_{-r}}[\cost_r((y,\zz_{-r}),\sss)] &= \frac{1}{2}\max\{|16+2\lambda-y|,|y|\}+\frac{1}{2}\max\{|y+8-3\lambda|,|4+\lambda-y|\}\\
&\geq \frac{1}{2}(16+2\lambda-y)+\frac{1}{2}(y+8-3\lambda)\\
&= 12-\lambda/2,
\end{align*}
where the inequality holds since $\max\{|a|,|b|\}\geq a$ for any $a$ and $b$. Hence, we conclude that $\zz$ is a mixed Nash equilibrium with expected social cost $\EE[\SC(\zz,\sss)] = 8k+16-\lambda$.

As in the proof of Theorem~\ref{thm:pure-lower-k}, there exists an opinion vector $\tilde{\zz}$ with social cost $\SC(\tilde{\zz},\sss) = 8+2\lambda$ for $k\geq 3$ and $\SC(\tilde{\zz},\sss) = \frac{5}{3}(4+\lambda)$ for $k=2$. Since $\SC(\tilde{\zz},\sss)$ is an upper bound on the optimal social cost, we have that the price of anarchy over mixed equilibria is at least $\frac{8k+16-\lambda}{8+2\lambda}$ for $k\geq 3$ and $\frac{3(32-\lambda)}{5(4+\lambda)}$ for $k=2$, and the theorem follows, again by setting $\lambda$ arbitrarily close to $0$.
\end{proof}

\section{Open problems and extensions}\label{sec:discussion}
We have introduced the class of compromising opinion formation games by enriching coevolutionary opinion games with a cost function that urges players to ``meet halfway''. Our findings indicate that the quality of their equilibria grows linearly with the neighborhood size $k$. Still, there exists a gap between our lower and upper bounds for $k\geq 2$ and closing this gap is a challenging technical task. Furthermore, we know that mixed equilibria are strictly worse for $1$-COF games but we have been unable to prove upper bounds on their price of anarchy. Is their price of anarchy still linear?

Another natural question is about the complexity of pure Nash equilibria in $k$-COF games for $k\geq 2$. We conjecture that there exists a polynomial time algorithm for computing them, but finding such an algorithm remains elusive.
Similarly, what is the complexity of computing an optimal opinion vector (even for $k=1$)?

Finally, our modeling assumption that the number of neighbors is the same for all players is rather restrictive. Extending our results to the general case of different neighborhood size per player deserves investigation. One possible such generalization is to combine our approach to the Hegselmann-Krause model, i.e., each player's neighborhood consists solely of those players with opinions sufficiently close to her belief.

\vskip 0.2in
\bibliographystyle{plainnat}
\bibliography{compromise}

\end{document}